\newcommand{\la}{\langle}
\newcommand{\ra}{\rangle}
\newcommand{\supp}{\operatorname{supp}}
\newtheorem{theorem}{Theorem}
\newtheorem{definition}[theorem]{Definition}
\newtheorem{lemma}[theorem]{Lemma}
\newtheorem{corollary}[theorem]{Corollary}
\theoremstyle{remark}
\numberwithin{equation}{section}
\numberwithin{theorem}{section}
\numberwithin{table}{section}
\numberwithin{figure}{section}
\def\Xd{X^\dagger}
\def\Vd{V^\dagger}
\def\Yd{Y^\dagger}
\def\Wd{W^\dagger}
\def\Id{I^\dagger}
\def\dXd{\dot{X}^\dagger}
\def\dVd{\dot{V}^\dagger}
\def\hVd{\hat{V}^\dagger}
\def\hWd{\hat{W}^\dagger}
\title[Magnetic confinement approximates specular reflection]{Magnetic confinement at a boundary approximates specular reflection}
\author{Katherine Zhiyuan Zhang}
\address{Courant Institute of Mathematical Sciences, New York University}
\begin{document}

\maketitle

\begin{abstract}
We conjecture that for a plasma in a spatial domain with a boundary, the specular reflection effect of the boundary can be approximated by a large magnetic confinement field in the near-boundary region. In this paper, we verify this conjecture for the 1.5D relativistic Vlasov-Maxwell system (RVM) on a bounded domain $\Omega = (0, 1)$ with an external confining magnetic field.
\end{abstract}

\section{Introduction} \label{Intro}

It has been one of the major goals of fusion energy research to confine plasmas (charged fluids). Scientists are particularly interested in designing stable devices to induce confinement. An external \emph{confining} magnetic field (a "magnetic mirror/shield") is one of the effective tools for this goal. The mathematical justification of this confining mechanism for plasma models like the Vlasov-Poisson system (VP) or the relativistic Vlasov-Maxwell (RVM) system has been carried out in various literatures under some different settings, for example, see \cite{CCM1}, \cite{CCM2}, \cite{CCM3}, \cite{NS3}, etc.. In these literatures, it is shown that the external confining magnetic field has a "reflective" effect on charged particles, which resembles the role of a specular reflecting wall. Moreover, if the confining field is strong enough near the spatial boundary (namely, blows up to $\infty$ at the boundary), then the particles do not touch the spatial boundary in any finite time interval if their initial positions are away from it (see, for example, \cite{CCM1}, \cite{CCM2}, \cite{CCM3}, \cite{NS3}). Even if the confining field is finite, as long as it is strong enough near the spatial boundary, then it still prevents the particles from touching the spatial boundary in some finite (but not necessarily small) time interval if their initial positions are away from it.


On the other hand, when considering kinetic models for plasmas on a domain $\Omega$ with a boundary $\partial \Omega$, one of the common choices for boundary conditions on the particle density distribution function $f$ is the \emph{specular boundary condition}, which says that each particle hitting the $\partial \Omega$ gets reflected in a natural way without losing its energy:
\begin{equation*}
f (t,x,v) = f  (t,x,v-2(v \cdot e_n (x))e_n(x) ), e_n(x) \cdot v <0,  \  \forall x \in \partial \Omega , 
\end{equation*}
where $e_n (x)$ is the outward normal unit vector at $x \in \partial \Omega$. The well-posedness and stability of RVM or VP has been studied in quite a number of literatures, for example, \cite{GlasseyS1}, \cite{GlasseySchaeffer1}, \cite{G1}, \cite{G2}, \cite{HSchaeffer1}, \cite{H1}, \cite{HV1}, \cite{HV2}. In particular, in \cite{G1}, \cite{G2}, \cite{HSchaeffer1}, \cite{H1}, \cite{HV1}, \cite{HV2} the spatial boundary is taken into account and the specular boundary condition is considered.
 

We conjecture that an external magnetic confinement which is sufficiently large near the boundary provides a good approximation to the specular boundary condition for a charged fluid. In this paper, we initiate the mathematical verification of this conjecture by studying a lower-dimensional RVM model for the sake of simplicity. Upon verification of this conjecture, we are able to justify the significance of the specular boundary condition in the study of kinetic models for plasmas, since it is a effective approximation for the scenario when a magnetic mirror/shield is applied to confine a plasma in a bounded region.

We consider the relativistic Vlasov-Maxwell (RVM) system in a bounded interval $\Omega = (0, 1)$ with the time $t \geq 0$, the spatial variable $x \in \Omega$, and the particle momentum $v \in \mathbb{R}^2$, as well as an external magnetic field $B_{ext, N}$ given by
\begin{equation}  
\begin{split}
& B_{ext, N} (x) := N b(Nx) \text{ for } x \text{ near } 0 , \\
& B_{ext, N} (x) := -N b(N(1-x)) \text{ for } x \text{ near } 1 . \\
\end{split}
\end{equation}
where $b(x)$ is some $C^1$, piecewise $C^3$, compactly supported function on $(0, +\infty)$ that blows up to $+\infty$ or $-\infty$ when $x \rightarrow 0$ (see Section \ref{SectionSetup} for details). This is an 1.5D model, which is the model of lowest dimension that includes magnetic effects. We will take $N \rightarrow +\infty$ and study the effect of the external magnetic field $B_{ext, N}$ in the limit regime.

 



In the main part of the paper, we consider a plasma with a single species of particles (ion) with a non-negative distribution function $f(t, x, v)$, where $t \geq 0$, $x \in \Omega$, $v \in \mathbb{R}^2$. The Vlasov equation is
\begin{equation}  \label{VlasovBext}
\partial_t f + \hat{v}_1 \partial_x f + ( E_1 + \hat{v}_2 B + \hat{v}_2 B_{ext, N} ) \partial_{v_1} f + ( E_2 - \hat{v}_1 B - \hat{v}_1 B_{ext, N} ) \partial_{v_2} f =0  \ , 
\end{equation} 
The electromagnetic field $E= E(t, x)$, $B= B(t, x)$ satisfies the 1.5D Maxwell system
\begin{equation}  \label{MaxwellBext}
\begin{split}
& \partial_t E_1=  - j_1 \ , \ \partial_x E_1 = \rho \ , \\
& \partial_t E_2 = - \partial_x B - j_2 \ , \\
& \partial_t B = - \partial_x E_2 \ . \\ 
\end{split}
\end{equation} 
Here $\hat{v} := v/ \la v \ra$, where $ \la v \ra :=  \sqrt{1 + |v|^2}$. The charge density $\rho (t, x) := \int_{\mathbb{R}^2} f(t, x, v) dv$, and the current density $\textbf{j} (t, x) :=  \int_{\mathbb{R}^2} \hat{v} f(t, x , v) dv$. We have normalized the speed of light as well as the unit mass and charge of the particles to be $1$, since these quantities play no role in our qualitative analysis.

For the system \eqref{VlasovBext} -- \eqref{MaxwellBext}, we put down some initial data for $f$, which is supported away from $\partial \Omega$, and some appropriate initial-boundary condition (which enjoys some smoothness) for $E$ and $B$, see \eqref{boundarycondBext} in Section \ref{SectionSetup}.  


\cite{NS3} gives the global well-posedness and the $C^1$ regularity of the (strong) solution for the system \eqref{VlasovBext} and \eqref{MaxwellBext} with the initial-boundary conditions \eqref{boundarycondBext}. Moreover, the particles will not hit the boundary, due to the confining property of $B_{ext,N}$ (See Lemma 3.1 in \cite{NS3} and Lemma \ref{NS3-lm3.1} below in this paper). Therefore no boundary condition on $f$ is needed for \eqref{VlasovBext}. In Lemma \ref{NS3-lm3.1-finite}, we also discuss the case when $B_{ext, N}$ is a {\it{finite}} external magnetic confining field and prove that if $B_{ext,N}$ is large enough (depending on the initial-boundary data and the time interval $[0, T]$), then the particles will stay away from the boundary on $[0, T]$.



Our goal is to investigate the limiting behavior of the solution for \eqref{VlasovBext} -- \eqref{MaxwellBext} as $N \rightarrow + \infty$. 
To this end, we consider the 1.5D RVM on $\Omega$ with no external magnetic field. 
The Vlasov equation and the Maxwell system are
\begin{equation}  \label{VlasovBextspecular}
\partial_t f + \hat{v}_1 \partial_x f + ( E_1 + \hat{v}_2 B ) \partial_{v_1} f + ( E_2 - \hat{v}_1 B   ) \partial_{v_2} f =0  \ , 
\end{equation} 
\begin{equation}  \label{MaxwellBextspecular}
\begin{split}
& \partial_t E_1=  - j_1 \ , \ \partial_x E_1 = \rho \ , \\
& \partial_t E_2 = - \partial_x B - j_2 \ , \\
& \partial_t B = - \partial_x E_2 \ . \\ 
\end{split}
\end{equation}   
with the initial-boundary conditions \eqref{boundarycondBext} together with the specular boundary condition in the 1.5D model on the domain $\Omega = (0, 1)$:
\begin{equation}  \label{specularBC}
f (t, x, v_1, v_2) = f(t, x, -v_1, v_2 ), \ \text{for} \ x = 0, \ 1 . 
\end{equation}
Notice that without the external field the particles may hit $\partial \Omega$ so the specification of this boundary condition is necessary.  

We wish to prove that as $N \rightarrow + \infty$, the solutions for the system \eqref{VlasovBext} -- \eqref{MaxwellBext} converge to the ones for the system \eqref{VlasovBextspecular} -- \eqref{MaxwellBextspecular} with the specular boundary condition \eqref{specularBC}. That is to say, as $N \rightarrow + \infty$, the external confining magnetic field well approximates a perfectly reflecting boundary wall. This is verified in a weak sense as stated in our main result as follows:

\begin{theorem} \label{mtheorem}
For each $N$ and any $T>0$, we consider the strong $C^1$ solution $(f^N, E^N_1, E^N_2, B^N)$ on $[0, T]$ to \eqref{VlasovBext}, \eqref{MaxwellBext}, with the initial-boundary condition \eqref{boundarycondBext}. There exists a subsequence of $(f^N , E^N_1 , E^N_2 , B^N)$, such that $f^N \rightharpoonup f \ weakly^*$ in $L^\infty ([0, T] \times \Omega \times \mathbb{R}^2)$, $(E^N_1, E^N_2, B^N) \rightarrow (E_1, E_2, B)$ strongly in $ C^0  ([0, T] \times \Omega ) $. The limit $(f, E_1, E_2, B)$ is a weak solution of \eqref{VlasovBextspecular}, \eqref{MaxwellBextspecular}, with exactly the same initial and boundary conditions \eqref{boundarycondBext} and the specular boundary condition \eqref{specularBC} on $[0, T]$ (in the sense of Definition \ref{D:weak}, see Section \ref{SectionSetup}). 
\end{theorem}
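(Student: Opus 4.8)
The plan is to prove Theorem~\ref{mtheorem} by (i) extracting $N$-uniform a priori estimates, (ii) passing to weak/strong limits, and (iii) a boundary-layer analysis near $\partial\Omega$ that identifies the limit of the external-force term with specular reflection. \emph{Step 1 ($N$-uniform estimates).} The magnetic force, internal or external, does no work, so along any characteristic of \eqref{VlasovBext} one has $\frac{d}{dt}\la v\ra=\hat v_1E_1+\hat v_2E_2$, with $B$ and $B_{ext,N}$ absent. Hence conservation of $\mathcal{E}(f^N)$ gives, uniformly in $N$: $\|f^N(t)\|_{L^\infty}=\|f^N_0\|_{L^\infty}$ (transport structure, the coefficient field in \eqref{VlasovBext} being divergence-free in $(x,v)$), a kinetic-energy bound, and $L^2$-bounds on $E^N_1,E^N_2,B^N$. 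Moreover the momentum support $P^N(T)\defeq\sup\{|v|:f^N\neq0\text{ on }[0,T]\}$ obeys exactly the differential inequality of the pure $1.5$D problem — $B_{ext,N}$ having dropped out of the $\la v\ra$-evolution — so the global-existence estimates of \cite{NS3} apply with constants depending only on the fixed initial–boundary data, giving $P^N(T)\le C(T)$; the extra $B_{ext,N}$-contributions to the field representation are supported in an $O(1/N)$-thin layer and carry $O(1)$ mass in $L^1_{t,x}$, hence do not affect this. Consequently $\rho^N,\mathbf{j}^N$ are bounded in $L^\infty$, the triple $(E^N_1,E^N_2,B^N)$ is bounded and equicontinuous on $[0,T]\times\overline{\Omega}$ (via the $1.5$D Maxwell representation), and — from the quasi-invariant $v_2+\int^{x}B_{ext,N}$ used in Lemma~\ref{NS3-lm3.1} — $\supp f^N\subset\{x\in[\rho_N,1-\rho_N]\}$ with $\rho_N=O(1/N)\to0$.

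\emph{Step 2 (compactness; the routine terms).} Along a subsequence, Arzel\`a--Ascoli gives $(E^N_1,E^N_2,B^N)\to(E_1,E_2,B)$ in $C^0([0,T]\times\overline{\Omega})$ and $f^N\rightharpoonup f$ weakly-$*$ in $L^\infty$; since $\rho^N\rightharpoonup\rho$, $\mathbf{j}^N\rightharpoonup\mathbf{j}$ and the Maxwell system is linear, $(E_1,E_2,B)$ solves \eqref{MaxwellBextspecular} and inherits \eqref{boundarycondBext}. Testing \eqref{VlasovBext} against $\varphi\in C^1_c([0,T)\times\overline{\Omega}\times\mathbb R^2)$ satisfying $\varphi(t,0,v_1,v_2)=\varphi(t,0,-v_1,v_2)$ and $\varphi(t,1,v_1,v_2)=\varphi(t,1,-v_1,v_2)$ — the weak encoding of \eqref{specularBC} — and integrating by parts (no $x$-boundary contribution, each $f^N$ vanishing near $\partial\Omega$; the velocity coefficient is $v$-divergence-free), every term except
\[
R_N\defeq\int_0^T\!\!\int_\Omega\!\!\int_{\mathbb R^2}f^N\,B_{ext,N}(x)\big(\hat v_2\,\partial_{v_1}\varphi-\hat v_1\,\partial_{v_2}\varphi\big)\,dv\,dx\,dt
\]
converges (weakly-$*$ against strongly) to the corresponding term built from $(f,E_1,E_2,B)$ and $f_0$. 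So it remains to prove $R_N\to0$; granting this, the limit satisfies the weak formulation of Definition~\ref{D:weak}, and testing additionally against $\varphi$ supported in the open interior recovers \eqref{VlasovBextspecular} in $\mathcal D'$.

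\emph{Step 3 (boundary layer; specular reflection).} Near $x=0$ put $y=Nx$, $g^N(t,y,v)\defeq f^N(t,y/N,v)$. By Step~1, $g^N$ is bounded in $L^\infty$ on a \emph{fixed} compact set $[0,T]\times[\rho_*,Y]\times\{|v|\le C(T)\}$ ($\rho_*$ the confinement radius in rescaled variables, $Y$ slightly beyond the support of $b$), so $g^N\rightharpoonup G$ weakly-$*$ along a subsequence. Dividing the rescaled Vlasov equation by $N$, every term with a factor $1/N$ vanishes in $\mathcal D'$ (the fields converge locally uniformly as $y/N\to0$, $g^N$ is $L^\infty$-bounded), so $G$ solves, for a.e.\ $t$, the stationary transport equation $\hat v_1\partial_yG+b(y)\big(\hat v_2\partial_{v_1}G-\hat v_1\partial_{v_2}G\big)=0$; in polar variables $v=|v|(\cos\theta,\sin\theta)$ this reads $|v|\cos\theta\,\partial_yG=b(y)\,\partial_\theta G$, with invariants $|v|$ and $v_2+\Psi(y)$, $\Psi'=b$. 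Its characteristic field is invariant under the reflection $\theta\mapsto\pi-\theta$ composed with time reversal; since $\Psi\to\pm\infty$ at $y=0$ while $|v|$ is fixed, a characteristic entering the layer at the outer edge moving inward ($\cos\theta<0$) is confined and returns to the outer edge with $\theta\mapsto\pi-\theta$ — i.e.\ $v_1\mapsto-v_1$, $v_2$ and $|v|$ unchanged, precisely specular reflection. Hence each characteristic curve is $\{\theta\mapsto\pi-\theta\}$-symmetric, $G$ (constant along characteristics) satisfies $G(t,y,v_1,v_2)=G(t,y,-v_1,v_2)$, and so $v\mapsto\int b(y)\,G(t,y,v)\,dy$ is even in $v_1$. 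For a boundary-symmetric $\varphi$, the factor $\big(\hat v_2\partial_{v_1}\varphi-\hat v_1\partial_{v_2}\varphi\big)(t,0,\cdot)$ is \emph{odd} in $v_1$; changing variables shows $\lim_N R_N=\int_0^T\!\int_0^{Y}\!\int_{\mathbb R^2}G(t,y,v)\,b(y)\big(\hat v_2\partial_{v_1}\varphi-\hat v_1\partial_{v_2}\varphi\big)(t,0,v)\,dv\,dy\,dt$ plus the analogous contribution from $x=1$, and by the parity mismatch this is $0$. The remaining limits — initial datum for $f$ through the $\int f_0\varphi(0)$ term, Maxwell, boundary data — are routine.

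\emph{Main obstacle.} The two delicate points are the uniformity in Step~1 and the blow-up identification in Step~3. The external field is unbounded at $\partial\Omega$, so one must genuinely exploit that it does no work and that its imprint on $(E^N,B^N)$ concentrates on a layer of width $O(1/N)$ with bounded $L^1_{t,x}$ mass, to keep all Glassey--Schaeffer-type constants $N$-independent. In Step~3 one must carefully justify that the $1/N$ terms drop in $\mathcal D'$, that the reduced vector field (Lipschitz away from $y=0$, with $\supp G$ confined away from $y=0$) admits a characteristic flow along which the $L^\infty$ solution $G$ is genuinely constant, and that the reflection symmetry of the flow — including on closed (trapped) orbits inside the layer — yields the claimed $v_1$-parity of $G$.
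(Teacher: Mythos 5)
Your Steps 1--2 coincide with the paper's route: the momentum bound and the confinement of $\supp_x f^N$ at distance $\sim y_0/N$ are Lemma \ref{NS3-lm3.1} and Corollary \ref{NS3-lm3.1-cor}, the $N$-uniform field bounds and equicontinuity are \eqref{NS3-Section2} plus Lemma \ref{NS3-lm4.1} (note one misattribution: the ``$O(1)$ layer mass'' of $B_{ext,N}$ over $\supp_x f^N$, i.e.\ $\int|B_{ext,N}|\,dx\le\Psi(y_0)$, is not needed for the momentum bound, which uses only that magnetic forces do no work; it is exactly what makes the Glassey--Strauss derivative estimate $N$-independent), and the reduction of everything to the single term $R_N$ is the paper's reduction to Lemma \ref{(extra)-converge}. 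Where you genuinely diverge is the proof that $R_N\to0$. The paper argues at fixed $N$ and quantitatively: it builds the reflection time $t^*$ for the model flow (Lemma \ref{t1t2mapping-model}), compares real and model characteristics to get $|\tilde x-x|\lesssim N^{-2}$, $|\tilde v-v|\lesssim N^{-1}$ and a Jacobian within $O(1/N)$ of $1$ (Lemma \ref{t1t2mapping-real-error}), and then symmetrizes the integral by the change of variables $(t,x,v_1,v_2)\mapsto(t^*,\tilde x,-\tilde v_1,\tilde v_2)$, so that the odd-in-$v_1$ test factor cancels the leading term and $|\Xi_N|\lesssim 1/N$ with an explicit rate. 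You instead rescale the layer, extract a weak-$*$ limit $G$ of $g^N(t,y,v)=f^N(t,y/N,v)$, show $G$ solves the stationary transport equation $\hat v_1\partial_yG+b(y)(\hat v_2\partial_{v_1}G-\hat v_1\partial_{v_2}G)=0$ for a.e.\ $t$, and conclude evenness of $G$ in $v_1$ from the reflection symmetry of its orbits, killing $\lim R_N$ by parity. This is a softer, rate-free compactness argument that avoids the delicate trajectory-comparison and Jacobian estimates; its price is exactly the two points you flag, which do require writing out: (i) that the bounded distributional solution $G$ is a.e.\ constant along the flow (a DiPerna--Lions/mollification-along-the-flow argument; harmless here because $\supp G\subset\{y\ge y_0\}$ keeps the field smooth and bounded, and since the integrand carries the factor $b(y)$ you only need evenness on $\{y<1\}$, where the connecting orbit segment stays and the discontinuity of $b$ at $y=1$ is never met), and (ii) that flow-invariance for each fixed time-shift upgrades to $G(y,v_1,v_2)=G(y,-v_1,v_2)$ a.e.\ even though the reflection time varies with the point (e.g.\ via the invariants $(|v|,\,v_2+\Psi(y))$ and orbit coordinates, since both points lie on the same orbit). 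With those two items supplied, your argument closes; the paper's version additionally yields the convergence rate $O(1/N)$ for the extra term, which your approach does not.
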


The result can be extended to the following two cases: 1) The case when the external magnetic field $B_{ext,N}$ is finite; 2) The case when the plasma contains both ions and electrons. We discuss them in Section \ref{SectionFinite} and Section \ref{SectionTwoSpecies}, respectively. Moreover, the same result obviously holds when the spatial domain $(0, 1)$ is replaced by a half line $(0, +\infty)$, by essentially the same argument as in the proof of Theorem \ref{mtheorem}. 




Throughout the paper, we denote $\rho^N (t, x) :=  \int_{\mathbb{R}^2} f^N (t, x, v) dv$, and $\textbf{j}^N (t, x) := \int_{\mathbb{R}^2} \hat{v} f^N (t, x , v) dv$. Without loss of generality, we assume that $N$ is large enough, such that the following holds:  
\begin{equation} \label{cond-N}
N \geq 2, \ \text{and} \ 
dist(\supp_x f_0 (x,v), \supp B_{ext,N} (x)) >0 .  
\end{equation}
Later in the paper we will mention several additional requirements on the lower bound of $N$. None of these constraints on $N$ affect our result, since we only care about the scenario when $N \rightarrow +\infty$.

The contents in the paper are arranged as follows. In Section \ref{SectionSetup}, we discuss some details in the set up of the problem. In Section \ref{SectionWL}, we prove some bounds for the particle momentum as well as for the internal electromagnetic field, and obtain some limit for $(f^N, E^N, B^N)$ by extracting subsequences. In this section we also prove the confinement effect carried by the external magnetic field. In Section \ref{ModelCase}, we consider a \emph{model} trajectory ODE system, in which we drop the internal fields, and prove that the external magnetic field $B_{ext,N}$ has a "reflective" effect on charged particles when the internal fields are absent. We illustrate that how this effect resembles the role of a specular boundary condition. This enables us to carry out a perturbative analysis on the trajectory ODEs and explain the "reflective" effect of $B_{ext,N}$ on charged particles when the internal fields come into play. We give the proof of Theorem \ref{mtheorem} in Section \ref{ProofofMainTheorem}. The case when the external magnetic field $B_{ext,N}$ is finite is addressed in Section \ref{SectionFinite}. In Section \ref{SectionTwoSpecies} we analyze the same problem with a plasma that contains both ions and electrons. The appendix is devoted to a tool lemma for the readers' convenience.

This paper is the first article to address the phenomenon of magnetic confinement at the boundary approximating specular reflecting wall from a mathematical point of view. So far we only consider the weak convergence of $(f^N , E^N_1 , E^N_2 , B^N)$ to $(f , E_1 , E_2 , B)$. There are a lot of open directions on this topic, for example, strengthening the weak convergence of $(f^N , E^N_1 , E^N_2 , B^N)$ towards $(f , E_1 , E_2 , B)$ to one in a stronger sense, or investigating the limiting process for higher dimensional settings.

\section{Setup} \label{SectionSetup}

Let $b(x)$ be some $C^1$, piecewise $C^3$, compactly supported function on $(0, +\infty)$ and satisfies
\begin{equation} \label{b-prop}
\begin{split}
&   b' (x) >0  , \ b'' (x) <0 , \text{ and } b''' (x) >0 \text{ on } (0, 1) , \\
& b (x) = 0 \ \text{when} \ x \in (1, +\infty) , \ b (x) \rightarrow -\infty \ \text{as} \ x \rightarrow 0 . \\
\end{split}
\end{equation}
Then we define $B_{ext,N}$ as 
\begin{equation}  \label{BextN-b-rescale}
\begin{split}
& B_{ext, N} (x) := N b(Nx) \text{ for } x \in (0, \frac{1}{2}], \\
& B_{ext, N} (x) := -N b(N(1-x)) \text{ for } x \in [\frac{1}{2}, 1) . \\
\end{split}
\end{equation}
Let $\psi_{ext,N}$ be a magnetic potential for $B_{ext, N} (x)$ defined as
\begin{equation} \label{psiextN-potentialdef}
\psi_{ext,N} (x) :=  \int^x_{1/2}  B_{ext,N} (y) dy .
\end{equation}
Then we have $B_{ext, N} (x) = \partial_x \psi_{ext, N} (x)$, $ \psi_{ext,N} (\frac{1}{2}) = 0  $ 
and $\psi_{ext, N}$ is a piecewise $C^4$, compactly supported function on $\Omega = (0, 1)$ that blows up to $+\infty$ when $x$ approaches $\partial \Omega$. Let 
\begin{equation} \label{Psi-potentialdef}
\Psi (x)  :=  \int^x_{1}  b (y) dy .
\end{equation}
Then $b(x) = \Psi' (x)$, and $\Psi$ is a $C^2$, piecewise $C^4$, compactly supported function $\Psi$ on $(0, +\infty)$ satisfying 
\begin{equation} \label{Psi-prop}
\begin{split}
& \Psi (x) >0,  \ \Psi'' (x) >0 , \ \Psi''' (x) <0    \
\text{and } \Psi'''' (x) >0  \text{ on } (0, 1) ,   \\
& \Psi (x) = 0 \ \text{when} \ x \in (1, +\infty) , \ \Psi (x) \rightarrow +\infty \ \text{as} \ x \rightarrow 0 . \\
\end{split}
\end{equation}
Moreover, $\psi_{ext, N} $ can be viewed as a function obtained by transforming $\Psi$ as follows:
\begin{equation} \label{psiextN-def}
\begin{split}
& \psi_{ext, N} (x) = \Psi(Nx) \text{ for } x \in (0, \frac{1}{2}], \\
& \psi_{ext, N} (x) = \Psi(N(1-x)) \text{ for } x \in [\frac{1}{2}, 1) . \\
\end{split}
\end{equation}




The regularity and monotonicity conditions on $b(x)$ and $\Psi (x)$ as well as the assumptions that $b(x)$ and $\Psi (x)$ are compactly supported are not essential -- they are just set up for technical convenience. (For example, in fact, the main result still holds for the case when $b(x)$ and $\Psi (x)$ only decay to $0$ as $x \rightarrow +\infty$.) Notice that for each $N$, $\psi_{ext, N}$ is piecewise $C^4$, and
\begin{equation}
\psi_{ext, N} (x) \rightarrow +\infty \ \text{as} \ x \rightarrow 0, \ 1 ,
\end{equation}
and
\begin{equation}
\psi_{ext, N} (x) = 0 \ \text{on} \ (\frac{1}{N}, 1- \frac{1}{N} ) .
\end{equation}


For the system \eqref{VlasovBext} -- \eqref{MaxwellBext}, we assume the initial condition 
\begin{equation} 
\begin{split}
& 0 \leq f(0, x, v) = f_0 (x, v) \in C^1_0 (\Omega \times\mathbb{R}^2) \cap L^1  (\Omega \times\mathbb{R}^2)  , \\
& \supp_{x, v} f_0 (x, v) \subset [\epsilon_0, 1-\epsilon_0] \times \{|v| \leq k_0 \}  , \\
& E_2(0, x) = E_{2, 0} (x) \in C^1 , \ B(0, x) = B_{0} (x)  \in C^1 . \\
\end{split}
\end{equation}
Here the constants satisfy $\epsilon_0 \in (0, 1/2)$, $k_0 >0$, and hence $f_0$ is supported away from $\partial \Omega$ with a positive distance to it.

We also need boundary conditions for $E$ and $B$. 
At each boundary point $x=0$ and $x=1$, either $E_2$ or $B$ should be specified, which leaves four possible combinations of boundary conditions for $E_2$ and $B$, see \cite{NS4}. In this paper we take one of the four choices and assume the following boundary conditions for $E_2$ and $B$:
\begin{equation} 
E_2 (t, 0) = E_{2, b} (t)  \in C^1  , \  B (t, 1)  = B_b (t)  \in C^1 .
\end{equation}
The proofs of the main theorem for the other three choices of boundary conditions are similar and we omit them. 

Moreover, we take 
$$   E_1 (0,0) = \lambda $$
as in \cite{NS3}. Here $\lambda$ is a real constant. As in Section 2.1 in \cite{NS3}, we integrate the Vlasov equation \eqref{VlasovBext} or \eqref{VlasovBextspecular} to obtain $\partial_t \rho + \partial_x j_1 =0$. Notice that in both settings (\eqref{VlasovBext} or \eqref{VlasovBextspecular} with the specular boundary condition \eqref{specularBC}) we have $j_1 (t, 0) = j_1 (t, 1)=0$. We integrate $\partial_t \rho + \partial_x j_1 =0$ in $x$ and obtain $\int_\Omega \rho (t, x) dx = \int_\Omega \rho (0, x) dx \equiv \|f_0\|_{L^1 (\Omega \times \mathbb{R}^2)}$. Integrating $\partial_x E_1 = \rho$ and using $ E_1 (0,0) = \lambda $, we deduce $E_1 (t, x) = \int^x_0 \rho (t, y) dy +C(t)$ with $C(0) = \lambda$. From $\partial_t E_1 = -j_1$ and $\partial_t \rho + \partial_x j_1 =0$, we have $C'(t) = -j_1 (t,x) + \int^x_0 \partial_x j_1 (t, y) dy = -j_1 (t, 0) =0$. Therefore $C(t) \equiv \lambda$, which enables us to deduce $ E_1(t, x) = \int^x_0 \int_{\mathbb{R}^2} f (t, y, v) dv dy + \lambda  $. Hence we have the following initial-boundary condition for $E_1$:
\begin{equation}
E_1(0, x) = \int^x_0 \int_{\mathbb{R}^2} f_0 (y, v) dv dy + \lambda  =: E_{1, 0} (x) \in C^1 , \ E_1(t, 0) \equiv \lambda .
\end{equation}
To summarize, we put down the following initial-boundary conditions:
\begin{equation} \label{boundarycondBext}
\begin{split}
& 0 \leq f(0, x, v) = f_0 (x, v) \in C^1_0 (\Omega \times\mathbb{R}^2)  , \\
& \supp_{x, v} f_0 (x, v) \subset [\epsilon_0, 1-\epsilon_0] \times \{|v| \leq k_0 \}  , \\
& E_1(0, x) = \int^x_0 \int_{\mathbb{R}^2} f_0 (y, v) dv dy + \lambda  =:  E_{1, 0} (x) \in C^1 , \  \ E_1(t, 0) \equiv \lambda , \\
& E_2(0, x) = E_{2, 0} (x) \in C^1 , \ B(0, x) = B_{0} (x)  \in C^1 , \\
& E_2 (t, 0) = E_{2, b} (t)  \in C^1  , \  B (t, 1)  = B_b (t)  \in C^1 . \\
\end{split}
\end{equation}
Here the functions $E_{2, b} $, $E_{2, 0}$, $B_b$ and $B_0$ should satisfy
\begin{equation}
E_{2,b} (0) = E_{2, 0} (0), \ B_b (0) = B_0 (1)  
\end{equation}
for the sake of compatibility.


Later in Lemma \ref{NS3-lm3.1}, we will show that there exists a constant $C_v :=  k_0 + C_1 T > 0$ only depending on the initial-boundary data and $T$ (in particular, independent of $N$), such that 
\begin{equation}
\sup_N \sup \{ |v|: f^N (t, x, v) \neq 0 \text{ for some } x \in \Omega \} \leq C_v \text{ for all } t \in [0, T] . 
\end{equation}
We use $D_{C_v}$ to denote the disk on $\mathbb{R}^2$ centered at the origin with radius $C_v$. Therefore 
\begin{equation}
\supp f^N \subset \mathcal{K} 
\end{equation} 
with $\mathcal{K}$ being some compact subset of $[0, T] \times \Omega \times \mathbb{R}^2$.

We introduce the following definition for the weak formulation of the 1.5D RVM with the external magnetic field:

\vskip 0.5cm

\begin{definition}
\label{D:weak-Bext}
\begin{flushleft}
{\it{(Weak solution of the 1.5D RVM with the external magnetic field)}} \\
Let 
\begin{equation*}
\begin{split}
& f^N \geq 0 , \  f^N \in L^1_{loc} ([0, T) \times \Omega \times \mathbb{R}^2)  , \ 
 E^N_1 , E^N_2 , B^N \in L^1_{loc} ([0, T) \times \overline{\Omega}) , \\ 
& \text{and }  \supp f^N \subset \mathcal{K} 
\text{ with } \mathcal{K} \text{ being some compact subset of } [0, T] \times \Omega \times \mathbb{R}^2 .  \\
\end{split}
\end{equation*}
We say that $(f^N, E^N_1, E^N_2, B^N)$ is a weak solution of \eqref{VlasovBext}, \eqref{MaxwellBext} with the initial-boundary conditions \eqref{boundarycondBext} if for any $\alpha (t, x, v) \in C^\infty_c ([0, T) \times \overline{\Omega} \times \mathbb{R}^2)$, and $\varphi_j (t, x) \in C^\infty_c ([0, T) \times \overline{\Omega}  )$, $j = 1, \  2, \ 3, \ 4$ satisfying
$$   \varphi_2 (t, 1) = \varphi_3 (t, 0) =  \varphi_4 (t, 1) =0 ,  $$
the following holds:
\end{flushleft}
\begin{equation} \label{weakformVlasovN}
\begin{split}
&  \int^1_0 \int_{\mathbb{R}^2} \int^T_0 f^N \cdot \Big\{  \partial_t \alpha + \hat{v}_1   \partial_x \alpha  + ( E^N_1 + \hat{v}_2 B^N + \hat{v}_2 B_{ext, N} ) \partial_{v_1} \alpha  \\
& + ( E^N_2 - \hat{v}_1 B^N - \hat{v}_1 B_{ext, N}  )   \partial_{v_2}  \alpha  \Big\} dt dv dx  + \int^1_0 \int_{\mathbb{R}^2} f_0 (x, v) \alpha (0, x, v) dv dx  = 0 , \\
\end{split} 
\end{equation}
\begin{equation} \label{weakformMaxwellN1}
\begin{split}
& -\int^1_0  \int^T_0 E^N_1 \partial_t \varphi_1 dt dx
  - \int^1_0   E_{1, 0} (x) \varphi_1 (0, x) dx  
+  \int^1_0 \int^T_0 \int_{\mathbb{R}^2} \hat{v}_1 f^N  \varphi_1 dv dt dx   =0 ,  \\
\end{split}
\end{equation}
\begin{equation} \label{weakformMaxwellN2}
\begin{split}
& -\int^1_0  \int^T_0 E^N_1 \partial_x \varphi_2 dt dx  
 - \lambda \int^T_0   \varphi_2 (t, 0) dt   
 - \int^1_0 \int^T_0 \int_{\mathbb{R}^2} f^N \varphi_2 dv dt dx =0 , \\
\end{split}  
\end{equation}
\begin{equation} \label{weakformMaxwellN3}
\begin{split}
& -\int^1_0 \int^T_0  E^N_2 \partial_t \varphi_3 dt dx - \int^1_0   E_{2, 0} (x) \varphi_3 (0, x) dx    \\
& - \int^1_0 \int^T_0 B^N \partial_x \varphi_3 dt dx  + \int^T_0   B_b (t) \varphi_3 (t, 1) dt
  + \int^1_0 \int^T_0 \int_{\mathbb{R}^2} \hat{v}_2 f^N \varphi_3 dv  dt dx  =0 , \\
\end{split}
\end{equation}
and
\begin{equation} \label{weakformMaxwellN4}
\begin{split}
& -\int^1_0 \int^T_0 B^N \partial_t \varphi_4 dt dx  - \int^1_0  B_0 (x) \varphi_4 (0, x) dx    \\
& - \int^1_0 \int^T_0  E^N_2 \partial_x \varphi_4 dt dx - \int^T_0  E_{2, b} (t) \varphi_4 (t, 0) dt  
=0 .  \\ 
\end{split}
\end{equation}
\end{definition}

\vskip 0.5cm

We also introduce the following definition for the weak formulation of the 1.5D RVM for the case without the external magnetic field but with the specular boundary condition:

\vskip 0.5cm

\begin{definition}
\label{D:weak}
\begin{flushleft}
{\it{(Weak solution of the 1.5D RVM with the specular boundary condition)}} \\
Let 
\begin{equation*}
\begin{split}
& f \geq 0 , \  f \in L^1_{loc} ([0, T) \times \overline{\Omega} \times \mathbb{R}^2) , \  
 E_1 , E_2 , B \in L^1_{loc} ([0, T) \times \overline{\Omega}) . \\ 
\end{split}
\end{equation*}
We say that $(f, E_1, E_2, B)$ is a weak solution of \eqref{VlasovBextspecular}, \eqref{MaxwellBextspecular} with the initial-boundary conditions \eqref{boundarycondBext} together with the specular boundary condition \eqref{specularBC} if for any $\alpha (t, x, v) \in C^\infty_c ([0, T) \times \overline{\Omega} \times \mathbb{R}^2)$ and $\varphi_j (t, x) \in C^\infty_c ([0, T) \times \overline{\Omega}  )$, $j = 1, \  2, \ 3, \ 4$ satisfying
$$\alpha(t,0,v_1,v_2) = \alpha(t,0,-v_1,v_2) , $$
$$\alpha(t,1,v_1,v_2) = \alpha(t,1,-v_1,v_2) , $$
$$  \varphi_2 (t, 1) = \varphi_3 (t, 0) = \varphi_4 (t, 1) =0 ,  $$
the following holds:
\end{flushleft}
\begin{equation} \label{E:weakformVlasov}
\begin{split}
&  \int_0^1 \int_{\mathbb{R}^2} \int_0^T f \cdot \Big\{ \partial_t \alpha + \hat{v}_1 \partial_x \alpha + ( E_1 + \hat{v}_2 B ) \partial_{v_1} \alpha  + ( E_2 - \hat{v}_1 B ) \partial_{v_2} \alpha \Big\} dt dv dx  \\
& + \int_0^1 \int_{\mathbb{R}^2} f_0 (x, v) \alpha (0, x, v)  dv dx  = 0 ,\\
\end{split}
\end{equation}
\begin{equation} \label{E:weakformMaxwell1}
\begin{split}
& -\int^1_0  \int^T_0 E_1 \partial_t  \varphi_1 dt dx   
- \int^1_0   E_{1, 0} (x) \varphi_1 (0, x) dx  
+ \int^1_0 \int^T_0 \int_{\mathbb{R}^2} \hat{v}_1 f  \varphi_1 dv dt dx  =0 ,  \\
\end{split}
\end{equation}
\begin{equation} \label{E:weakformMaxwell2}
\begin{split}
& -\int^1_0  \int^T_0 E_1 \partial_x \varphi_2 dt dx  
 - \lambda \int^T_0  \varphi_2 (t, 0) dt   
- \int^1_0 \int^T_0 \int_{\mathbb{R}^2} f \varphi_2 dv dt dx =0 , \\
\end{split}  
\end{equation}
\begin{equation} \label{E:weakformMaxwell3}
\begin{split}
& -\int^1_0 \int^T_0  E_2 \partial_t \varphi_3 dt dx - \int^1_0   E_{2, 0} (x) \varphi_3 (0, x) dx    \\
& - \int^1_0 \int^T_0 B \partial_x \varphi_3 dt dx + \int^T_0   B_b (t) \varphi_3 (t, 1) dt 
  + \int^1_0 \int^T_0 \int_{\mathbb{R}^2} \hat{v}_2 f \varphi_3 dv  dt dx  =0 , \\
\end{split}
\end{equation}
and
\begin{equation} \label{E:weakformMaxwell4}
\begin{split}
& -\int^1_0 \int^T_0 B \partial_t \varphi_4 dt dx  - \int^1_0  B_0 (x) \varphi_4 (0, x) dx    \\
& - \int^1_0 \int^T_0  E_2 \partial_x \varphi_4 dt dx - \int^T_0  E_{2, b} (t) \varphi_4 (t, 0) dt  
=0 .  \\ 
\end{split}
\end{equation}
Note that the class where the test functions $\alpha$ belong includes those that are compactly supported in $\Omega$.  
\end{definition}

Notice that the only difference between Definition \ref{D:weak-Bext} and Definition \ref{D:weak} lies in the weak form of the Vlasov equation. We give some explanation for \eqref{E:weakformVlasov} here. Assume $(E_1, E_2, B) \in C^1$. It is obvious that a $C^1$ solution $f$ of the Vlasov equation \eqref{VlasovBextspecular} and the initial-boundary conditions \eqref{boundarycondBext} together with the specular boundary condition \eqref{specularBC} satisfies Definition \ref{D:weak} by noticing that $\int_{\mathbb{R}^2} \int^T_0 \hat{v}_1 f (t, 0, v) \alpha (t, 0, v) dt dv = \int_{\mathbb{R}^2} \int^T_0 \hat{v}_1 f (t, 1, v) \alpha (t, 1, v) dt dv =0$ holds for $\alpha$ satisfying $\alpha(t,0,v_1,v_2) = \alpha(t,0,-v_1,v_2) $ and $\alpha(t,1,v_1,v_2) = \alpha(t,1,-v_1,v_2) $ if $f$ satisfies \eqref{specularBC}. Conversely, let $f$ be a $C^1$ function that satisfies Definition \ref{D:weak}. The usual weak form of the Vlasov equation \eqref{VlasovBextspecular} is
\begin{equation}  \label{E:weakformVlasov-usual}
\begin{split}
&  \int^1_0 \int_{\mathbb{R}^2} \int^T_0 \Big\{ f \partial_t \alpha + \hat{v}_1 f \partial_x \alpha + ( E_1 + \hat{v}_2 B )f \partial_{v_1} \alpha  + ( E_2 - \hat{v}_1 B   ) f \partial_{v_2}  \alpha  \Big\} dt dv dx  \\
& + \int_{\mathbb{R}^2} \int^T_0 \hat{v}_1 f (t, 0, v) \alpha (t, 0, v) dt dv - \int_{\mathbb{R}^2} \int^T_0 \hat{v}_1 f (t, 1, v) \alpha (t, 1, v) dt dv  \\
& + \int^1_0 \int_{\mathbb{R}^2} f (0, x, v) \alpha (0, x, v) dx dv  = 0  , \ \forall \alpha \in C^\infty_c ([0, T) \times \overline{\Omega} \times \mathbb{R}^2 )  . \\
\end{split}
\end{equation}
For $x \notin \partial \Omega$, we take $\alpha$ that satisfies $\supp_x \alpha \subset (0, 1)$ and it is easy to see that \eqref{E:weakformVlasov} implies that  \eqref{VlasovBextspecular} holds for all $(t, x, v_1, v_2) \in [0, T) \times \Omega \times \mathbb{R}^2$ (with $(f, E_1, E_2, B) \in C^1$).
For the case $x \in \partial \Omega$, it suffices to consider the boundary $x=0$. For any $t_0 \in [0, T)$, $v_{1,0} \neq 0$, $v_{2,0} \in \mathbb{R}$, there exists a sequence, such that $\alpha_j (t_0, 0, v_{1, 0}, v_{2, 0}) = \alpha_j (t_0, 0, -v_{1, 0}, v_{2, 0}) = j$, and as $j \rightarrow \infty$, $\alpha_j (t, x, v_1, v_2)$ converges to  
\begin{equation}
\label{E:key-test-model}
\alpha_0 (t,x,v_1,v_2) = [\delta_{v_{1,0}}(v_1) + \delta_{-v_{1,0}} (v_1)] \delta_{t_0}(t)\delta_{v_{2,0}}(v_2) \delta_0 (x)
\end{equation}
in the sense of distribution. Notice that $\alpha_0$ meets the even condition with respect to $v_1$. Plugging $\alpha_j$ into \eqref{E:weakformVlasov} and letting $j \rightarrow \infty$ gives \eqref{VlasovBextspecular} at $(t_0, 0, v_{1,0}, v_{2,0})$ as well as
$$\hat v_{1,0} f(t_0,0,v_{1,0},v_{2,0}) - \hat v_{1,0} f(t_0,0,-v_{1,0},v_{2,0}) =0 . $$
Cancelling out $\hat v_{1,0}$ in the last equality gives the classical specular boundary condition \eqref{specularBC}. 

\vskip 0.5cm

\section{Bounds for Particle Momentum and Electromagnetic Field} \label{SectionWL}

We first prove some bounds for the particle momentum and the electromagnetic field, and obtain some limit object for $(f^N, E^N, B^N)$ by extracting subsequences.

From Section 2 in \cite{NS3}, we have the following lemma giving a uniform $L^\infty$ bound on $(E^N_1, E^N_2, B^N)$:

\begin{lemma}   \label{NS3-Section2-lm}
The sequence $\{(E^N_1, E^N_2, B^N)\}$ satisfies
\begin{equation} \label{NS3-Section2}
\|(E^N_1, E^N_2, B^N)\|_{L^\infty ([0, T] \times \Omega)} \leq C_1 
\end{equation}
where $C_1$ is a positive constant defined as
\begin{equation} \label{NS3-Section2-C1}
\begin{split}
C_1 
& := \|f_0 \|_{L^1 (\Omega \times \mathbb{R}^2)} + \lambda + \|E_{2, 0}\|_{L^\infty (\Omega)} + \|E_{2, b}\|_{L^\infty ([0, T] \times \partial \Omega)} + \|B_0\|_{L^\infty (\Omega)} + \| B_b\|_{L^\infty ([0, T] \times \partial \Omega)} \\
& \quad + \frac{1}{4} [ (\|f_0\|_{L^1 (\Omega \times \mathbb{R}^2)} + \lambda )^2 + \|E_{2, 0}\|^2_{L^\infty (\Omega)} + \|B_0 \|^2_{L^\infty (\Omega)} + 4T \|E_{2, b} B_b \|_{L^\infty ([0, T] \times \partial \Omega)} ] \\
& \quad + \frac{1}{2} \| \la v \ra f_0 \|_{L^1 (\Omega \times \mathbb{R}^2)} . \\ 
\end{split}
\end{equation}
\end{lemma}
\begin{proof}
The proof is given in Corollary 2.4 in \cite{NS3} so we omit it here.
\end{proof}

\vskip 0.3cm


The next lemma is introduced to describe the bound for the particle momentum as well as the relation between the confining potential and the particle trajectory.


\begin{lemma} \label{NS3-lm3.1}
Suppose $\supp_{x, v} f_0 (x, v) \subset [\epsilon_0, 1-\epsilon_0] \times \{|v| \leq k_0 \} $. 
Denote $P_N (t) :=  \sup \{ |v|: f^N (t, x, v) \neq 0 \text{ for some } x \in \Omega \}$. We have: \\
1)
\begin{equation}  \label{NS3-lm3.1-PN}
P_N (t) \leq C_v :=  k_0 + C_1 T , \text{ for all } t \in [0, T] . 
\end{equation}
Hence the support of $f^N$ in $v$ is contained in the disk $\overline{D}_{C_v}$. \\
2) When $N \geq \epsilon_0^{-1}$, 
\begin{equation} \label{NS3-lm3.1-psiextN}
\|\psi_{ext, N} \|_{L^\infty (\supp_x f^N )} \leq C_2 .
\end{equation}
Here $ C_2  :=   2 k_0 + 2 C_1 T + 2 C_1 $, and $C_1$ is given in \eqref{NS3-Section2-C1}.  
\end{lemma}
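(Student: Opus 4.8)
\textbf{Proof proposal for Lemma \ref{NS3-lm3.1}.}

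The plan is to track a single particle trajectory $(X^N(s), V^N(s))$ of the Vlasov equation \eqref{VlasovBext} and control simultaneously its momentum $|V^N(s)|$ and the value of the confining potential $\psi_{ext,N}$ evaluated along it, using the conserved quantities of the characteristic ODEs together with the uniform field bound \eqref{NS3-Section2}. For part 1), I would write the characteristic system $\dot X^N = \hat V_1^N$, $\dot V_1^N = E_1^N + \hat V_2^N B^N + \hat V_2^N B_{ext,N}$, $\dot V_2^N = E_2^N - \hat V_1^N B^N - \hat V_1^N B_{ext,N}$, and observe that the potentially dangerous terms, namely those involving $B_{ext,N}$, contribute nothing to the growth of $|V^N|$: indeed $\frac{d}{ds}\tfrac12|V^N|^2 = V^N \cdot \dot V^N$, and the magnetic terms (both $B^N$ and $B_{ext,N}$) are of the form $(\hat V_2^N, -\hat V_1^N)\cdot(V_1^N, V_2^N)$ up to the scalar $B$, which vanishes. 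Hence $\frac{d}{ds}|V^N| \le |E^N| \le C_1$ by \eqref{NS3-Section2}, and since $f^N$ is transported along characteristics and $f_0$ is supported in $\{|v|\le k_0\}$, any point $(t,x,v)$ with $f^N(t,x,v)\ne 0$ is reached by a characteristic starting from $|V^N(0)|\le k_0$, giving $|v| \le k_0 + C_1 t \le k_0 + C_1 T = C_v$. This is the straightforward half.

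For part 2), the key is an energy-type conservation law for the characteristics that ties $\psi_{ext,N}(X^N)$ to the momentum. Because $B_{ext,N} = \partial_x \psi_{ext,N}$ is a gradient (in 1D), the quantity $V_2^N + \psi_{ext,N}(X^N)$ has time derivative $\dot V_2^N + B_{ext,N}(X^N)\hat V_1^N = E_2^N - \hat V_1^N B^N$, i.e. the $B_{ext,N}$-contributions cancel exactly, leaving something bounded by $C_1(1 + 1) \le 2C_1$ in absolute value (using $|\hat V_1^N| \le 1$). Integrating on $[0,s]$ with $s \le T$ gives
\begin{equation*}
|\psi_{ext,N}(X^N(s))| \le |\psi_{ext,N}(X^N(0))| + |V_2^N(s)| + |V_2^N(0)| + 2C_1 T .
\end{equation*}
Now $X^N(0) \in \supp_x f_0 \subset [\epsilon_0, 1-\epsilon_0]$, so the first term is at most $\|\psi_{ext,N}\|_{L^\infty([\epsilon_0,1-\epsilon_0])}$; the momentum terms are at most $k_0 + C_v \le 2k_0 + C_1 T$ by part 1) (or more crudely $2k_0 + 2C_1T$); collecting everything yields exactly the stated constant $C_2 = \|\psi_{ext,N}\|_{L^\infty([\epsilon_0,1-\epsilon_0])} + 2k_0 + 2C_1 T + 2C_1$. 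Since every point in $\supp_x f^N$ at time $s$ lies on such a characteristic, \eqref{NS3-lm3.1-psiextN} follows.

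The main obstacle — more a point requiring care than a genuine difficulty — is justifying that the characteristic ODEs are well-defined and that $f^N$ is genuinely constant along them up to time $T$, in particular that the trajectories never reach $\partial\Omega$ where $\psi_{ext,N}$ blows up; this is exactly where one invokes the global $C^1$ well-posedness and confinement from \cite{NS3} (and is the content of the bound \eqref{NS3-lm3.1-psiextN} itself, used bootstrap-style: as long as $\psi_{ext,N}(X^N(s))$ stays finite the particle stays in the interior, and the a priori bound shows it does). One should also note that $\|\psi_{ext,N}\|_{L^\infty([\epsilon_0,1-\epsilon_0])}$ is in fact zero once $N$ is large enough that $1/N < \epsilon_0$, consistent with \eqref{cond-N}, but keeping it symbolic does no harm. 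The cancellation of the $B_{ext,N}$ terms in both the $|V|^2$ estimate and the $V_2 + \psi_{ext,N}(X)$ conservation law is the structural heart of the argument and should be displayed explicitly.
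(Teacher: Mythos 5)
Your overall strategy is the same as the paper's: part 1) via $\frac{d}{ds}|V^N|^2 = 2V^N\cdot E^N$ (the magnetic terms drop out) together with \eqref{NS3-Section2}, and part 2) via an invariant of the form ``$V_2$ plus a magnetic potential'' evaluated along characteristics. Your part 1) is fine, and in fact your direct bound $\frac{d}{ds}|V^N|\le C_1$ gives $|v|\le k_0+C_1T$ cleanly.

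In part 2), however, there is one concrete discrepancy: you use the invariant $V_2^N+\psi_{ext,N}(X^N)$ only, whose derivative is $E_2^N-\hat V_1^N B^N$, bounded by $2C_1$; integrating over $[0,T]$ contributes $2C_1T$, and adding the momentum terms $|V_2^N(s)|+|V_2^N(0)|\le (k_0+C_1T)+k_0$ you actually obtain
\begin{equation*}
\|\psi_{ext,N}\|_{L^\infty([\epsilon_0,1-\epsilon_0])}+2k_0+3C_1T ,
\end{equation*}
not the stated $C_2=\|\psi_{ext,N}\|_{L^\infty([\epsilon_0,1-\epsilon_0])}+2k_0+2C_1T+2C_1$; your claim that this ``yields exactly the stated constant'' only holds when $C_1T\le 2C_1$, i.e.\ $T\le 2$. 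The paper avoids this by also including the \emph{internal} magnetic potential $\psi^N(t,y)=\int_{1/2}^y B^N(t,z)\,dz$ in the invariant, i.e.\ it differentiates $V_2^N+\psi^N(s,X^N)+\psi_{ext,N}(X^N)$; using $\partial_t B^N=-\partial_x E_2^N$ the derivative collapses to the single boundary value $E_2^N(s,\tfrac12)$, bounded by $C_1$, at the cost of two extra terms $|\psi^N(s,X^N(s))|+|\psi^N(0,x)|\le C_1$, and the total then fits under the stated $C_2$ for every $T$. So either adopt that refined invariant or note that your argument proves the lemma with a (harmless, still $N$-independent) enlarged constant; as written, the constant bookkeeping does not close for large $T$. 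The remaining points you raise (characteristics staying in the interior, using the $C^1$ solution from \cite{NS3}) are consistent with the paper's setting.
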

{\it{Remark.} The inequality \eqref{NS3-lm3.1-psiextN} tells us that the support of $f^N$ in $x$ stays away from the boundary $\partial \Omega$ with a positive distance, i.e. $dist ( \supp_x f^N, \partial \Omega) > 0$ on $[0, T]$. }
\begin{proof}
The lemma follows from Lemma 3.1 and 3.4 in \cite{NS3}. We provide the proof here for completeness.

The ODE for the particle trajectory is
\begin{equation}
\left\{
\begin{aligned}
&\dot X_N = \hat V_{1,N} \\
&\dot V_{1,N} =  E^N_1 (t, X_N) + \hat V_{2,N} B^N (t, X_N) + \hat V_{2,N} B_{ext, N} (X_N) \\
&\dot V_{2,N} =  E^N_2 (t, X_N) - \hat V_{1,N} B^N (t, X_N) - \hat V_{1,N} B_{ext, N} (X_N)
\end{aligned}
\right.
\end{equation}
with initial data $X_N(0) = x$, $V_{1,N} (0) = v_1$, $V_{2,N} (0) = v_2$. We compute, using the ODE above, (here $\dot{F}$ means $\frac{\partial F}{\partial s}$ for any function $F$)
\begin{equation*}
\frac{d}{ds} |V_N|^2 = 2(V_{1,N} \dot{V}_{1,N} + V_{2,N} \dot{V}_{2,N}) = 2V_N \cdot E^N (s, X(s)) .
\end{equation*}
Hence by \eqref{NS3-Section2}, we obtain
\begin{equation*}
|V_N(s)|^2 \leq |v|^2 + 2C_1 \int^s_0 |V_N(\tau)| d\tau , \ \text{for} \ s \in [0, T] . 
\end{equation*}
By the quadratic Gronwall lemma, we have
\begin{equation*}
|V_N(s)| \leq |v| + C_1 T \ \text{for} \ s \in [0, T] . 
\end{equation*} 
By the definition of $P_N (t)$, we have
\begin{equation*}
P_N (t) \leq k_0 + C_1 T , \text{ for all } t \in [0, T] . 
\end{equation*}
1) is proved.

Next, let $\psi^N (\tau, y) = \int^y_{1/2} B^N(\tau, z) dz$. 
We define
\begin{equation*}
p(\tau, y, w)  :=  w_2 +\psi^N (\tau,y) + \psi_{ext,N} (y) ,
\end{equation*}
where $w = (w_1, w_2) \in \mathbb{R}^2$. Differentiating $p(\tau, y, w)$ along the characteristics, we obtain
\begin{equation*}
\begin{split}
& \quad \frac{d}{ds} p(s, X_N (s), V_N (s)) \\
& = \dot{V}_{2,N} (s) + \partial_t \psi^N (s, X_N (s)) + \dot{X}_N \partial_x \psi^N (s, X_N (s)) + \dot{X}_N \partial_x \psi_{ext,N} ( X_N (s)) \\
& = E^N_2 (s, X_N(s)) - \hat{V}_{1,N} (s) [B^N (s,X_N(s)) + B_{ext,N} ( X_N(s))] \\
& \quad + \partial_t \psi^N (s, X_N (s)) + \hat{V}_{1,N} (s) [B^N (s,X_N(s)) + B_{ext,N} ( X_N(s))] \\
& = E^N_2 (s, X_N(s)) + \partial_t \psi^N (  X_N (s)) \\
& = E^N_2 (s, \frac{1}{2}) . \\
\end{split}
\end{equation*}
Here we used the fact that $\partial_t B^N = -\partial_x E^N_2$. Integrating yields
\begin{equation*}
V_{2,N}(s) + \psi^N(s, X_N(s))+ \psi_{ext,N}(X_N(s)) = v_2 + \psi^N (0,x) + \psi_{ext,N} (x) + \int^s_0 E^N_2 (\tau, \frac{1}{2}) d\tau ,
\end{equation*}
and hence
\begin{equation*}
|\psi_{ext,N}(X_N(s))| \leq 
|V_{2,N}(s)| + |\psi^N(s, X_N(s))| + | v_2| + |\psi^N (0,x)| + |\psi_{ext,N} (x)| + \int^s_0 |E^N_2 (\tau, \frac{1}{2})| d\tau .
\end{equation*}
Combining this with \eqref{NS3-Section2} and \eqref{NS3-lm3.1-PN}, we have
\begin{equation*}
|\psi_{ext,N}(X_N(s))| \leq \|\psi_{ext, N} \|_{L^\infty ([\epsilon_0, 1-\epsilon_0])} + C_2 , \text{ for all } s \in [0, T] .
\end{equation*}
Notice that when $N \geq \epsilon_0^{-1}$, $ \|\psi_{ext, N} \|_{L^\infty ([\epsilon_0, 1-\epsilon_0])} =0$. Hence 
\begin{equation*}
|\psi_{ext,N}(X_N(s))| \leq  C_2 , \text{ for all } s \in [0, T] .
\end{equation*}
This inequality holds for all the trajectories. Therefore we conclude
\begin{equation*}
\|\psi_{ext, N} \|_{L^\infty (\supp_x f^N )} \leq C_2 .
\end{equation*}
The proof of 2) is complete.
\end{proof}
 
\begin{corollary} \label{NS3-lm3.1-cor}
There exists $y_0 >0$ (depends on $f_0$) independent of $N$ and small enough such that $\supp_x f^N (t) \subset (N^{-1} y_0, 1-N^{-1} y_0)$ for all $t \in [0, T]$. For any $x \in \supp_x f^N (t)$ ($t \in [0, T]$), we have $|\Psi (Nx) |\leq |\Psi (y_0)|$, $|\Psi' (Nx) |\leq |\Psi' (y_0)|$, $|\Psi'' (Nx) |\leq |\Psi'' (y_0)|$, $|\Psi''' (Nx) |\leq |\Psi''' (y_0)|$. 
\end{corollary}

\begin{proof}
Without loss of generality, we can assume $C_2 \geq \Psi (1)$. By \eqref{NS3-lm3.1-psiextN} and the monotonicity of $\Psi$ (see \eqref{Psi-prop}), we can define $y_0 :=  \Psi^{-1} (C_2) $ (where $\Psi^{-1}$ means the inverse function of $\Psi$ defined on $[\Psi(1), +\infty)$). Then $\supp_x f^N \subset (N^{-1} y_0, 1-N^{-1} y_0)$ by \eqref{psiextN-def}. The rest of corollary follows from the monotonicity of $\Psi$ and its derivatives given in \eqref{Psi-prop}.
\end{proof}



Now we are ready to introduce the following estimate for the derivatives of $E_2^N$ and $B^N$, whose proof is similar to the one for Lemma 4.1 in \cite{NS3}:

\begin{lemma} \label{NS3-lm4.1}
There exists a constant $C_T$ which only depends on the initial-boundary data and $T$, such that $\|\partial_{x} E_1^N \|_{L^\infty ([0, T] \times \Omega)} \leq C_T$, $\|\partial_{x} E_2^N \|_{L^\infty ([0, T] \times \Omega)} \leq C_T$, $\|\partial_{x} B^N \|_{L^\infty ([0, T] \times \Omega)} \leq C_T$. In particular, $C_T$ is independent of $N$.  
\end{lemma}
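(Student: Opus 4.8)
The plan is to follow the scheme of Lemma~4.1 in \cite{NS3}, keeping track of which constants may depend on $N$: the claim is that every quantity that arises is controlled by the data entering \eqref{NS3-Section2-C1}, \eqref{NS3-lm3.1-PN}, \eqref{NS3-lm3.1-psiextN} and by $T$, none of which involves $N$. The estimate for $E_1^N$ is immediate: the force field in \eqref{VlasovBext} is divergence-free in $(x,v)$ — the $v_1$-force does not depend on $v_1$ and the $v_2$-force does not depend on $v_2$ — so $f^N$ is transported along its characteristics and $\|f^N(t,\cdot,\cdot)\|_{L^\infty} = \|f_0\|_{L^\infty}$ for all $t$. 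Together with the uniform momentum bound $P_N(t) \le C_v$ of Lemma~\ref{NS3-lm3.1}, this gives $\|\rho^N\|_{L^\infty([0,T]\times\Omega)} + \|\textbf{j}^N\|_{L^\infty([0,T]\times\Omega)} \le \pi C_v^2 \|f_0\|_{L^\infty}$, and more generally an $N$-independent bound on every velocity moment $\int \phi(v) f^N\,dv$ with $\phi$ bounded on $\overline{D}_{C_v}$. Since $\partial_x E_1^N = \rho^N$, the bound for $\partial_x E_1^N$ follows with a constant independent of $N$.

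For $E_2^N$ and $B^N$ I pass to the Riemann invariants $F_\pm^N := E_2^N \pm B^N$, for which \eqref{MaxwellBext} reads $(\partial_t \pm \partial_x)F_\pm^N = -j_2^N$; the boundary conditions $E_2^N(t,0) = E_{2,b}(t)$ and $B^N(t,1) = B_b(t)$ prescribe the reflection of $F_\pm^N$ at the two endpoints, with inhomogeneities built from the $C^1$ data $E_{2,b}, B_b$. Integrating along the resulting broken light characteristic $\gamma$ through a point $(t,x)$ and differentiating in $x$ — the reflection times depend on $x$ with derivative $\pm 1$, so they contribute only the bounded quantities $E_{2,b}', B_b'$ and $j_2^N$ evaluated at those times — one obtains
\[
\big|\partial_x F_\pm^N(t,x)\big| \le C\big(\|E_{2,0}'\|_{L^\infty} + \|B_0'\|_{L^\infty} + \|E_{2,b}'\|_{L^\infty} + \|B_b'\|_{L^\infty} + \|\textbf{j}^N\|_{L^\infty}\big) + \int_\gamma \big|\partial_x j_2^N\big|\,ds .
\]
Since $\partial_x E_2^N$ and $\partial_x B^N$ are linear combinations of $\partial_x F_\pm^N$, it suffices to bound $\int_\gamma |\partial_x j_2^N|\,ds$ uniformly in $N$.

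This is the crux, and it is where the confinement from Lemma~\ref{NS3-lm3.1} and Corollary~\ref{NS3-lm3.1-cor} enters. One cannot bound $\partial_x f^N$ in $L^\infty$ uniformly in $N$: inside the magnetic layer, where $|B_{ext,N}| = O(N)$ and $|\partial_x B_{ext,N}| = O(N^2)$, the variational equation for the particle flow has coefficients of size $O(N^2)$. Instead, along a $\pm$-segment of $\gamma$ write
\[
\partial_x f^N = \frac{1}{1\mp\hat v_1}\Big[(\partial_t\pm\partial_x)f^N - (\partial_t + \hat v_1\partial_x)f^N\Big],
\]
where the prefactor is bounded on $\supp_v f^N \subset \overline{D}_{C_v}$ by a constant depending only on $C_v$; this is the cancellation of the usual $1/\hat v_1$ singularity. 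Using the Vlasov equation \eqref{VlasovBext} for $(\partial_t + \hat v_1\partial_x)f^N$ and integrating the resulting $v$-derivatives by parts, one gets, for each $(s,y)$,
\[
\partial_x j_2^N(s,y) = \int_{\mathbb{R}^2} \frac{\hat v_2}{1\mp\hat v_1}\,(\partial_t\pm\partial_x)f^N\,dv + R^N(s,y) + B_{ext,N}(y)\,Q^N(s,y),
\]
where $R^N$ and $Q^N$ are velocity moments of $f^N$ — $R^N$ additionally multiplied by $E_1^N, E_2^N, B^N$ — hence bounded uniformly in $N$ by the first step and by \eqref{NS3-Section2}, and, crucially, both $R^N(s,\cdot)$ and $Q^N(s,\cdot)$ are supported in $\supp_x f^N(s)$. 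Integrating over $\gamma$: the first term telescopes along $\gamma$ to $\int \frac{\hat v_2}{1\mp\hat v_1}[f^N(t,x,v) - f^N(0,\cdot,v)]\,dv$, which is bounded; $\int_\gamma |R^N|\,ds \le Ct$; and $\int_\gamma |B_{ext,N}\,Q^N|\,ds \le \|Q^N\|_{L^\infty}\int_{\gamma\cap\supp_x f^N} |B_{ext,N}|\,ds$. The point is that, although $B_{ext,N} \notin L^1(\Omega)$, Corollary~\ref{NS3-lm3.1-cor} confines $\supp_x f^N$ to $(N^{-1}y_0, 1 - N^{-1}y_0)$ with $y_0 = \Psi^{-1}(C_2) > 0$ independent of $N$, and on that set $B_{ext,N}$ — though of size $O(N)$ — lives on an interval of length $O(1/N)$ with $\int_{N^{-1}y_0}^{1/N} |B_{ext,N}| = \int_{y_0}^{1} |b(u)|\,du < \infty$; since $\gamma$ has unit speed and meets each point of $\Omega$ at most $1+t$ times, $\int_{\gamma\cap\supp_x f^N} |B_{ext,N}|\,ds \le 2(1+t)\,\|b\|_{L^1((y_0,1))}$, a bound independent of $N$. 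Hence $\int_\gamma |\partial_x j_2^N|\,ds \le C_T$ with $C_T$ independent of $N$; notice that, unlike in higher dimension, no Gronwall loop in $\partial_x(E^N, B^N)$ is needed.

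I expect the real work to be in this last step: setting up the Glassey--Strauss identity together with the bookkeeping of the boundary reflections of $\gamma$, and in particular recognising that the confinement restricts $\supp_x f^N$ so that the (globally non-integrable) external field enters only through the uniformly finite quantity $\|B_{ext,N}\,\mathbf{1}_{\supp_x f^N}\|_{L^1(\Omega)}$, rather than through $\|B_{ext,N}\|_{L^\infty}$. The remaining steps — the $E_1^N$ bound, the reduction via Riemann invariants, and assembling the final constant — are routine.
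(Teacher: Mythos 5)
Your proposal is correct and follows essentially the same route as the paper: Gauss's law plus the uniform moment bounds for $\partial_x E_1^N$, the Riemann invariants $E_2^N\pm B^N$ integrated along light-cone characteristics, the Glassey--Strauss splitting $\partial_x=\frac{(\partial_t\pm\partial_x)-(\partial_t+\hat v_1\partial_x)}{1\mp\hat v_1}$ with integration by parts in $v$, and, crucially, the observation that by Lemma \ref{NS3-lm3.1} and Corollary \ref{NS3-lm3.1-cor} the external field enters only through $\int_{\supp_x f^N}|B_{ext,N}|\,dx\le \Psi(y_0)$, which is independent of $N$. The only cosmetic difference is that you track the broken characteristic and its reflections explicitly, whereas the paper imports the boundary bookkeeping from Lemma 2.1 and Corollary 3.5 of \cite{NS3} via the terms $A^\pm$ and $M^N$.
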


\begin{proof}
The bounds for $\|\partial_{x} E_1^N \|_{L^\infty ([0, T] \times \Omega)} $ follows easily from the equation $\partial_x E^N_1 = \rho^N$:
\begin{equation} \label{lm-4.1-eq0}
\|\partial_{x} E_1^N \|_{L^\infty ([0, T] \times \Omega)} \leq \int_\Omega \int_{\mathbb{R}^2} f^N (t, x, v) dv dx  \leq  \int_\Omega \pi C_v^2 \|f_0 \|_{L^\infty (\Omega \times \mathbb{R}^2)}  dx  \leq \pi C_v^2 \|f_0 \|_{L^\infty (\Omega \times \mathbb{R}^2)} .
\end{equation}
It suffices to prove the bounds for $\|\partial_{x} E_2^N \|_{L^\infty ([0, T] \times \Omega)}$ and $\|\partial_{x} B^N \|_{L^\infty ([0, T] \times \Omega)}$. The proof is modified from the one for Lemma 4.1 in \cite{NS3}. It suffices to derive the $L^\infty ([0, T])$ estimate on $[0, T] \times (0, 1/2]$ since the case $x > 1/2$ is similar (the only change being that in below we express $\partial_x = \frac{S - T_- }{1+\hat{v}_1} $, where $T_- = \partial_t - \partial_x$, $S = \partial_t + \hat{v}_1 \partial_x$). 

Let $y_0$ be as defined in Corollary \ref{NS3-lm3.1-cor} and $\theta_0 := N^{-1} y_0$. Let 
$$k^{N,\pm} (t, x) := (E^N_2 \pm B^N)(t, x) . $$ 
It suffices to show $ \| \partial_x k^{N, \pm} (t, x) \|_{L^\infty ([0, T] \times (0, 1/2])} \leq C_T$. We only need to deal with $\partial_x  k^{N, +}$ since the bound for $\partial_x k^{N, -}$ is obtained in a similar manner.

By the argument leading to Lemma 2.1 in \cite{NS3}, we have, for $(t, x) \in [0, T] \times (0, 1/2]$,
\begin{equation} \label{lm-4.1-eq1}
\begin{split}
& k^{N,+} (t, x) = \frac{1}{2} A^+ (x-t) - \int^t_{t^+(x)} j^N_2 (\tau, x-t+\tau) d\tau , \\
& k^{N,-} (t, x) = \frac{1}{2} A^- (x-t) - \int^t_{t^-(x)} j^N_2 (\tau, x+t-\tau) d\tau . \\
\end{split}
\end{equation}
Here $A^\pm$ are given explicitly in terms of the initial-boundary data, and
\begin{equation*}
t^+ (x) :=  (t-x) \mathbf{1}_{t> x} , \ t^- (x) :=  (t-1+x) \mathbf{1}_{t> 1-x} , 
\end{equation*}
as defined in \cite{NS3}.
Differentiating the $k^{N,+}$ identity in \eqref{lm-4.1-eq1} with respect to $x$, we obtain
\begin{equation} \label{lm-4.1-eq2}
\partial_x k^{N, +} (t, x) = M^N (t, x) - \int^t_{t^+(x)} \partial_x j^N_2 (\tau, x-t+\tau) d\tau = M^N (t, x) - \int^t_{t^+(x)}  \int_{\mathbb{R}^2} \hat{v}_2 \partial_x f^N (\tau, x-t+\tau, v) dv d\tau .
\end{equation}
Here
\begin{equation} \label{lm-4.1-eq3}
M^N (t, x)  :=  \frac{1}{2} (A^+)' (x-t) + j^N_2 (t^+ (x), x-t+ t^+ (x)) (t^+)' (x) . 
\end{equation}

We have, due to Corollary 3.5 in \cite{NS3}, 
\begin{equation}
\| M^N \|_{L^\infty ([0, T] \times \Omega)} \leq C_M , 
\end{equation}
where the constant $C_M$ only depends on the initial-boundary data and $T$.


We use the splitting method of Glassey and Strauss (see \cite{GlasseyS1} and \cite{GlasseySchaeffer1}) to express the operator $\partial_x$. Denote
\begin{equation}
T_+  :=  \partial_t + \partial_x , \ S  := \partial_t + \hat{v}_1 \partial_x . 
\end{equation}
Then
\begin{equation}
\partial_x = \frac{T_+ - S}{1-\hat{v}_1} . 
\end{equation}
Denote
\begin{equation*}
K^N :=  ( E^N_1 + \hat{v}_2 B^N + \hat{v}_2 B_{ext, N} ,  E^N_2 - \hat{v}_1 B^N - \hat{v}_1 B_{ext, N} ) .
\end{equation*}
The Vlasov equation can be written as
\begin{equation}
Sf^N + \nabla_v \cdot (K^N f^N) =0 . 
\end{equation}


Using \eqref{lm-4.1-eq2}, the Vlasov equation as well as integration by parts, we obtain
\begin{equation}
\begin{split}
\partial_x k^{N, +} (t, x)
& =  M^N (t, x) - \int^t_{t^+ (x)} \frac{d}{d \tau} \int_{\mathbb{R}^2} \frac{\hat{v}_2}{1-\hat{v}_1} f^N (\tau, x-t+\tau, v) dv d \tau \\
& \quad - \int^t_{t^+(x)} \int_{\mathbb{R}^2} \frac{\hat{v}_2}{1-\hat{v}_1} \nabla_v \cdot (K^N f^N)  (\tau, x-t + \tau, v) dv d \tau    \\
& = M^N (t, x) - \int_{\mathbb{R}^2} \frac{\hat{v}_2}{1-\hat{v}_1} f^N (t, x, v) dv - \int_{\mathbb{R}^2} \frac{\hat{v}_2}{1-\hat{v}_1} f^N (t^+(x), x-t + t^+ (x), v) dv  \\
& + \int^t_{t^+(x)} \int_{\mathbb{R}^2} \nabla_v (\frac{\hat{v}_2}{1-\hat{v}_1} ) \cdot (K^N f^N) (\tau, x-t+\tau, v) dv d\tau . \\
\end{split}
\end{equation}
We know that the support of $f^N$ in $v$ is contained in the disk $\overline{D}_{C_v}$, where $C_v :=  k_0 + C_1 T$ (see Lemma \ref{NS3-lm3.1}). Using $\|f^N \|_{L^\infty} \leq \|f_0 \|_{L^\infty}$, we compute
\begin{equation}
\begin{split}
& \quad \| \partial_x k^{N, +} (t, x) \|_{L^\infty ([0, T] \times (0, 1/2])} \\
& \leq C_M + 2 \pi C_v^2 \| f_0 \|_{L^\infty (\Omega \times \mathbb{R}^2)} \|\frac{\hat{v}_2}{1-\hat{v}_1}\|_{L^\infty (D_{C_v})}   \\
& \quad + \|\nabla_v (\frac{\hat{v}_2}{1-\hat{v}_1}) \|_{L^\infty (D_{C_v})}  \int^x_{x-t+t^+(x)} \int_{D_{C_v}} |K^N f^N| (y-x+t,y, v) dv dy  . \\
\end{split}
\end{equation}
Recall that $K^N  = ( E^N_1 + \hat{v}_2 B^N + \hat{v}_2 B_{ext, N} ,  E^N_2 - \hat{v}_1 B^N - \hat{v}_1 B_{ext, N} )$, and that $\| ( E^N_1 + \hat{v}_2 B^N   ,  E^N_2 - \hat{v}_1 B^N )\|_{L^\infty} \leq 2C_1$. Moreover, we notice that the integrals in $x$ are actually carried out on the interval $(\theta_0, 1-\theta_0)= (N^{-1} y_0, 1-N^{-1} y_0)$. Recall that $B_{ext, N} (x) \leq 0 $ on $(0,  \frac{1}{2}]$, and $B_{ext, N} (x) \geq 0 $ on $[ \frac{1}{2}, 1)$. Combining together all these information together with $\|f^N \|_{L^\infty} \leq \|f_0 \|_{L^\infty}$, we have
\begin{equation}
\begin{split}
& \quad \| \partial_x k^{N, +} (t, x) \|_{L^\infty ([0, T] \times (0, 1/2])} \\
& \leq C_M + 2 \pi C_v^2 \| f_0 \|_{L^\infty (\Omega \times \mathbb{R}^2)} \|\frac{\hat{v}_2}{1-\hat{v}_1}\|_{L^\infty (D_{C_v})} \\
& \quad +   \|\nabla_v (\frac{\hat{v}_2}{1-\hat{v}_1}) \|_{L^\infty (D_{C_v})}    \big\{ 4 \pi C_v^2 C_1 \|f_0\|_{L^\infty (\Omega \times \mathbb{R}^2)} \\
& \quad + \int^{1/2}_{\theta_0} \int_{D_{C_v}} (- B_{ext, N} f^N) (y-x+t,y, v) dv dy  + \int^{1-\theta_0}_{1/2} \int_{D_{C_v}} ( B_{ext, N} f^N) (y-x+t,y, v) dv dy \big\} . \\
\end{split}
\end{equation}
By direct computation we have $\|\frac{\hat{v}_2}{1-\hat{v}_1}\|_{L^\infty (D_{C_v})} \leq 2C_v + 3C_v^2 $,  $\|\nabla_v (\frac{\hat{v}_2}{1-\hat{v}_1} )\|_{L^\infty (D_{C_v})} \leq 2+ 4C_v + 2C_v^2+3C_v^3 $. Moreover, we estimate the terms involving $B_{ext, N}$ using Lemma \ref{NS3-lm3.1} and $\|f^N \|_{L^\infty} \leq \|f_0 \|_{L^\infty}$ (noticing $\psi_{ext, N} (\frac{1}{2}) =0$):  
\begin{equation*}
\begin{split}
 \int^{1/2}_{\theta_0} \int_{D_{C_v}} (-B_{ext, N} f^N) (y-x+t,y, v) dv dy   
& \leq    \pi C_v^2 \|f_0 \|_{L^\infty (\Omega \times \mathbb{R}^2)} [ \psi_{ext, N} (\theta_0) -0] \\
& \leq  \pi C_v^2 |\Psi (y_0)| \|f_0 \|_{L^\infty (\Omega \times \mathbb{R}^2)} . \\
\end{split}
\end{equation*}
and similarly 
\begin{equation*}
\int^{1-\theta_0}_{1/2} \int_{D_{C_v}} (B_{ext, N} f^N) (y-x+t,y, v) dv dy \leq \pi C_v^2 |\Psi (y_0)| \|f_0 \|_{L^\infty (\Omega \times \mathbb{R}^2)} .
\end{equation*}
Plugging the estimates above, we arrive at
\begin{equation} \label{lm-4.1-eq4}
\begin{split}
& \quad \| \partial_x k^{N, +} (t, x) \|_{L^\infty ([0, T] \times (0, 1/2])} \\
& \leq C_M + 2 \pi C_v^2 \| f_0 \|_{L^\infty (\Omega \times \mathbb{R}^2)} (2C_v + 3C_v^2) \\
& \quad + (4 \pi C_v^2 C_1 + 2 \pi C_v^2 |\Psi (y_0)| ) (2+ 4C_v + 2C_v^2+3C_v^3) \| f_0 \|_{L^\infty (\Omega \times \mathbb{R}^2)} \\
& \leq C_T . \\
\end{split}
\end{equation}
where 
\begin{equation}
\begin{split}
C_T 
& := C_M + 2 \pi C_v^2 \| f_0 \|_{L^\infty (\Omega \times \mathbb{R}^2)} (1+ 2C_v + 3C_v^2) \\
& \quad + (4 \pi C_v^2 C_1 + 2 \pi C_v^2 |\Psi (y_0)| ) (2+ 4C_v + 2C_v^2+3C_v^3) \| f_0 \|_{L^\infty (\Omega \times \mathbb{R}^2)} 
\end{split}
\end{equation}
is a positive constant which only depends on the initial-boundary data and $T$, according to Lemma \ref{NS3-lm3.1} and Corollary \ref{NS3-lm3.1-cor}. In particular, $C_T$ is independent of $N$. The $t$-derivatives for the fields then follow from the Maxwell equations. Combining together all the estimates above, we complete the proof of the lemma.
\end{proof}

Combining Lemma \ref{NS3-Section2-lm} and Lemma \ref{NS3-lm4.1}, we obtain

\begin{lemma} \label{lm3}
For any $T>0$, there exists a constant $C >0$ (which only depends on the initial-boundary data and $T$, in particular, independent of $N$), such that for all $N$ large enough such that \eqref{cond-N} holds, 
\begin{equation}
 \|(E^N_1, E^N_2, B^N) \|_{C^1 ([0, T] \times \Omega) } \leq C ,  
\end{equation}
and by the same argument, there exists a constant $C' >0$ (which only depends on the initial-boundary data and $T$, in particular, independent of $N$), such that for all $N$ large enough such that \eqref{cond-N} holds, 
\begin{equation}
 \|(E^N_1, E^N_2, B^N) \|_{C^1 ([0, 2T] \times \Omega) } \leq C' .  
\end{equation}
By Arzela-Ascoli Theorem, there exists a subsequence of $(E^N_1, E^N_2, B^N)$ (still indexed by $N$) that converges strongly in $C^0 ( [0,T] \times \Omega )$.  
\end{lemma}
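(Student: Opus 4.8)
The strategy is that the two cited lemmas already contain all the hard estimates: Lemma~\ref{lm1} gives the uniform $L^\infty$ bound on $(E_1^N, E_2^N, B^N)$, and Lemma~\ref{NS3-lm4.1} gives the uniform bound $C_T$ on $\partial_x E_1^N, \partial_x E_2^N, \partial_x B^N$. To promote this to a full $C^1$ bound I only need to control the \emph{time} derivatives, and these are read off directly from the Maxwell system~\eqref{MaxwellBext}. First I would record the uniform bound on the current densities: since $\|f^N\|_{L^\infty}\le\|f_0\|_{L^\infty}$ (a fact already used in the proof of Lemma~\ref{NS3-lm4.1}), $\supp_v f^N\subset\overline D_{C_v}$ by Lemma~\ref{NS3-lm3.1}, and $|\hat v_k|\le 1$, one has
\[
\|j_1^N\|_{L^\infty([0,T]\times\Omega)},\ \|j_2^N\|_{L^\infty([0,T]\times\Omega)}\ \le\ \pi C_v^2\,\|f_0\|_{L^\infty(\Omega\times\mathbb{R}^2)},
\]
a bound depending only on the initial-boundary data and $T$ (through $C_v=k_0+C_1T$).

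Then from $\partial_t E_1^N=-j_1^N$, $\partial_t E_2^N=-\partial_x B^N-j_2^N$ and $\partial_t B^N=-\partial_x E_2^N$ I obtain
\[
\|\partial_t E_1^N\|_{L^\infty}\le \pi C_v^2\|f_0\|_{L^\infty},\qquad \|\partial_t E_2^N\|_{L^\infty}\le C_T+\pi C_v^2\|f_0\|_{L^\infty},\qquad \|\partial_t B^N\|_{L^\infty}\le C_T,
\]
all on $[0,T]\times\Omega$ and all independent of $N$. Combining these with Lemma~\ref{lm1} and the $\partial_x$-bounds of Lemma~\ref{NS3-lm4.1} yields the asserted constant $C=C(\text{data},T)$ with $\|(E_1^N,E_2^N,B^N)\|_{C^1([0,T]\times\Omega)}\le C$ for all $N$ satisfying~\eqref{cond-N}. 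The bound on $[0,2T]$ is obtained by running the identical argument with $T$ replaced by $2T$: Lemmas~\ref{lm1}, \ref{NS3-lm3.1} and~\ref{NS3-lm4.1} all hold verbatim on $[0,2T]$ with constants still independent of $N$, so one gets $C'=C'(\text{data},T)$ in the same way.

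For the last assertion, the uniform $C^1$ bound makes $\{(E_1^N,E_2^N,B^N)\}$ uniformly bounded and uniformly Lipschitz, hence equicontinuous, on the compact set $[0,T]\times\overline\Omega$ — the solutions are $C^1$ up to the boundary by the well-posedness theory of~\cite{NS3}, so the extension to the closure is harmless. Arzelà–Ascoli, applied componentwise and followed by passing to a diagonal subsequence, then extracts a subsequence, still indexed by $N$, converging uniformly on $[0,T]\times\overline\Omega$, and in particular in $C^0([0,T]\times\Omega)$.

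There is no genuinely new obstacle in this lemma: the delicate point — that the $x$-derivative estimates stay bounded as $N\to\infty$ despite $B_{ext,N}$ blowing up near $\partial\Omega$ — was already dispatched inside Lemma~\ref{NS3-lm4.1}, where the Glassey–Strauss splitting turns $\partial_x$ into transport plus a $v$-integration and the confinement estimate of Lemma~\ref{NS3-lm3.1} keeps $\supp_x f^N$ inside the region where $\psi_{ext,N}$ (and hence the contribution of $B_{ext,N}$ after integrating by parts) is uniformly bounded. The only thing requiring a moment's care here is to carry out the compactness argument on the closed interval $\overline\Omega=[0,1]$ rather than on the open $\Omega$, and to keep track that each constant introduced depends on the data and $T$ alone.
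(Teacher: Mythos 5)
Your proposal is correct and follows essentially the same route as the paper, which simply deduces the lemma from the uniform $L^\infty$ field bound \eqref{NS3-Section2} and the $\partial_x$ estimates of Lemma \ref{NS3-lm4.1}, repeats the argument on $[0,2T]$, and applies Arzel\`a--Ascoli. You in fact spell out the one step the paper leaves implicit --- bounding $\partial_t E_1^N$, $\partial_t E_2^N$, $\partial_t B^N$ through the Maxwell equations and the uniform current bounds --- which is exactly how the time derivatives are meant to be controlled.
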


\begin{proof}
The assertion directly follows from \eqref{NS3-Section2} and Lemma \ref{NS3-lm4.1}.
\end{proof}

On the other hand, for the sequence $\{f^N \}$ we have

\begin{lemma} \label{lm1}
The family $\{ f^N (t, x, v)\}$ is relatively compact in $weak^*-L^\infty (\mathbb{R}_+ \times \Omega \times \mathbb{R}^2)$. Therefore upon extracting subsequence, we have a limit $f$ of $\{f^N\}$ in $weak^*-L^\infty (\mathbb{R}_+ \times \Omega \times \mathbb{R}^2)$. 
\end{lemma}


\begin{proof}
We have $\| f^N \|_{L^\infty (\mathbb{R}_+ \times \Omega \times \mathbb{R}^2) } = \| f_0 \|_{L^\infty  (\mathbb{R}_+ \times \Omega \times \mathbb{R}^2)} $ by the property of the transport equation. Hence $\{ f^N (t, x, v)\}$ is relatively compact in $weak^*-L^\infty (\mathbb{R}_+ \times \Omega \times \mathbb{R}^2)$.
\end{proof}

Combining together Lemma \ref{lm1} and Lemma \ref{lm3}, we obtain

\begin{lemma} \label{lm4}
For each $N$, we consider a $C^1$ solution $(f^N, E^N_1, E^N_2, B^N)$ on $[0, T]$ to \eqref{VlasovBext}, \eqref{MaxwellBext}, with the initial-boundary conditions \eqref{boundarycondBext}. There exists a subsequence of $(f^N , E^N_1 , E^N_2 , B^N)$, such that $f^N$ converges to some $f$ in $weak^*-L^\infty ([0, T] \times \Omega \times \mathbb{R}^2)$, and $(E^N_1 , E^N_2 , B^N)$ converges to some $(E_1, E_2, B)$ strongly in $(C^0  ([0, T] \times \Omega ))^3$.
\end{lemma}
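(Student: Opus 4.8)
The plan is to obtain the result by combining the two compactness statements already in hand, Lemma \ref{lm1} and Lemma \ref{lm3}, through a nested subsequence extraction. First I would invoke Lemma \ref{lm3}: it rests on the uniform $L^\infty$ bound \eqref{NS3-Section2} together with the $N$-independent $C^1$ bound of Lemma \ref{NS3-lm4.1}, so the family $\{(E^N_1, E^N_2, B^N)\}$ is bounded in $(C^1([0,T]\times\Omega))^3$ uniformly in $N$. By the Arzel\`a--Ascoli theorem there is a subsequence, say indexed by $\{N_k\}$, along which $(E^{N_k}_1, E^{N_k}_2, B^{N_k})$ converges strongly in $(C^0([0,T]\times\Omega))^3$ to a limit $(E_1, E_2, B)$, which is moreover Lipschitz in $(t,x)$.

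Next I would restrict attention to this subsequence $\{N_k\}$ and apply Lemma \ref{lm1}. Since $\|f^{N_k}\|_{L^\infty([0,T]\times\Omega\times\mathbb{R}^2)} = \|f_0\|_{L^\infty(\Omega\times\mathbb{R}^2)}$ for every $k$ by the transport property of the Vlasov equation, the family $\{f^{N_k}\}$ is bounded in $L^\infty([0,T]\times\Omega\times\mathbb{R}^2)$, hence relatively compact in the $\mathit{weak}^*$ topology by Banach--Alaoglu together with the separability of $L^1$. I would then extract a further subsequence $\{N_{k_j}\}$ along which $f^{N_{k_j}} \rightharpoonup f$ in $\mathit{weak}^*$-$L^\infty([0,T]\times\Omega\times\mathbb{R}^2)$. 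Relabelling $\{N_{k_j}\}$ as $\{N\}$, both convergences hold along this single subsequence, which is exactly the assertion.

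Finally I would record the few properties of the limit that later sections use: $f\geq 0$ a.e.\ (a $\mathit{weak}^*$ limit of nonnegative functions is nonnegative), $\|f\|_{L^\infty}\leq\|f_0\|_{L^\infty}$, and $\supp f\subset\mathcal{K}$ with $\mathcal{K}$ the fixed compact set of Lemma \ref{NS3-lm3.1}, since by that lemma the supports of all the $f^N$ in $(x,v)$ lie in a compact set independent of $N$; in particular the $\mathit{weak}^*$-$L^\infty$ convergence automatically gives weak convergence in every $L^p$ and weak-$L^1$ convergence on $[0,T]\times\Omega\times\mathbb{R}^2$, as already noted in Lemma \ref{lm1}. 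There is no genuine obstacle in this lemma itself; the only point needing a word is that the two compactness mechanisms---Arzel\`a--Ascoli for the fields and Banach--Alaoglu for $f$---are applied in succession rather than simultaneously, which the nested extraction handles. The real content sits upstream, in the $N$-uniform $C^1$ estimate of Lemma \ref{NS3-lm4.1}, where the large factor $B_{ext,N}$ is absorbed using the confinement bound \eqref{NS3-lm3.1-psiextN} of Lemma \ref{NS3-lm3.1} and Corollary \ref{NS3-lm3.1-cor}.
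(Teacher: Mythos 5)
Your proposal is correct and follows the paper's own route exactly: Lemma \ref{lm4} is obtained in the paper by combining Lemma \ref{lm3} (Arzel\`a--Ascoli for the fields, resting on the $N$-uniform $C^1$ bound of Lemma \ref{NS3-lm4.1}) with Lemma \ref{lm1} (weak$^*$-$L^\infty$ compactness of $\{f^N\}$ from the transport bound $\|f^N\|_{L^\infty}=\|f_0\|_{L^\infty}$), via successive subsequence extraction just as you describe. The additional remarks on nonnegativity, the $L^\infty$ bound, and the compact support of the limit are consistent with what the paper uses later and introduce no gap.
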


In the rest of the paper, the notations $F = O(1/N)$, $F = O(1/(N \epsilon))$, and $F \lesssim 1/N$, $F \lesssim 1/(N\epsilon)$ mean that $\|F\|_{L^\infty} \leq C_F/N$ for some constant $C_F$ only depending on $\Psi$, $f_0$, $\|(E^N_1, E^N_2, B^N)\|_{C^1_{t, x} ([0, 2T] \times \Omega)}$, $T$ and possibly the test functions selected in the weak formulation of the RVM system (see Definition \ref{D:weak-Bext} and \ref{D:weak}). Notice that $\|(E^N_1, E^N_2, B^N)\|_{C^1_{t, x} ([0, 2T] \times \Omega)}$ are bounded by constants that only depends on the the initial-boundary data, $\Psi$ and $T$. Therefore actually the constant $C_F$ only depends on the initial-boundary data, $\Psi$, $T$ and the test functions involved in the weak formulation of the RVM system. The notations $F \gtrsim 1/N$, $F \gtrsim 1/(N \epsilon)$, $F = O(1)$, $F \lesssim 1$, $F \lesssim \epsilon$, $F \gtrsim 1$, $F \lesssim 1/N^2$, $F \lesssim 1/(N \epsilon^2)$, $F \lesssim 1/(N \epsilon^3)$, $F = O(1/(N \epsilon^3))$, etc. are defined similarly.


\section{Behavior of Trajectories near the Boundary without the Internal Fields} \label{ModelCase}




In this section, we consider a \emph{model} trajectory ODE system, in which we drop the internal fields, and prove that the external magnetic field $B_{ext,N}$ has a "reflective" effect on charged particles when the internal fields are absent. The model trajectory ODE is given as follows: 
(here $\dot{F}$ means $\frac{\partial F}{\partial s}$ for any function $F$)
\begin{equation}
\label{E:N-mock}
\left\{
\begin{aligned}
&\dot X_N = \hat V_{1,N} \\
&\dot {V}_{1,N} =  + \hat {V}_{2,N} B_{ext, N} (X_N)   \\
&\dot {V}_{2,N} =  - \hat {V}_{1,N} B_{ext, N} (X_N)  
\end{aligned}
\right.
\end{equation}


We fix $N$ such that \eqref{cond-N} holds. For $x \in \supp B_{ext, N}$ and any $t$, $v_1$, $v_2$ with $(t, x, v_1, v_2) \in \supp f^N$, consider the trajectory $(X_N (s), V_{1,N} (s), V_{2,N} (s))$ given by \eqref{E:N-mock} and takes the value $(x, v_1, v_2)$ at the time $s=t$. There exists a maximal time interval $I_0 = I_0 (t, x, v_1, v_2)$ that contains $t$, and on which $X_N (s; t, x, v_1, v_2)$ lies in $\supp B_{ext, N}$. We define a \emph{reflection time} $t^*$ for each $(t, x, v_1, v_2)$ (with $x \in \supp B_{ext, N}$, $t \in [0, T]$), which is the time at which $V_1$ changes from $v_1$ to $-v_1$.

\begin{lemma} \label{t1t2mapping-model}
Fix $\epsilon \in (0, 1)$. Let $t$, $x$, $v_1$, $v_2$, $(X_N, V_{1,N}, V_{2,N})$ and $I_0$ be as stated in the last paragraph above, and $x \in (0, \frac{1}{N} - \frac{1}{N} \epsilon] \cup [ 1- (\frac{1}{N} - \frac{1}{N} \epsilon), 1)$. Fix $t$, $x$, $v_1$, $v_2$, there exists a unique $t^*$ in the same interval $I_0$ such that 
\begin{equation} \label{E:t1t2mapping-model-eq0}
(X_N, V_{1,N}, V_{2,N}) (t;t,x,v_1, v_2) = (X_N, -V_{1,N}, V_{2,N}) (t^*;t,x,v_1, v_2) = ( x, v_1, v_2) .
\end{equation}
Moreover, $t^* -t$ only depends on $(x, v_1, v_2)$ and $|t^* -t| \lesssim  \frac{1}{N \epsilon}$. 
For any fixed $(x, v_1, v_2)$, $t \mapsto t^*$ as a function of $t$ is $C^\infty$ and invertible. The Jacobian of the inverse mapping $t^* \mapsto t$ is $|J_N |= | \frac{\partial t}{\partial t^*} | = |J_N (x, v_1, v_2)| =1$.  
\end{lemma}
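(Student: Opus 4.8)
The plan is to use the fact that the model system \eqref{E:N-mock} is autonomous and carries two conserved quantities, which reduce it to a one-dimensional conservative motion whose turning points encode the reflection. First I would check, by differentiating along \eqref{E:N-mock}, that $|V_N|$ is conserved --- $\frac{d}{ds}|V_N|^2 = 2B_{ext,N}(X_N)(V_{1,N}\hat V_{2,N}-V_{2,N}\hat V_{1,N}) = 0$ since $V_{1,N}\hat V_{2,N} = V_{2,N}\hat V_{1,N}$ (as $\hat V_N = V_N/\langle V_N\rangle$) --- and that $\mu := V_{2,N}+\psi_{ext,N}(X_N)$ is conserved, using $\psi_{ext,N}'=B_{ext,N}$. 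With data $(x,v_1,v_2)$ at $s=t$ this gives $|V_N(s)|\equiv|v|$, $\mu = v_2+\psi_{ext,N}(x)$, and $V_{2,N}(s)=\mu-\psi_{ext,N}(X_N(s))$, so, since $\dot X_N=\hat V_{1,N}=V_{1,N}/\langle v\rangle$, the whole orbit is determined by the scalar $X_N$, which solves the autonomous equation $\langle v\rangle^2\dot X_N^2 = h(X_N)$ with $h(y):=|v|^2-(\mu-\psi_{ext,N}(y))^2$ and the sign of $\dot X_N=\hat V_{1,N}$ fixing the branch. Since $V_{1,N}(s)^2 = h(X_N(s))$, whenever $X_N(s)=x$ one has $(V_{1,N},V_{2,N})(s)\in\{(v_1,v_2),(-v_1,v_2)\}$: the only point of the orbit over $x$ other than the initial one is the mirror point $(x,-v_1,v_2)$.

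Next I would carry out the standard turning-point analysis. On a component of $\supp B_{ext,N}$ adjacent to the boundary --- say $(0,1/N]$ near $x=0$, the piece near $x=1$ being symmetric --- $\psi_{ext,N}$ is a strictly decreasing bijection onto $[0,+\infty)$ (its derivative $B_{ext,N}$ has fixed sign, with $|B_{ext,N}|=N|b(N\cdot)|\ge Nc_0$ there), so $\{h\ge 0\}\cap(0,1/N]$ is $[x^-_*,1/N]$ or $[x^-_*,x^+_*]$, where $x^-_*=\psi_{ext,N}^{-1}(\mu+|v|)\in(0,x]$ always exists (as $\mu+|v|=v_2+|v|+\psi_{ext,N}(x)\ge 0$) and $x^+_*=\psi_{ext,N}^{-1}(\mu-|v|)$ exists precisely when $\mu-|v|\ge 0$. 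At $x^-_*$ one checks $V_{1,N}=0$ and $\dot V_{1,N}=\hat V_{2,N}B_{ext,N}>0$, so $X_N$ has a strict minimum and the orbit is reflected on the boundary side; on the interior side it either hits the turning point $x^+_*$ or crosses $x=1/N$ transversally and exits $\supp B_{ext,N}$. Because $h>0$ strictly between the turning points, $X_N$ is monotone there, so over the maximal interval $I_0$ each admissible position with $h>0$ is visited exactly twice, once with each sign of $V_{1,N}$; for our $x$ (with $v_1\ne 0$) this gives a unique $t^*\in I_0$, distinct from $t$, at which the orbit is at $(x,-v_1,v_2)$, which is the reflection time and makes \eqref{E:t1t2mapping-model-eq0} hold. (If $v_1=0$ then $x$ is itself a turning point and $t^*=t$; if the orbit is trapped and $I_0=\mathbb{R}$ one takes $t^*$ to be the nearest such time, still a canonical choice.)

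The remaining claims are then short. Since \eqref{E:N-mock} is autonomous, $(X_N,V_{1,N},V_{2,N})(s;t,x,v_1,v_2)=\Phi_{s-t}(x,v_1,v_2)$ for its flow $\Phi$, so $t^*-t$ equals the displacement time $\sigma$ with $\Phi_\sigma(x,v_1,v_2)=(x,-v_1,v_2)$, i.e. $t^*-t=\tau(x,v_1,v_2)$ depends only on $(x,v_1,v_2)$; hence for fixed $(x,v_1,v_2)$ the map $t\mapsto t^*=t+\tau(x,v_1,v_2)$ is an affine shift, which is $C^\infty$ and invertible with $\partial t^*/\partial t=1$, so $|J_N|=|\partial t/\partial t^*|=1$. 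For the size of $\tau$, separating variables in $\langle v\rangle\dot X_N=\pm\sqrt{h(X_N)}$ and substituting $u=\mu-\psi_{ext,N}(y)$ (so $|du|=|B_{ext,N}(y)|\,|dy|\ge Nc_0\,|dy|$) gives
\[
|\tau(x,v_1,v_2)|=2\int\frac{\langle v\rangle\,dy}{\sqrt{|v|^2-(\mu-\psi_{ext,N}(y))^2}}\le\frac{2\langle v\rangle}{Nc_0}\int_{-|v|}^{|v|}\frac{du}{\sqrt{|v|^2-u^2}}=\frac{2\pi\langle v\rangle}{Nc_0}\lesssim\frac1N,
\]
where the $y$-integral runs between $x$ and the relevant turning point(s) and $|v|\le C_v$ by Lemma \ref{NS3-lm3.1}; the substitution is exactly what absorbs the integrable square-root singularity at the turning point. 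I expect the main obstacle to be the bookkeeping in the middle step --- locating the turning points and verifying that over $I_0$ each position is visited with each sign of $V_{1,N}$ exactly once (so $t^*$ is well defined and unique), together with the degenerate cases $v_1=0$, $v=0$ and possibly trapped orbits --- while the conservation laws, the autonomy argument for the Jacobian, and the $O(1/N)$ estimate are routine once that picture is in place.
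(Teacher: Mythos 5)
Your proposal is correct, and it reaches the same conclusions through a genuinely different organization of the argument than the paper. The paper passes to polar coordinates $V_1=R\cos\Phi$, $V_2=R\sin\Phi$, observes $R$ is constant, and exploits that $\dot\Phi=-N\Psi'(NX)/\sqrt{1+R^2}$ has a fixed sign of size $\gtrsim N$ on $I_0$: the reflection time is then \emph{defined} as the first time the lifted angle reaches $3\pi-\Phi(t)$, the relation $X(t^*)=X(t)$ follows from the integrated identity $\Psi(NX)+R\sin\Phi=\mathrm{const}$ (which is exactly your conserved $\mu=V_2+\psi_{ext,N}(X)$ in disguise), and $|t^*-t|\lesssim 1/N$ comes from integrating $ds=-\frac{\sqrt{1+R^2}}{N}\,d\Phi/\Psi'(NX)$ over an angle interval of length at most $2\pi$. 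You instead work with the two invariants $|V|$ and $\mu$ directly, reduce to the one-dimensional motion $\langle v\rangle^2\dot X^2=h(X)$, and get existence/uniqueness of $t^*$ from a turning-point analysis of the effective potential, with the $O(1/N)$ bound obtained by integrating in the position variable and substituting $u=\mu-\psi_{ext,N}(y)$, which produces the same $2\pi\langle v\rangle/(Nc_0)$ constant; the Jacobian claim follows in both treatments from autonomy of the model system (the paper reads it off its explicit formula for $t^*-t$). What the paper's angle-monotonicity buys is a clean, automatic ordering and uniqueness of the return time (the lifted angle is strictly increasing, so the equation $\Phi(s)=3\pi-\Phi(t)$ has one solution), whereas your route requires the bookkeeping you flag about how many times each position is visited on $I_0$; on the other hand, your effective-potential picture makes the possible trapped orbits ($\mu\ge|v|$, interior turning point $x^+_*\le 1/N$, $I_0=\mathbb{R}$) and the degenerate cases $v_1=0$, $v=0$ explicit, a point the paper's literal "unique $t^*$ in $I_0$" statement glosses over and which your "nearest such time" convention handles in the same spirit as the paper's first-return definition. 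Both proofs are essentially the same computation seen through different parametrizations (velocity angle versus position), and either suffices for the later perturbative analysis.
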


\begin{flushleft}
\textit{Remark} We call $t^*$ the \emph{reflection time} corresponding to $(t, x, v_1, v_2)$. Notice that Lemma \ref{t1t2mapping-model} only concerns about the behavior of the particle trajectory on $I_0$.  
\end{flushleft}





\begin{proof}




Let us consider the boundary $x=0$. Dropping the $N$ subscript for $(X_N, V_{1,N}, V_{2,N})$ in this lemma and passing to polar coordinates for $V$:
$$V_1 = R \cos \Phi \,, \qquad V_2 = R \sin \Phi$$
Then we check that for a solution to \eqref{E:N-mock}
$$ \frac{d}{ds}(R^2) = 2 V_1 \dot{V}_1 + 2 V_2 \dot{V}_2 = 0 $$
by substituting the equations in \eqref{E:N-mock} for $\dot{V}_1$ and $\dot{V}_2$.  Thus $R$ is constant on $I_0$ and we find that \eqref{E:N-mock} becomes
\begin{equation}
\label{E:mock-2}
\left\{ 
\begin{aligned}
&\dot{X} = \frac{R \cos \Phi}{\sqrt{1+R^2}}\\
&\dot{\Phi} = - \frac{1}{\sqrt{1+R^2}} N  \Psi' (NX)
\end{aligned}
\right.
\end{equation}
Recall that $ \Psi'(Y) \lesssim - \epsilon   < 0$ for $Y = NX \in (0, 1- \epsilon]$. We have $\dot{\Phi}  > 0$.



Since the trajectory is in $C^1$ and $\dot{\Phi}  > 0$ when $s \in I_0$, $\Phi(s)$ evolves in the direction of increasing angle. Let us discuss first the case when $V_1(t)< 0$ (that is, $\Phi (t) \in (\pi/2, 3\pi/2)$). Let $s_{\text{turn}} :=  \min \{ s>t : \Phi (s) = \frac{3\pi}{2}\}$, whose existence is guaranteed by $\dot{\Phi}  > 0$. Since $\Phi$ keeps increasing, $s_{\text{turn}}$ is the unique time in $I_0$ such that $\Phi (s_{\text{turn}}) = 3\pi/2$, $V_1 (s_{\text{turn}}) = 0$, and hence $X$ reaches its minimum at $s = s_{\text{turn}}$. Continuing after $s_{\text{turn}}$, again due to $\dot{\Phi}  > 0$, there exists a unique $t^*$ defined by
$$t^* := \min \{ s>t : \Phi (s) =  3\pi- \Phi (t)\} $$ 
in $I_0$ such that $ \Phi (t^*) = 3\pi-\Phi(t)$. This gives a unique $t^*$ in the interval $I_0$ such that $ (V_{1}, V_{2}) (t;t,x,v_1, v_2) =  (-V_{1}, V_{2}) (t^*;t,x,v_1, v_2) $. Here we used the fact that $\sqrt{V_{1} (s)^2 + V_{2}(s)^2}  =  R(s) \equiv const.$ on $I_0$. After the time $t^*$, $ V_1  (s;t,x,v_1, v_2)  > 0$. Notice that the region $\supp_x B_{ext, N}$ is of size $O(\frac{1}{N})$, which tells us that when $N$ is large enough, the trajectory $(X  , V_1 , V_2  ) (s;t,x,v_1, v_2)$ can exit $\supp_x B_{ext, N}$ within a time period of order $O(\frac{1}{N})$. Therefore, $V_1 (s;t,x,v_1, v_2)$ can only change its sign once in $I_0$.




\begin{center}
\includegraphics[width=1\textwidth]{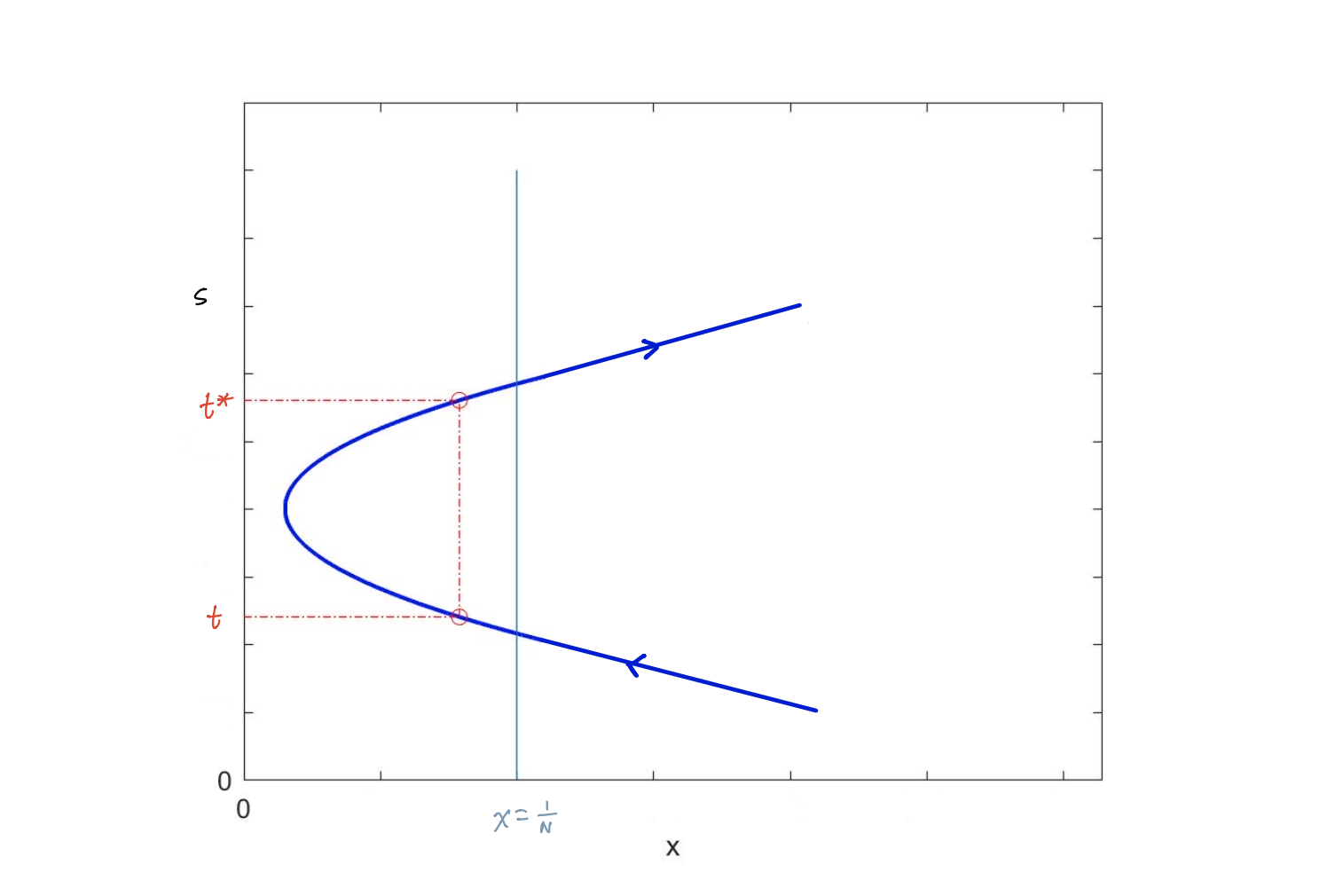}
\end{center}






Cross multiplying the two equations in \eqref{E:mock-2} yields
$$-  N\Psi'(NX) \dot{X} = R(\cos\Phi) \dot{\Phi} $$
and integrating yields
\begin{equation}
\label{E:mock-energy}
  \Psi(NX) - \Psi(NX_1) = R(\sin \Phi_1 - \sin \Phi)
\end{equation}
where $\Phi_1$ and $X_1$ denote the values of $\Phi$ and $X$ at time $t$, respectively. Let $\Phi_2$ and $X_2$ denote the value of $\Phi$ and $X$ at time $t^*$, respectively. It follows that $X_2 = X_1$ since $\Psi(NX)$ is monotone for $X \in (0, 1/N]$, and therefore
$$ (X, V_{1}, V_{2}) (t;t,x,v_1, v_2) = (X, -V_{1}, V_{2}) (t^*;t,x,v_1, v_2) . $$

On the other hand,
\begin{align*}
\dot{\Phi} &= \frac{-N}{\sqrt{1+R^2}}  \Psi'(NX)  \\
\end{align*}
implies
\begin{equation*}
\frac{-N}{\sqrt{1+R^2}} ds = \frac{1}{\Psi'(NX)} d \Phi . \\
\end{equation*}
Integrating yields
\begin{equation} 
\frac{-N}{\sqrt{1+R^2}} (t^* - t) = \int^{3\pi- \Phi_1}_{\Phi_1} \frac{1}{\Psi'(NX)} d \Phi , \\
\end{equation}
which gives
\begin{equation}  \label{E:mock-3}
 t^* - t = \frac{\sqrt{1+R^2}}{-N} \int^{3\pi- \Phi_1}_{\Phi_1} \frac{1}{\Psi'(NX)} d \Phi . \\
\end{equation}
Since $ \Psi'(Y) < 0$, $  |\Psi'(Y) | \gtrsim \epsilon$ for $Y = NX \in (0, 1 -\epsilon]$ and $|V(s)| \leq k_0 + C_1 T$ (Lemma \ref{NS3-lm3.1}), there holds
\begin{equation*}
|t^* - t| \lesssim \frac{1}{N} \sqrt{1+(k_0 + C_1 T)^2} \cdot 2 \pi  \frac{1}{\epsilon} = \frac{2 \pi}{N \epsilon} \sqrt{1+(k_0 + C_1 T)^2} \lesssim \frac{1}{N \epsilon} .
\end{equation*} 
From \eqref{E:mock-3} we learn that for fixed $(x, v_1, v_2)$ with $x \in (0, 1/N]$, $t^* -t$ only depends on $(x, v_1, v_2)$ and is of $O(\frac{1}{N \epsilon})$.
Hence the mapping $t \mapsto t^*$ is invertible. 
Moreover, we have 
$$ \frac{\partial t^*}{\partial t} =1   $$
so the Jacobian of the inverse mapping $ t^* \mapsto t$ is $|J_N | =| \frac{\partial t}{\partial t^*} |= |J_N (x, v_1, v_2)| =1$.


For the case $V_1 (t) > 0$ we define
$$t^* :=  \max \{ s<t : \Phi (s) =  3\pi- \Phi (t)\} $$
Then all the properties in the statement of the lemma hold for this $t^*$:
\begin{equation}  
(X, V_{1}, V_{2}) (t;t,x,v_1, v_2) = (X, -V_{1}, V_{2}) (t^*;t,x,v_1, v_2) = ( x, v_1, v_2) .
\end{equation}
Moreover, $t^* -t$ only depends on $(x, v_1, v_2)$ and $|t^* -t| \lesssim  \frac{1}{N \epsilon}$. 
For any fixed $(x, v_1, v_2)$, $t \mapsto t^*$ as a function of $t$ is $C^\infty$ and invertible. The Jacobian of the inverse mapping $ t^*  \mapsto t$ is $|J_N |= | \frac{\partial t}{\partial t^*} | = |J_N (x, v_1, v_2)| =1$. 

The case $V_1 (t) =0$ is trivial: We simply take $t^* = t$ and the properties in the statement of the lemma hold. 




For the boundary $x =1$ (that is, $x \in [ 1- (\frac{1}{N} - \frac{1}{N} \epsilon), 1)$), the mapping $t \mapsto t^*$ is defined similarly, making use of
$$ \psi_{ext, N} (x) = \Psi (N(1-x))  $$
for $x$ close to $1$. 

To summarize, we define $t^*$ as
$$t^*  := \min \{ s>t : \Phi (s) =  3\pi- \Phi (t)\} $$ 
when $V_{1}(t) < 0$, $x \in (0, \frac{1}{N} \epsilon]$ or $V_{1}(t) > 0$, $x \in [ 1- (\frac{1}{N} - \frac{1}{N} \epsilon), 1)$, and 
$$t^*  := \max \{ s<t : \Phi (s) =  3\pi- \Phi (t)\} $$
when $V_{1}(t) > 0$, $x \in (0, \frac{1}{N} \epsilon]$ or $V_{1}(t) < 0$, $x \in [ 1- (\frac{1}{N} - \frac{1}{N} \epsilon), 1)$. Then all the properties in the statement of the lemma hold. 

\end{proof}

From Lemma \ref{t1t2mapping-model}, we track the trajectory backwards in time and deduce

\begin{corollary} \label{cor-t1t2mapping-model}
Let $t$, $x$, $v_1$, $v_2$ be as in Lemma \ref{t1t2mapping-model}, then
\begin{equation} \label{E:cor-t1t2mapping-model-eq0}
(X_N, V_{1,N}, V_{2,N}) (0;t,x,v_1, v_2) = (X_N, V_{1,N}, V_{2,N}) (0;t^*,x,-v_1, v_2) . 
\end{equation}
\end{corollary}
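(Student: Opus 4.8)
\textbf{Proof proposal for Corollary \ref{cor-t1t2mapping-model}.}

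The plan is to exploit the time-translation invariance of the autonomous model system \eqref{E:N-mock}, together with the reflection identity \eqref{E:t1t2mapping-model-eq0} from Lemma \ref{t1t2mapping-model}. First I would observe that \eqref{E:N-mock} has no explicit $s$-dependence: the coefficients depend only on $X_N$ through $B_{ext,N}$. Hence if $(X_N(s), V_{1,N}(s), V_{2,N}(s))$ is the solution passing through $(x,v_1,v_2)$ at time $s=t$, then the time-shifted curve $s \mapsto (X_N(s+t-t^*), V_{1,N}(s+t-t^*), V_{2,N}(s+t-t^*))$ is again a solution of \eqref{E:N-mock}, and it passes through $(x, -v_1, v_2)$ at time $s = t^*$ by virtue of the identity $(X_N,-V_{1,N},V_{2,N})(t^*;t,x,v_1,v_2) = (x,v_1,v_2)$ established in Lemma \ref{t1t2mapping-model}. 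Wait — I must be careful here: flipping $v_1 \to -v_1$ is not just a time shift. The correct symmetry to invoke is that \eqref{E:N-mock} is invariant under the combined transformation $s \mapsto -s$, $V_1 \mapsto -V_1$, $V_2 \mapsto V_2$ (time reversal with $V_1$-reflection); checking this is a one-line substitution into the three ODEs. Combining this discrete symmetry with time-translation invariance is what produces \eqref{E:cor-t1t2mapping-model-eq0}.

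Concretely, I would argue as follows. Define the curve $Y(s) := (X_N(t+t^*-s;t,x,v_1,v_2),\ -V_{1,N}(t+t^*-s;t,x,v_1,v_2),\ V_{2,N}(t+t^*-s;t,x,v_1,v_2))$. Using time-translation invariance and the time-reversal/$V_1$-flip symmetry of \eqref{E:N-mock}, one checks directly that $Y(s)$ solves \eqref{E:N-mock}. At $s=t^*$ we get $Y(t^*) = (X_N(t;\dots), -V_{1,N}(t;\dots), V_{2,N}(t;\dots)) = (x,-v_1,v_2)$ since at $s=t$ the trajectory sits at its initial value $(x,v_1,v_2)$. Therefore, by uniqueness of solutions to the ODE \eqref{E:N-mock}, $Y(s)$ coincides with the trajectory $(X_N, V_{1,N}, V_{2,N})(s; t^*, x, -v_1, v_2)$ on the relevant interval. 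Evaluating both sides at $s = 0$ (which lies in the common interval of definition, since all trajectories under consideration are global on $[0,T]$ by Lemma \ref{NS3-lm3.1} and the fact that the particles stay in $\Omega$) gives
\begin{equation*}
(X_N, V_{1,N}, V_{2,N})(0; t^*, x, -v_1, v_2) = \big(X_N(t+t^*;t,x,v_1,v_2),\ -V_{1,N}(t+t^*;t,x,v_1,v_2),\ V_{2,N}(t+t^*;t,x,v_1,v_2)\big).
\end{equation*}
Then I would use \eqref{E:t1t2mapping-model-eq0} once more — applied at the base point $(t^*, x, -v_1, v_2)$, whose associated reflection time is $t$ — to identify the right-hand side: the identity says precisely that reflecting $V_1$ and running from $t^*$ to $t$ returns to the original state, so $(X_N(t+t^*;t,x,v_1,v_2),\dots)$ simplifies to $(X_N, V_{1,N}, V_{2,N})(0;t,x,v_1,v_2)$. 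Assembling these equalities yields \eqref{E:cor-t1t2mapping-model-eq0}.

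The main obstacle I anticipate is bookkeeping the argument shifts and the sign flip without error — in particular verifying carefully that the composed transformation really does map solutions to solutions (the time-reversal symmetry is where one could slip a sign) and that the reflection identity \eqref{E:t1t2mapping-model-eq0} can legitimately be re-centered at the base point $(t^*, x, -v_1, v_2)$. The latter is just the statement that the reflection-time relation is symmetric in its two endpoints, which is built into Lemma \ref{t1t2mapping-model} (it asserts the two triples are equal to each other and to $(x,v_1,v_2)$), but I would spell it out explicitly. A cleaner packaging, which I would probably adopt in the final write-up, is to phrase everything in terms of the flow map $\Phi^N_{s_2,s_1}$ of \eqref{E:N-mock}: the corollary then reduces to the group law $\Phi^N_{0,t^*} = \Phi^N_{0,t}\circ\Phi^N_{t,t^*}$ composed with the reflection identity $\Phi^N_{t,t^*}(x,-v_1,v_2) = (x,v_1,v_2)$, both of which are already in hand.
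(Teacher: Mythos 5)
Your proposal is correct in substance, and the ``cleaner packaging'' you sketch at the end is exactly the paper's own proof: by \eqref{E:t1t2mapping-model-eq0}, the trajectory of \eqref{E:N-mock} passing through $(x,v_1,v_2)$ at time $t$ also passes through $(x,-v_1,v_2)$ at time $t^*$; by uniqueness of solutions, the curve labeled by the data $(t^*,x,-v_1,v_2)$ is therefore the same curve, and evaluating the two labelings at $s=0$ gives \eqref{E:cor-t1t2mapping-model-eq0}. The longer route in the body of your argument --- building the time-reversed, $V_1$-flipped curve $Y(s)$ --- is valid (that combined symmetry does map solutions of \eqref{E:N-mock} to solutions), but it is an unnecessary detour: the symmetry is really the mechanism behind Lemma \ref{t1t2mapping-model} itself, and the corollary needs only the lemma plus uniqueness of trajectories. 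Moreover, your final step, where $\big(X_N(t+t^*;t,x,v_1,v_2),\,-V_{1,N}(t+t^*;t,x,v_1,v_2),\,V_{2,N}(t+t^*;t,x,v_1,v_2)\big)$ is ``simplified'' to $(X_N,V_{1,N},V_{2,N})(0;t,x,v_1,v_2)$, is not justified as written: citing \eqref{E:t1t2mapping-model-eq0} at the re-centered base point only tells you that $Y$ takes the value $(x,v_1,v_2)$ at time $t$; to pass to $s=0$ you must again invoke uniqueness ($Y$ agrees with the original trajectory at time $t$, hence coincides with it on the whole interval), which is precisely the one-line argument you could have used from the outset. Two minor points: the claim that the reflection time attached to $(t^*,x,-v_1,v_2)$ is $t$ is itself a consequence of this same uniqueness statement rather than something literally asserted in Lemma \ref{t1t2mapping-model}; and the model trajectories are defined back to $s=0$ because $|V_N|$ and $V_{2,N}+\psi_{ext,N}(X_N)$ are conserved along \eqref{E:N-mock}, not by Lemma \ref{NS3-lm3.1}, which concerns the system with the internal fields.
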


\begin{proof}
By \eqref{E:t1t2mapping-model-eq0}, we have
$$  (X_N, V_{1,N}, V_{2,N}) (t;t,x,v_1, v_2) = (X_N, V_{1,N}, V_{2,N}) (t;t^*,x,-v_1, v_2)  = ( x, v_1, v_2) . $$
Let $L_1$ denote the trajectory on which $(t,x,v_1, v_2)$ lies and $L_2$ denote the trajectory on which $(t^*,x, -v_1, v_2)$ lies. Since the trajectory taking the value $(x, v_1, v_2)$ at time $t$ is unique, we learn from the equality above that $L_1$ and $L_2$ are identical. Take the value of $L_1$ and $L_2$ at time $0$, we obtain
$$ (X_N, V_{1,N}, V_{2,N}) (0;t,x,v_1, v_2) = (X_N, V_{1,N}, V_{2,N}) (0;t^*,x,-v_1, v_2) . $$ 
\end{proof}



\section{Behavior of Trajectories near the Boundary with the Internal Fields}
\label{GeneralCase}





In this section, we analyze the behavior of the trajectory corresponding to \eqref{VlasovBext} near $\partial \Omega$. To this end, we make use of the results obtained in Section \ref{ModelCase}, which described the reflecting behavior of the \emph{model} trajectory near $\partial \Omega$.  

We fix $N$ such that \eqref{cond-N} holds. For $x \in \supp B_{ext, N}$ and any $t$, $v_1$, $v_2$ with $(t, x, v_1, v_2) \in \supp f^N$, consider the trajectory $(X_N (s), V_{1, N} (s), V_{2, N} (s))$ given by the following ODE system corresponding to \eqref{VlasovBext}: (here $\dot{F}$ means $\frac{\partial F}{\partial s}$)
\begin{equation}
\label{E:N-real}
\left\{
\begin{aligned}
&\dot X_N = \hat V_{1,N} \\
&\dot V_{1,N} =  E^N_1 (s, X_N) + \hat V_{2,N} B^N (s, X_N) + \hat V_{2,N} B_{ext, N} (X_N) \\
&\dot V_{2,N} =  E^N_2 (s, X_N) - \hat V_{1,N} B^N (s, X_N) - \hat V_{1,N} B_{ext, N} (X_N)
\end{aligned}
\right.
\end{equation}
and takes the value $(x, v_1, v_2)$ at the time $s=t$. Let $I_0 = I_0 (t, x, v_1, v_2)$ denote the maximal time interval which contains $t$ and on which the trajectory $X_N (s; t, x, v_1, v_2)$ stays in $\supp B_{ext, N}$.

Now, we turn off the internal electromagnetic field in the trajectory described above for the part when $s \in I_0$ and denote the corresponding trajectory as $(\Xd_N, \Vd_{1, N}, \Vd_{2, N})$. That is, let $(\Xd_N (s), \Vd_{1, N} (s), \Vd_{2, N} (s))$ be a trajectory that also takes the value $(x, v_1, v_2)$ at the time $s=t$, determined by the following ODE system:
\begin{equation}
\label{E:N-mock-1}
\left\{
\begin{aligned}
&\dXd_N = \hVd_{1,N} \\
&\dVd_{1,N} = \big[ E^N_1 (s, \Xd_N) + \hVd_{2,N} B^N (s, \Xd_N) \big] \mathbf{1}_{s \notin I_0} (s)  + \hVd_{2,N} B_{ext, N} (\Xd_N)    \\
&\dVd_{2,N} = \big[ E^N_2 (s, \Xd_N) - \hVd_{1,N} B^N (s, \Xd_N) \big]  \mathbf{1}_{s \notin I_0} (s)  - \hVd_{1,N} B_{ext, N} (\Xd_N)  
\end{aligned}
\right.
\end{equation}
Let $\Id_0 = \Id_0 (t, x, v_1, v_2)$ denote the maximal interval which contains $t$ and on which the trajectory $\Xd_N (s; t, x, v_1, v_2)$ stays in $\supp B_{ext, N}$.

For $(\Xd_N, \Vd_{1, N}, \Vd_{2, N})$, $x \in \supp B_{ext, N}$ and $(t, x, v_1, v_2) \in \supp f^N$, Lemma \ref{t1t2mapping-model} and Corollary \ref{cor-t1t2mapping-model} still apply: 
There exists a unique reflection point $t^* \in \Id_0$, which is defined by 
$$t^*  :=  \min \{ s>t : \Phi (s) =  3\pi- \Phi (t)\} $$ 
when $\Vd_{1,N}(t) < 0$, $x \in (0, 1/N]$ or $\Vd_{1,N}(t) > 0$, $x \in [1-1/N, 1)$, and 
$$t^*  := \max \{ s<t : \Phi (s) =  3\pi- \Phi (t)\} $$
when $\Vd_{1,N}(t) > 0$, $x \in (0, 1/N]$ or $\Vd_{1,N}(t) < 0$, $x \in [1-1/N, 1)$. The reflection point $t^*$ satisfies that $ (\Xd_N, \Vd_{1,N}, \Vd_{2,N}) (t;t,x,v_1, v_2) = (\Xd_N, -\Vd_{1,N}, \Vd_{2,N}) (t^*;t,x,v_1, v_2) = ( x, v_1, v_2) $,  
and we have
\begin{equation} \label{t1t2mapping-real-error-eq0}
(\Xd_N, \Vd_{1,N}, \Vd_{2,N}) (0;t,x,v_1, v_2) = (\Xd_N, \Vd_{1,N}, \Vd_{2,N}) (0;t^*,x,-v_1, v_2) . 
\end{equation}


Recall that (without loss of generality) we assume $N$ is large enough such that 
$$dist (\supp_x f_0 (x, v), \supp_x B_{ext, N}(x) ) >0 . $$
Hence for each $(t, x, v_1, v_2)$ with $x \in \supp_x B_{ext, N}$, $t \geq 0$ and $f^N ( t, x, v_1, v_2) = f_0 ((X_N, V_N) (0; t, x, v_1, v_2)) \neq 0$, there must hold $t > 0$, and moreover $t^* > 0 $ because 
$$ (\Xd_N, \Vd_{1,N}, \Vd_{2,N}) (t;t,x,v_1, v_2) = (\Xd_N, \Vd_{1,N}, \Vd_{2,N}) (t;t^*,x,-v_1, v_2) . $$
Furthermore, we assume $N \geq 8 $ is large enough, so for each $(t, x, v_1, v_2)$ with $x \in \supp_x B_{ext, N}$, $t \in [0, T]$, $f^N ( t, x, v_1, v_2) \neq 0$, there holds $t^* (t, x, v_1, v_2) \in [0, 2T]$. 




\begin{lemma} \label{t1t2mapping-real-error}
Fix $\epsilon \in (0, 1)$. For any $x \in (0, \frac{1}{N} - \frac{1}{N} \epsilon] \cup [ 1- (\frac{1}{N} - \frac{1}{N} \epsilon), 1)$
and any $t$, $v_1$, $v_2$ with $(t, x, v_1, v_2) \in \supp f^N $, denote $\zeta := (x, v_1, v_2)$ and consider the path $(X_N, V_{1,N}, V_{2,N})$ given by \eqref{E:N-real} which takes the value $\zeta = (x, v_1, v_2)$ at time $t$.
Let $t^* = t^* (t, x, v_1, v_2)$ be as defined above. \\
1) There holds 
\begin{equation} \label{t1t2mapping-real-error-eq1}  
|(  -V_{1,N}, V_{2,N}) (t^*;t, \zeta) - (  V_{1,N}, V_{2,N}) (t;t, \zeta) | \lesssim \frac{1}{N \epsilon} , \   |X_N (t^*; t,\zeta) -  X_N  ( t;t, \zeta)  | \lesssim \frac{1}{N^2 \epsilon} . 
\end{equation}
2) Take 
$$\tilde{x}  :=  X_N (t^*;t, \zeta), \ \tilde{v}_1  := -V_{1,N} (t^*; t,\zeta), \ \tilde{v}_2  := V_{2,N} (t^*;t,\zeta) , $$
and denote $\tilde{\zeta}  := (\tilde{x}, -\tilde{v}_1, \tilde{v}_2)$, \eqref{t1t2mapping-real-error-eq1} can be equivalently written as 
$$|\tilde{x} - x| \lesssim \frac{1}{N^2 \epsilon} , \ |\tilde{v}_1 - v_1 | \lesssim \frac{1}{N \epsilon}, \ |\tilde{v}_2 - v_2 | \lesssim \frac{1}{N \epsilon}. $$
Moreover, going backwards in time, we have
\begin{equation}  \label{t1t2mapping-real-error-eq2}
\begin{split}
&  X_N  (0;t,\zeta) = X_N (0;t^*,\tilde{\zeta} ), \\
& V_{1,N} (0;t,\zeta) = V_{1,N} (0;t^*,\tilde{\zeta} )  , \\
& V_{2,N} (0;t,\zeta) = V_{2,N} (0;t^*,\tilde{\zeta} ) . \\
\end{split}
\end{equation}
3) Fix $t \in [0, T]$, the Jacobian of the inverse mapping $ (\tilde{x}, \tilde{v}_1 , \tilde{v}_2) \mapsto (x, v_1, v_2) $ (denoted by $|\mathcal{J}_N| = |\frac{\partial (x, v_1, v_2)}{\partial (\tilde{x}, \tilde{v}_1 , \tilde{v}_2)}| $) satisfies
\begin{equation}
\big||\mathcal{J}_N| -1 \big| \lesssim \frac{1}{N \epsilon^3} . 
\end{equation} 
The constants in the $\lesssim$'s in this lemma only depends on $\Psi$, $f_0$, $\|(E^N, B^N)\|_{C^1_{t, x} ([0, 2T] \times \Omega)}$, $C_v$ and $T$, and therefore only depends on the initial-boundary data, $\Psi$ and $T$ (see Section \ref{SectionWL}).
\end{lemma}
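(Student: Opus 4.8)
The strategy is to compare the true trajectory $(X_N,V_{1,N},V_{2,N})$ solving \eqref{E:N-real} with the field-free trajectory $(\Xd_N,\Vd_{1,N},\Vd_{2,N})$ solving \eqref{E:N-mock-1}, exploiting that on the short interval $I_0$ (of length $O(1/N)$ by Lemma \ref{t1t2mapping-model}) the two ODEs differ only by the internal-field terms $E^N_1+\hat v_2 B^N$ and $E^N_2-\hat v_1 B^N$, which are bounded by $2C_1$ (Lemma \ref{NS3-Section2}), while the common dominant term $\hat v_i B_{ext,N}$ is of size $\sim N$ near $\partial\Omega$. First I would show that the true trajectory also stays in $\supp B_{ext,N}$ for a time $O(1/N)$ and that its reflection time $t^*$ (defined via the model trajectory $(\Xd_N,\Vd_{1,N},\Vd_{2,N})$) lies in $I_0$; this is essentially built into the setup preceding the lemma. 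Then, for part 1), I would run a Gr\"onwall/Duhamel estimate on the difference $(X_N-\Xd_N, V_{1,N}-\Vd_{1,N},V_{2,N}-\Vd_{2,N})$ over an interval of length $|t^*-t|\lesssim 1/N$: the right-hand side forcing is the internal field ($O(1)$) plus Lipschitz-in-$(X,V)$ contributions from $B^N$ (with $\|\partial_x B^N\|_\infty\le C_T$ by Lemma \ref{NS3-lm4.1}) and from $B_{ext,N}$ (whose Lipschitz constant is $\sim N^2$, but multiplied by the velocity difference). The key point is that integrating an $O(1)$ forcing over a time-window $O(1/N)$ gives an $O(1/N)$ deviation in $V$, hence an $O(1/N^2)$ deviation in $X$; one must check the $B_{ext,N}$-Lipschitz term does not destroy this — but since $|X_N-\Xd_N|\lesssim 1/N^2$ keeps $N(X_N)$ and $N(\Xd_N)$ within $O(1/N)$ of each other, the bound $|\Psi'(NX_N)-\Psi'(N\Xd_N)|\lesssim N\cdot N^{-1}=O(1)$, so that contribution is again $O(1)$ forcing over an $O(1/N)$ window. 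Combining with \eqref{E:t1t2mapping-model-eq0} (which gives $(\Xd_N,-\Vd_{1,N},\Vd_{2,N})(t^*)=(x,v_1,v_2)$) yields \eqref{t1t2mapping-real-error-eq1}. A cleaner alternative is to integrate the ``reflective invariant'' $p(\tau,y,w)=w_2+\psi^N(\tau,y)+\psi_{ext,N}(y)$ (as in the proof of Lemma \ref{NS3-lm3.1}) along the true trajectory between $t$ and $t^*$: since $\frac{d}{ds}p = E^N_2(s,1/2)$ there, one gets $|V_{2,N}(t^*)-V_{2,N}(t)+\psi_{ext,N}(X_N(t^*))-\psi_{ext,N}(X_N(t))|\lesssim 1/N$, and combined with conservation-of-$|V|^2$-up-to-$O(1/N)$ (from $\frac{d}{ds}|V_N|^2=2V_N\cdot E^N$) this pins down both $|V_{1,N}(t^*)+v_1|$ and $|X_N(t^*)-x|$.

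For part 2), the displayed reformulation is pure bookkeeping once \eqref{t1t2mapping-real-error-eq1} is in hand. The backward-in-time identity \eqref{t1t2mapping-real-error-eq2} is the one subtle point: the \emph{true} trajectory through $(t,\zeta)$ and the \emph{true} trajectory through $(t^*,\tilde\zeta)$ are \emph{not} literally the same curve (unlike in Corollary \ref{cor-t1t2mapping-model}, where the model trajectory reflects exactly). Here I would argue as follows: by construction $\tilde\zeta=(\tilde x,-\tilde v_1,\tilde v_2)=(X_N(t^*;t,\zeta),V_{1,N}(t^*;t,\zeta),V_{2,N}(t^*;t,\zeta))$, so the point $(t^*,\tilde\zeta)$ lies \emph{on the very trajectory} $(X_N,V_{1,N},V_{2,N})(\cdot;t,\zeta)$ — it is that trajectory evaluated at the later time $t^*$. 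By uniqueness of solutions to \eqref{E:N-real}, the trajectory through $(t^*,\tilde\zeta)$ \emph{is} the trajectory through $(t,\zeta)$, just re-parametrized; hence they agree at time $0$, which is exactly \eqref{t1t2mapping-real-error-eq2}. (The role of the model trajectory here is only to \emph{define} the time $t^*$; $\tilde\zeta$ is then read off from the true flow, so no approximation enters \eqref{t1t2mapping-real-error-eq2} — it is an exact identity.)

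For part 3), I would compute the Jacobian of $(x,v_1,v_2)\mapsto(\tilde x,\tilde v_1,\tilde v_2)$ by writing this map as the composition: flow from time $t$ forward to time $t^*=t^*(x,v_1,v_2)$ under \eqref{E:N-real}, followed by $v_1\mapsto -v_1$. The flow map of an ODE has Jacobian $\exp\!\big(\int_t^{t^*}(\nabla\cdot\text{vector field})\,ds\big)$ by Liouville's formula; the divergence of the right-hand side of \eqref{E:N-real} in $(x,v_1,v_2)$ is $\partial_x\hat v_1 + \partial_{v_1}(E^N_1+\hat v_2(B^N+B_{ext,N})) + \partial_{v_2}(E^N_2-\hat v_1(B^N+B_{ext,N})) = \partial_{v_1}(\hat v_2)(B^N+B_{ext,N}) - \partial_{v_2}(\hat v_1)(B^N+B_{ext,N})$. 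Crucially the two large $B_{ext,N}$ terms \emph{cancel} exactly because $\partial_{v_1}\hat v_2 = \partial_{v_2}\hat v_1 = -v_1v_2\la v\ra^{-3}$, leaving only a term involving $B^N$ (bounded by $C_1$) times $\nabla_v\hat v$-type factors, which is $O(1)$; integrated over $|t^*-t|\lesssim 1/N$ this gives $|\int\nabla\cdot\,ds|\lesssim 1/N$, so the flow Jacobian is $1+O(1/N)$. The reflection $v_1\mapsto -v_1$ contributes a factor $-1$, and one must also account for the dependence of the endpoint $t^*$ on $(x,v_1,v_2)$ — but since the vector field is $O(1)$ in $(v_1,v_2)$ and $O(N)$ only in the $\dot V$ components along a window of length $O(1/N)$, while the model-Jacobian for the pure $t$-variable was exactly $1$, the correction from $\partial t^*/\partial(x,v_1,v_2)$ enters multiplied by the velocity, which is again controlled; assembling these gives $\big||\mathcal J_N|-1\big|\lesssim 1/N$. \textbf{Main obstacle.} The genuinely delicate step is the Gr\"onwall estimate in part 1): one must carefully track that every place where the large field $B_{ext,N}$ (size $N$) or its $x$-derivative (size $N^2$) appears, it is always multiplied by a quantity that is itself small — a velocity difference, or a position difference of size $1/N^2$ — so that the net forcing stays $O(1)$ on the $O(1/N)$ window; getting the bootstrap self-consistent (assuming $|X_N-\Xd_N|\lesssim 1/N^2$ to prove it) is where the real work lies, and the invariant-$p$ argument is the cleanest route around it.
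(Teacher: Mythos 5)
Your parts 1) and 2) are essentially the paper's argument. For 1), your Gr\"onwall comparison of the true trajectory \eqref{E:N-real} with the field-free trajectory \eqref{E:N-mock-1} over the $O(1/N)$ window, with the bookkeeping that every occurrence of $B_{ext,N}\sim N$ or $\partial_x B_{ext,N}\sim N^2$ is paired with a velocity difference $\lesssim 1/N$ or a position difference $\lesssim 1/N^2$, is exactly what the paper does; the paper sidesteps your bootstrap by rescaling $Y=NX$, $\sigma=N(s-t)$, after which a single standard Gr\"onwall on $[0,N(t^*-t)]$ (length $O(1)$, forcing $O(1/N)$, Lipschitz constants controlled by $\Psi'(y_0),\Psi''(y_0)$ via Corollary \ref{NS3-lm3.1-cor}) closes the estimate with no self-consistency issue. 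Your part 2) argument (that $(t^*,\tilde\zeta)$ lies on the very trajectory through $(t,\zeta)$, so \eqref{t1t2mapping-real-error-eq2} is an exact identity by uniqueness) is precisely the paper's. Two side remarks: your ``cleaner alternative'' via the invariant $p$ does not by itself pin down the conclusion (the two scalar relations $\frac{d}{ds}p=E_2^N(s,\tfrac12)$ and $\frac{d}{ds}|V_N|^2=2V_N\cdot E^N$ are also consistent with no reflection at all, so they cannot identify $V_{1,N}(t^*)\approx -v_1$ without re-importing the comparison to the model trajectory); and in your divergence computation the $B^N$ term cancels as well, since $\partial_{v_1}\hat v_2=\partial_{v_2}\hat v_1$, so the divergence is exactly zero — harmless, but note it.

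The genuine gap is in part 3). Liouville's formula only gives that the \emph{fixed-endpoint-time} flow map is volume preserving; the map $(x,v_1,v_2)\mapsto(\tilde x,\tilde v_1,\tilde v_2)$ evaluates the flow at the \emph{variable} time $t^*(x,v_1,v_2)$, and the resulting correction is not small by size estimates alone. Writing $G(\zeta)=\Phi_{t\to s}(\zeta)\big|_{s=t^*(\zeta)}$ one has $DG=D_\zeta\Phi_{t\to s}\big|_{s=t^*}+\partial_s\Phi\big|_{s=t^*}\otimes\nabla_\zeta t^*$, where the endpoint velocity of the flow has components $\dot V_{i,N}(t^*)\sim N$ (since $|B_{ext,N}|\sim N$ there) while $\partial_x t^*=O(1)$; so the rank-one correction has $O(N)$ entries, and $D_\zeta\Phi$ itself has $O(N)$ entries in unrescaled variables. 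By the matrix determinant lemma the determinant picks up a factor $1+\nabla_\zeta t^*\cdot (D\Phi)^{-1}\partial_s\Phi(t^*)$, and the naive bound on this correction is $O(1)$ (e.g.\ $\partial_x t^*\cdot\hat v_1$, or $\partial_{v_1}t^*\cdot O(N)$), not $O(1/N)$. The statement $\big||\mathcal J_N|-1\big|\lesssim 1/N$ therefore rests on an exact cancellation coming from the reflection structure: for the internal-field-free dynamics the composite map (flow to $t^*$, then $v_1\mapsto -v_1$) is \emph{exactly the identity}, so all these individually large pieces cancel, and one must perturb off that. This is what the paper's proof does: it differentiates both \eqref{E:N-real-rescaled} and \eqref{E:N-mock-rescaled} with respect to the rescaled initial data $(y,w_1,w_2)$, uses that the dagger derivatives at $\sigma=N(t^*-t)$ equal $1$ (diagonal) and $0$ (off-diagonal), and Gr\"onwalls the difference of the two variational systems to get each entry of $\frac{\partial(\tilde y,\tilde w_1,\tilde w_2)}{\partial(y,w_1,w_2)}$ within $O(1/N)$ of the identity. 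Your sketch (``the correction from $\partial t^*/\partial(x,v_1,v_2)$ enters multiplied by the velocity, which is again controlled'') does not supply this cancellation, so as written part 3) does not close; it can be repaired by replacing the Liouville argument with the comparison-to-model variational estimate.
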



\begin{center}
\includegraphics[width=1\textwidth]{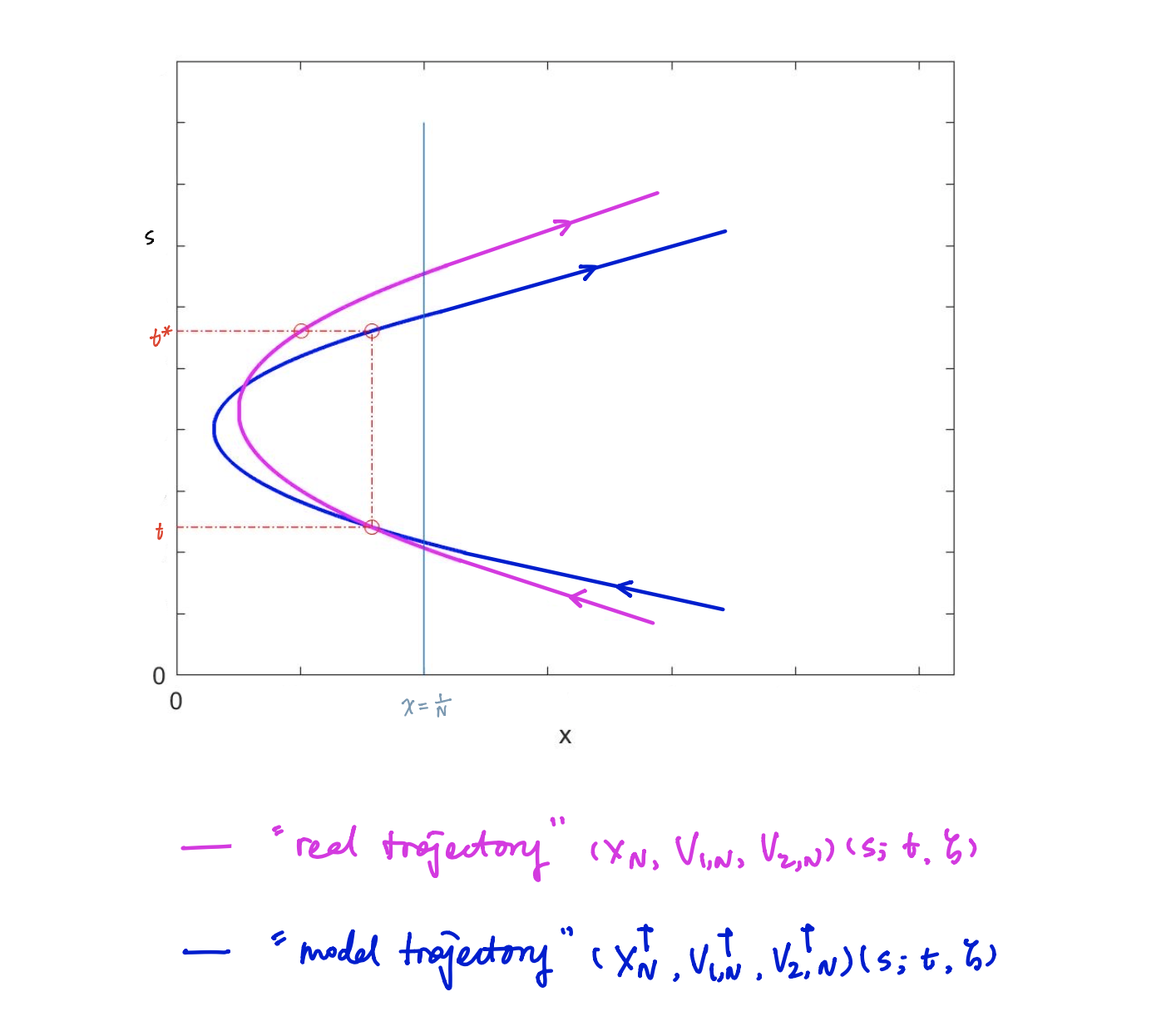}
\end{center}




\begin{proof}
It suffices to consider the boundary $x=0$ and the corresponding region $\{ x: dist(x, \partial \Omega) \leq \frac{1}{N}\}$ since the boundary $x=1$ is similar. Notice that $|t^* - t | \lesssim \frac{1}{N \epsilon}$ for all $x \in (0, \frac{1}{N} - \frac{1}{N} \epsilon] \cup [ 1- (\frac{1}{N} - \frac{1}{N} \epsilon), 1)$ and any $t$, $v_1$, $v_2$ with $(t, x, v_1, v_2) \in \supp f^N $ (Lemma \ref{t1t2mapping-model}).

We first prove 1). 
Our strategy is to compare the values of the two trajectories $(X_N, V_{1,N}, V_{2,N})(s; t, \zeta)$ and $(\Xd_N, \Vd_{1,N}, \Vd_{2,N})(s; t, \zeta)$ at the time $s=t^*$ (here $(\Xd_N, \Vd_{1,N}, \Vd_{2,N})$ is defined as stated in the beginning of the section). We already know from the assumptions and Lemma \ref{t1t2mapping-model} that    
\begin{equation}  \label{t1t2mapping-real-error-eq5}  
 (X_N, V_{1,N}, V_{2,N}) (t; t, \zeta) = (\Xd_N, \Vd_{1,N}, \Vd_{2,N}) (t; t, \zeta) = (\Xd_N, -\Vd_{1,N}, \Vd_{2,N}) ( t^*; t, \zeta ) . 
\end{equation}


We introduce the rescaling
\begin{equation}
\begin{split}
& X_N(s) = N^{-1}Y_N(N(s-t)), \ \Xd_N(s) = N^{-1}\Yd_N(N(s-t)) , \\
& V_{1,N}(s) = W_{1,N}(N(s-t)), \ \Vd_{1,N}(s) = \Wd_{1,N}(N(s-t)) , \\
& V_{2,N}(s) = W_{2,N}(N(s-t)) , \ \Vd_{2,N}(s) = \Wd_{2,N}(N(s-t)) , \\
\end{split}
\end{equation}
and let $\sigma=N(s-t)$.
\eqref{E:N-real} becomes 
\begin{equation}
\label{E:N-real-rescaled}
\left\{
\begin{aligned}
& \frac{dY_N}{d\sigma}   = \hat W_{1,N} \\
& \frac{dW_{1,N}}{d \sigma} = \frac{1}{N} \big[ E^N_1 (N^{-1} \sigma +t, N^{-1} Y_N) + \hat W_{2,N} B^N (N^{-1} \sigma +t, N^{-1} Y_N) \big]  + \hat W_{2,N} \partial_y \Psi(Y_N) \\
& \frac{dW_{2,N}}{d \sigma} = \frac{1}{N} \big[ E^N_2 (N^{-1} \sigma +t, N^{-1} Y_N) - \hat W_{1,N} B^N (N^{-1} \sigma +t, N^{-1} Y_N) \big]   - \hat W_{1,N} \partial_y \Psi(Y_N)
\end{aligned}
\right.
\end{equation}
and \eqref{E:N-mock-1} becomes
\begin{equation}
\label{E:N-mock-rescaled-origin}
\left\{
\begin{aligned}
& \frac{d\Yd_N}{d\sigma} = \hWd_{1,N} \\
& \frac{d\Wd_{1,N}}{d\sigma} = \frac{1}{N} \big[ E^N_1 (N^{-1} \sigma +t, N^{-1} \Yd_N) + \hWd_{2,N} B^N (N^{-1} \sigma +t, N^{-1} \Yd_N) \big] \mathbf{1}_{\{\sigma: s = N^{-1} \sigma +t \notin I_0\}}  \\
& \quad \quad \qquad + \hWd_{2,N} \partial_y \Psi(\Yd_N) \\
& \frac{d\Wd_{2,N}}{d\sigma} = \frac{1}{N} \big[ E^N_2 (N^{-1} \sigma +t, N^{-1} \Yd_N) - \hWd_{1,N} B^N (N^{-1} \sigma +t, N^{-1} \Yd_N) \big] \mathbf{1}_{\{\sigma: s = N^{-1} \sigma +t \notin I_0\}}  \\
& \quad \quad \qquad - \hWd_{1,N} \partial_y \Psi(\Yd_N)
\end{aligned}
\right.
\end{equation}
In the time interval $\{ \sigma: s = N^{-1} \sigma +t \in I_0 \}$, the indicator function in \eqref{E:N-mock-rescaled-origin} is $0$. Hence in this interval \eqref{E:N-mock-rescaled-origin} becomes
\begin{equation}
\label{E:N-mock-rescaled}
\left\{
\begin{aligned}
& \frac{d\Yd_N}{d\sigma} = \hWd_{1,N} \\
& \frac{d\Wd_{1,N}}{d\sigma}  =  + \hWd_{2,N} \partial_y \Psi(\Yd_N) \\
& \frac{d\Wd_{2,N}}{d\sigma}  =  - \hWd_{1,N} \partial_y \Psi(\Yd_N)
\end{aligned}
\right.
\end{equation}
Now fix $t$ and $\zeta= (x, v_1, v_2)$. For any $s \in \mathbb{R}$ and $\sigma=N(s-t)$, let
\begin{equation}
\begin{split}
& Y_N (\sigma) = NX_N (s; t, \zeta), \ W_{1,N} (\sigma) = V_{1,N} (s; t, \zeta), \ W_{2,N} (\sigma) = V_{2,N} (s; t, \zeta), \\
& \Yd_N (\sigma) = N\Xd_N (s; t, \zeta), \ \Wd_{1,N} (\sigma) = \Vd_{1,N} (s; t, \zeta), \ \Wd_{2,N} (\sigma) = \Vd_{2,N} (s; t, \zeta). \\
\end{split}
\end{equation} 
We denote
$$y = N x , \ w_1 = v_1, \ w_2 = v_2, \ \tilde{y} = N \tilde{x}, \ \tilde{w}_1 = \tilde{v}_1, \ \tilde{w}_2 = \tilde{v}_2. $$
(see the picture below)
\begin{center}
\includegraphics[width=1\textwidth]{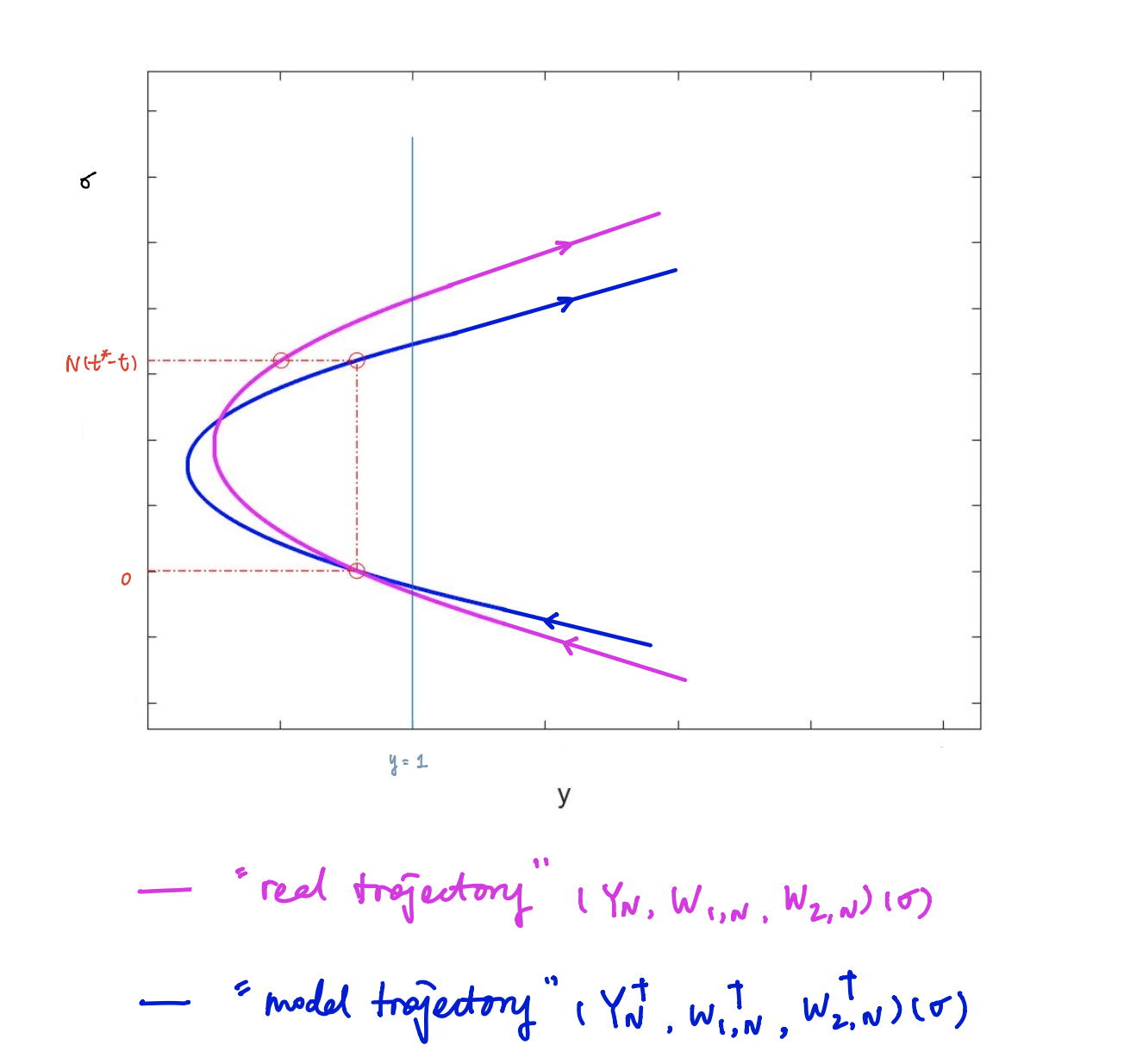}
\end{center}
By the definitions of $(Y_N, W_{1,N}, W_{2,N})$, $(\Yd_N, \Wd_{1,N}, \Wd_{2,N})$, $\sigma$ and $\tau^*$, together with the values of $(X_N, V_{2,N}, V_{2,N})$ and $(\Xd_N, \Vd_{2,N}, \Vd_{2,N})$ at time $t$ and $t^*$, we have
\begin{equation}  \label{t1t2mapping-real-error-eq13} 
\begin{split}
& Y_N (0) = \Yd_N (0)= \Yd_N (N(t^*-t)) = y , \ W_{1,N}(0) = \Wd_{1,N}(0) = -\Wd_{1,N}(N(t^*-t)) = w_1, \\
& W_{2,N}(0) = \Wd_{2,N}(0) =\Wd_{2,N}(N(t^*-t))= w_2 , \\
& Y_N (N(t^*-t)) =  \tilde{y},  \ W_{1,N}(N(t^*-t)) = - \tilde{w}_1 ,  \ W_{2,N}(N(t^*-t)) = \tilde{w}_2 . \\
\end{split}  
\end{equation}

From now on in this section we drop the subscript $N$ in $(Y_N, W_{1,N}, W_{2,N})$ and $(\Yd_N, \Wd_{1,N}, \Wd_{2,N})$ for simplicity. 

Notice that \eqref{t1t2mapping-real-error-eq1} is equivalent to 
\begin{equation} \label{t1t2mapping-real-error-eq15}  
|Y (N(t^*-t)) - Y(0)| \lesssim \frac{1}{N \epsilon} , \ |W_1(N(t^*-t)) + W_1 (0) | \lesssim \frac{1}{N \epsilon} , \ |W_2(N(t^*-t)) -W_2 (0) | \lesssim \frac{1}{N \epsilon} .
\end{equation} 
By \eqref{t1t2mapping-real-error-eq13}, \eqref{t1t2mapping-real-error-eq15} is equivalent to
\begin{equation}  \label{t1t2mapping-real-error-eq14}
\begin{split}
& |Y (N(t^*-t)) - \Yd (N(t^*-t))| \lesssim \frac{1}{N \epsilon} , \\
& |W_1(N(t^*-t)) - \Wd_1 (N(t^*-t))| \lesssim \frac{1}{N \epsilon} , \\
& |W_2(N(t^*-t)) - \Wd_2 (N(t^*-t))| \lesssim \frac{1}{N \epsilon} , \\
\end{split}
\end{equation}
which is what we shall prove now.

We use $F'$ to denote $\frac{\partial F}{\partial \sigma}$ for any function $F$. We take the difference of \eqref{E:N-real-rescaled} and \eqref{E:N-mock-rescaled}, and estimate the $L^\infty$ norm of the right hand side of the resulting ODEs by using the uniform boundedness (in $N$) of $(E^N, B^N)$ as well as the fact that for $i= 1,2$, 
\begin{equation} \label{E:N-difference-rescaled-est2}
\begin{split}
& |\hat W_i \big[ \partial_y \Psi(Y) - \partial_y \Psi(\Yd) \big]| \leq |\partial^2_y \Psi (y_0) | |Y- \Yd| , \\
& | \big[ \hat W_i - \hWd_i \big] \partial_y \Psi(\Yd)| \lesssim \big[ |W_1 - \Wd_1| + |W_2 - \Wd_2| \big] |\partial_y \Psi (y_0) | ,  \\
\end{split}
\end{equation}
which follows by the Mean-Value Theorem and Corollary \ref{NS3-lm3.1-cor}. Applying Gronwall's inequality on the difference ODEs of \eqref{E:N-real-rescaled} and \eqref{E:N-mock-rescaled} on the time interval $[0, N(t^*-t)]$ (notice that $N(t^*-t) \lesssim  \epsilon^{-1}$) yields \eqref{t1t2mapping-real-error-eq14}, which gives \eqref{t1t2mapping-real-error-eq1}. 1) is proved. Notice that by a similar process as above we also obtain
\begin{equation} \label{t1t2mapping-real-error-1-cor}
|Y- \Yd|\lesssim \frac{1}{N \epsilon} \text{ for all } \sigma \in [0, N(t^*-t)] .
\end{equation}

Next we prove 2). Take 
$$\tilde{x} := X_N (t^*; t, \zeta), \ \tilde{v}_1  := -V_{1,N} (t^*; t, \zeta), \ \tilde{v}_2  := V_{2,N} (t^*; t, \zeta) . $$
Then \eqref{t1t2mapping-real-error-eq1} can be written as 
$$|\tilde{x} - x| \lesssim \frac{1}{N^2 \epsilon} , \ |\tilde{v}_1 - v_1 | \lesssim \frac{1}{N \epsilon} , \ |\tilde{v}_2 - v_2 | \lesssim \frac{1}{N \epsilon} . $$ 
We obtain \eqref{t1t2mapping-real-error-eq2} immediately by observing that the trajectories $(X_N, V_{1,N}, V_{2,N})(t, \zeta)$ and $(X_N, V_{1,N}, V_{2,N})(t^*, \tilde{\zeta})$ are identical to each other. 2) is verified. 



 


Lastly we prove 3). We want to show that the Jacobian $|\mathcal{J}_N| = |\frac{\partial (x, v_1, v_2)}{\partial (\tilde{x}, \tilde{v}_1 , \tilde{v}_2)}| $ satisfies $\big| |\mathcal{J}_N | - 1 \big| \lesssim \frac{1}{N \epsilon^3}$. It is equivalent to prove that
\begin{equation} \label{t1t2mapping-real-error-JN}
\big| | \frac{\partial (\tilde{x}, \tilde{v}_1 , \tilde{v}_2)}{\partial (x, v_1, v_2)} | - 1 \big| \lesssim \frac{1}{N \epsilon^3} .
\end{equation}
Therefore we just need to derive
\begin{equation} \label{t1t2mapping-real-error-eq8}  
\big| \frac{\partial \tilde{x}}{\partial x}  -1 \big| \lesssim \frac{1}{N \epsilon^3} , \ \big| \frac{\partial \tilde{v}_1}{\partial v_1}  -1 \big| \lesssim \frac{1}{N \epsilon^3} , \ \big| \frac{\partial \tilde{v}_2}{\partial v_2}  -1 \big| \lesssim \frac{1}{N \epsilon^3} ,  
\end{equation} 
and that all the non-diagonal entries in the matrix $\frac{\partial (\tilde{x}, \tilde{v}_1 , \tilde{v}_2)}{\partial (x, v_1, v_2)}$ are of size $O(1/N^3)$ in $L^\infty$. Then in view of the rule of Sarrus, \eqref{t1t2mapping-real-error-JN} holds. 

By rescaling, to prove \eqref{t1t2mapping-real-error-eq8}, it suffices to show
\begin{equation}  \label{t1t2mapping-real-error-eq7}  
\big| \frac{\partial \tilde{y}}{\partial y} -1 \big| \lesssim \frac{1}{N \epsilon^3} , \ \big| \frac{\partial \tilde{w}_1}{\partial w_1}  -1 \big| \lesssim \frac{1}{N \epsilon^3} , \ \big| \frac{\partial \tilde{w}_2}{\partial w_2}  -1 \big| \lesssim \frac{1}{N \epsilon^3},
\end{equation}
and that all the non-diagonal entries in the matrix $\frac{\partial (\tilde{y}, \tilde{w}_1 , \tilde{w}_2)}{\partial (y, w_1, w_2)}$ are of size $O(\frac{1}{N \epsilon^3})$ in $L^\infty$. We first prove $\big| \frac{\partial \tilde{y}}{\partial y}  -1 \big| \lesssim \frac{1}{N \epsilon^3}$.   

We take the derivative of \eqref{E:N-real-rescaled} with respect to the initial condition $y$ to obtain
\begin{equation} \label{t1t2mapping-real-error-eq9}  
\begin{split}
& (\frac{\partial Y}{\partial y}) ' =  \frac{\partial \hat W_1}{\partial y}  ,   \\
& (\frac{\partial W_1}{\partial y})  ' = \frac{1}{N} \big[ N^{-1} \partial_x E^N_1 (N^{-1} \sigma +t, N^{-1} Y ) \frac{\partial Y}{\partial y} + N^{-1} \hat{W}_2 \partial_x B^N (N^{-1} \sigma +t, N^{-1} Y) \frac{\partial Y}{\partial y} \\  
& \quad \quad \qquad + \frac{\partial \hat{W}_2 }{\partial y}   B^N (N^{-1} \sigma +t, N^{-1} Y)  \big]  + \frac{\partial \hat{W}_2}{\partial y} \partial_y \Psi(Y) + \hat{W}_2 \partial_y^2 \Psi(Y) \frac{\partial Y}{\partial y} , \\
& (\frac{\partial W_2}{\partial y})  ' = \frac{1}{N} \big[ N^{-1} \partial_x E^N_2 (N^{-1} \sigma +t, N^{-1} Y ) \frac{\partial Y}{\partial y} - N^{-1} \hat{W}_1 \partial_x B^N (N^{-1} \sigma +t, N^{-1} Y) \frac{\partial Y}{\partial y} \\  
& \quad \quad \qquad - \frac{\partial \hat{W}_1}{\partial y}   B^N (N^{-1} \sigma +t, N^{-1} Y)  \big]  - \frac{\partial \hat{W}_1}{\partial y} \partial_y \Psi(Y) - \hat{W}_1 \partial_y^2 \Psi(Y) \frac{\partial Y}{\partial y}  . \\
\end{split}
\end{equation} 
Similarly, for $\sigma \in \{ \sigma : N^{-1} \sigma +t \in I_0 \}$, we take the derivative of \eqref{E:N-mock-rescaled} with respect to the initial condition $y$ to obtain
\begin{equation}  \label{t1t2mapping-real-error-eq10}  
\begin{split}
& (\frac{\partial \Yd}{\partial y}) ' =  \frac{\partial \hWd_1}{\partial y}  ,   \\
& (\frac{\partial \Wd_1}{\partial y})  ' =   \frac{\partial \hWd_2}{\partial y} \partial_y \Psi(\Yd) + \hWd_2 \partial_y^2 \Psi(\Yd) \frac{\partial \tilde{Y}}{\partial y} , \\
& (\frac{\partial \Wd_2}{\partial y})  ' =   - \frac{\partial \hWd_1}{\partial y} \partial_y \Psi(\Yd) - \hWd_1 \partial_y^2 \Psi(\Yd) \frac{\partial \Yd}{\partial y}  . \\
\end{split}
\end{equation} 
Since $|\partial_y \Psi (Y)| \leq |\partial_y \Psi (y_0)|$, $|\partial_y^2 \Psi (Y)| \leq |\partial_y^2 \Psi (y_0)|$ (see Corollary \ref{NS3-lm3.1-cor}) and same thing holds for $\Yd$, we deduce
\begin{equation} \label{t1t2mapping-real-error-eq12} 
|\frac{\partial Y}{\partial y}| + |\frac{\partial W_1}{\partial y}| +  |\frac{\partial W_2}{\partial y}|+ |\frac{\partial \Yd}{\partial y}| + |\frac{\partial \Wd_1}{\partial y}| + |\frac{\partial \Wd_2}{\partial y}| \lesssim \epsilon^{-1}  \text{ for all } \sigma \in [0, N(t^*-t)] 
\end{equation}
by applying Gronwall's inquality on the time interval $[0, N(t^*-t)]$. (Notice that $\frac{\partial Y}{\partial y}|_{\sigma=0} =\frac{\partial \Yd}{\partial y}|_{\sigma=0} =1$, $\frac{\partial W_1}{\partial y}|_{\sigma=0}=\frac{\partial W_2}{\partial y}|_{\sigma=0} =\frac{\partial \Wd_1}{\partial y}|_{\sigma=0} =\frac{\partial \Wd_2}{\partial y}|_{\sigma=0}  =0$.)

Since $\frac{\partial Y}{\partial y} |_{\sigma =0} = \frac{\partial \Yd}{\partial y} |_{\sigma =0} = \frac{\partial \Yd}{\partial y} |_{\sigma =N(t^*-t)} = 1 $, we have $\big| \frac{\partial \tilde{y}}{\partial y}   -1 \big|  = \big|  \frac{\partial Y}{\partial y} |_{\sigma =N(t^*-t)}  - \frac{\partial \Yd}{\partial y} |_{\sigma =N(t^*-t)} \big| $. It then suffices for us to prove
\begin{equation}
| \frac{\partial (Y- \Yd)}{\partial y} |  \lesssim \frac{1}{N \epsilon^3}
\end{equation}
holds for all $ \sigma \in [0, N(t^*-t)]$. We take the difference of \eqref{t1t2mapping-real-error-eq9} and \eqref{t1t2mapping-real-error-eq10}, and estimate the terms involving the internal fields on the right hand side of the resulting ODEs by using the uniform boundedness (in $N$) of $(E^N, B^N)$. Moreover, we estimate the terms involving $\Psi$ on the right hand side of the resulting ODEs using the triangular inequality, \eqref{t1t2mapping-real-error-eq12}, \eqref{t1t2mapping-real-error-1-cor} and Corollary \ref{NS3-lm3.1-cor} as follows: For $i=1,2$ and $\sigma \in [0, N(t^*-t)]$, 
\begin{equation*}
\begin{split}
| \frac{\partial \hat{W}_i}{\partial y} \partial_y \Psi(Y) -  \frac{\partial \hWd_i}{\partial y} \partial_y \Psi(\Yd) |  
& \leq | \frac{\partial \hat{W}_i}{\partial y} \big[\partial_y \Psi(Y) -   \partial_y \Psi(\Yd) \big] | + | \big[ \frac{\partial \hat{W}_i}{\partial y}  - \frac{\partial \hWd_i}{\partial y} \big] \partial_y \Psi(\Yd) | \\
& \leq | \frac{\partial \hat{W}_i}{\partial y} ||\partial^2_y \Psi(y_0)||Y-\Yd  | + |  \frac{\partial \hat{W}_i}{\partial y}  - \frac{\partial \hWd_i}{\partial y} | |\partial_y \Psi(y_0) | \\
& \lesssim \frac{1}{N \epsilon^2} + | \frac{\partial (W_1 - \Wd_1)}{\partial y} | +| \frac{\partial (W_2 - \Wd_2)}{\partial y} | .  \\ 
\end{split}
\end{equation*} 
Similarly we have, for $i=1,2$ and $\sigma \in [0, N(t^*-t)] $, 
\begin{equation*}
\begin{split}
| \hat{W}_i \partial_y^2 \Psi(Y) \frac{\partial Y}{\partial y}   - \hWd_i \partial_y^2 \Psi(\Yd) \frac{\partial \Yd}{\partial y} |  
& \lesssim \frac{1}{N \epsilon^2} + | \frac{\partial (Y -  \Yd)}{\partial y} | .  \\
\end{split}
\end{equation*}
Applying Gronwall's inequality on the difference ODEs of \eqref{t1t2mapping-real-error-eq9} and \eqref{t1t2mapping-real-error-eq10} on the time interval $[0, N(t^*-t)]$ (notice that $N(t^*-t) \lesssim \epsilon^{-1}$) gives: For all $ \sigma \in [0, N(t^*-t)]$, 
\begin{equation*}
| (\frac{\partial (Y - \Yd )}{\partial y}) | \lesssim \frac{1}{N \epsilon^3} , \ | (\frac{\partial (W_1 - \Wd_1 )}{\partial y})  | \lesssim \frac{1}{N \epsilon^3} , \ | (\frac{\partial (W_2 - \Wd_2 )}{\partial y})  | \lesssim \frac{1}{N \epsilon^3} ,
\end{equation*}
which has what we want. The proof for $\big| \frac{\partial \tilde{y}}{\partial y}  -1 \big| \lesssim \frac{1}{N \epsilon^3} $ is complete.

The proofs for $\big|  \frac{\partial \tilde{w}_1}{\partial w_1}  -1 \big| \lesssim \frac{1}{N \epsilon^3}$ and $ \big| \frac{\partial \tilde{w}_2}{\partial w_2} -1 \big| \lesssim \frac{1}{N \epsilon^3}$ can be carried out similarly by taking derivatives for \eqref{E:N-real-rescaled} and \eqref{E:N-mock-rescaled} with respect to $w_1$ and $w_2$, taking difference, and applying Gronwall's inequality (notice that $\frac{\partial W_1}{\partial w_1} |_{\sigma =0} = \frac{\partial \Wd_1}{\partial w_1} |_{\sigma =0} = \frac{\partial \Wd_1}{\partial w_1} |_{\sigma =N(t^*-t)} = 1 $, and similar things hold for $W_2$, $\Wd_2$ and $w_2$).

The proofs for that the non-diagonal entries in the matrix $\frac{\partial (y, w_1, w_2)}{\partial (\tilde{y}, \tilde{w}_1 , \tilde{w}_2)}$ are of size $O(\frac{1}{N \epsilon^3})$ in $L^\infty$ can be carried out similarly as above, with the observation that $\frac{\partial Y}{\partial w_1} |_{\sigma =0} = \frac{\partial \Yd}{\partial w_1} |_{\sigma =0} = \frac{\partial \Yd}{\partial w_1} |_{\sigma =N(t^*-t)} = 0 $, etc.. 

The proof of the lemma is now complete.


\end{proof}

\section{Proof of the Main Theorem} \label{ProofofMainTheorem}

In this section, we consider the subsequence $(f^N, E^N_1, E^N_2, B^N)$ obtained in Lemma \ref{lm4} and prove Theorem \ref{mtheorem}.  

In particular, $(f^N, E^N_1, E^N_2, B^N)$ is a weak solution of the problem \eqref{VlasovBext} -- \eqref{MaxwellBext} in the sense of Definition \ref{D:weak-Bext}. Hence $f^N$ satisfies
\begin{equation} \label{ProofofMainTheorem-eq1}
\begin{split}
& \int_0^T  \int_{\Omega} \int_{\mathbb{R}^2} f^N \cdot \Big\{ \partial_t \alpha + \hat{v}_1 \partial_x \alpha + ( E^N_1 + \hat{v}_2 B^N+\hat{v}_2 B_{ext, N} ) \partial_{v_1} \alpha  \\
& + ( E^N_2 - \hat{v}_1 B^N- \hat{v}_1 B_{ext, N} ) \partial_{v_2} \alpha  \Big\} \, dv_1 \, dv_2 \,dx \, dt  \\
&  + \int_{\Omega} \int_{\mathbb{R}^2} f^N (0, x, v) \alpha (0, x, v) \, dv_1 \, dv_2 \,dx   = 0   \\
\end{split}
\end{equation}
for any $\alpha (t, x, v) \in C^\infty_c ([0, T) \times \overline{\Omega} \times \mathbb{R}^2)$.

We first prove that the limit $f$ in Lemma \ref{lm4} is a weak solution of the Vlasov equation together with the specular boundary condition \eqref{specularBC} (see Definition \ref{D:weak}). We take the limit $N \to +\infty$ of \eqref{ProofofMainTheorem-eq1}, and notice that
\begin{equation*}
f^N |_{t=0}  = f_0  .
\end{equation*}
Since $f^N  \rightharpoonup f$ in $weak^*-L^\infty (\mathbb{R}_+ \times \Omega \times \mathbb{R}^2)$ and $(E^N, B^N) \rightarrow (E, B)$ strongly in $C^0 ([0, T] \times \Omega)$, we have, for any $\alpha (t, x, v) \in C^\infty_c ([0, T) \times \overline{\Omega} \times \mathbb{R}^2)$,
\begin{equation}
\begin{split}
& \int_0^T  \int_{\Omega} \int_{\mathbb{R}^2} f^N \cdot \Big\{ \partial_t \alpha + \hat{v}_1 \partial_x \alpha + ( E^N_1 + \hat{v}_2 B^N  ) \partial_{v_1} \alpha  + ( E^N_2 - \hat{v}_1 B^N  ) \partial_{v_2} \alpha  \Big\} \, dv_1 \, dv_2 \,dx \, dt  \\
& \rightarrow  \int_0^T  \int_{\Omega} \int_{\mathbb{R}^2} f \cdot \Big\{ \partial_t \alpha + \hat{v}_1 \partial_x \alpha + ( E_1 + \hat{v}_2 B  ) \partial_{v_1} \alpha   + ( E_2 - \hat{v}_1 B  ) \partial_{v_2} \alpha \Big\} \, dv_1 \, dv_2 \,dx \, dt . \\
\end{split}
\end{equation}
The presence of the $B_{ext,N}$ terms yields the extra term
\begin{equation}
\label{E:extra2}
\int_0^T \int_{\Omega} \int_{\mathbb{R}^2} f^N B_{ext, N} (\hat v_2  \partial_{v_1} \alpha - \hat v_1 \partial_{v_2}\alpha) \, dv_1 \, dv_2 \,dx \, dt .
\end{equation}
It suffices to prove that this extra term goes to zero as $N\to \infty$ in order to recover the statement that $f$ satisfies the weak form of the Vlasov equation \eqref{E:weakformVlasov} in Definition \ref{D:weak}.


Notice that $\alpha$ satisfies the properties stated in Definition \ref{D:weak}, and therefore $\hat v_2  \partial_{v_1} \alpha - \hat v_1 \partial_{v_2}\alpha$ is a function in $C^\infty_c ([0, T) \times \overline{\Omega} \times \mathbb{R}^2)$ and it is odd in $v_1$ when $x =0$ or $1$. It suffices to apply the following lemma to our setting with $\tilde{\alpha} = \hat v_2  \partial_{v_1} \alpha - \hat v_1 \partial_{v_2}\alpha $:


\begin{lemma} \label{(extra)-converge}
Let $\tilde{\alpha}$ be an arbitrary function in $C^\infty_c ({\mathbb{R}}  \times \overline{\Omega} \times \mathbb{R}^2)$ that satisfies the following symmetry conditions at the boundary: 
$$\tilde{\alpha} (t, 0, v_1, v_2) = -\tilde{\alpha} (t, 0, -v_1, v_2), \ \tilde{\alpha} (t, 1, v_1, v_2) = -\tilde{\alpha} (t, 1, -v_1, v_2). $$ 
Then
\begin{equation}
\label{E:extra20}
\int_0^T  \int_{\Omega} \int_{\mathbb{R}^2} B_{ext, N} (x) f^N (t, x, v_1, v_2) \tilde{\alpha} (t, x, v_1, v_2) dv_2 dv_1 dx dt \rightarrow 0
\end{equation}
as $N \rightarrow + \infty$.
\end{lemma}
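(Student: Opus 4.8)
The plan is to localize the integral near the two boundary components, where $B_{ext,N}$ is supported, and to exploit the approximate reflection symmetry established in Section~\ref{GeneralCase}. Since $B_{ext,N}$ is supported in $\{x : \mathrm{dist}(x,\partial\Omega) \le 1/N\}$, and by Lemma~\ref{NS3-lm3.1} and Corollary~\ref{NS3-lm3.1-cor} the support of $f^N$ in $x$ lies inside $(N^{-1}y_0, 1-N^{-1}y_0)$, the integrand in \eqref{E:extra20} is supported in the thin strips near $x=0$ and $x=1$. It suffices to treat the strip near $x=0$, the other being identical; so I would fix attention on $\int_0^T \int_0^{1/N} \int_{\mathbb R^2} B_{ext,N}(x) f^N \tilde\alpha\, dv\, dx\, dt$.

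First I would pass to the trajectory representation $f^N(t,x,v) = f_0\big((X_N,V_N)(0;t,x,v)\big)$. For $(t,x,v)$ in the strip with $f^N(t,x,v)\neq 0$ we invoke Lemma~\ref{t1t2mapping-real-error}: setting $\tilde\zeta = (\tilde x, -\tilde v_1, \tilde v_2)$ with $\tilde x = X_N(t^*;t,\zeta)$, $\tilde v_1 = -V_{1,N}(t^*;t,\zeta)$, $\tilde v_2 = V_{2,N}(t^*;t,\zeta)$, the backward flow gives $(X_N,V_N)(0;t,\zeta) = (X_N,V_N)(0;t^*,\tilde\zeta)$, so $f^N(t,x,v) = f^N(t^*,\tilde x, -\tilde v_1, \tilde v_2)$. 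The idea is then to split the $t$-integral according to the sign of $V_{1,N}(t;t,\zeta) = v_1$, i.e. into the set where the particle is heading into the wall and the set where it is heading out, and to use the change of variables $(x,v_1,v_2)\mapsto(\tilde x,\tilde v_1,\tilde v_2)$ together with $t\mapsto t^*$ to pair each incoming contribution against the corresponding outgoing one. Because $\tilde\alpha$ is odd in $v_1$ at $x=0$, $\tilde\alpha(t,x,v_1,v_2)$ and $\tilde\alpha(t^*,\tilde x,-\tilde v_1,\tilde v_2)$ nearly cancel: their difference is $O(1/N)$ since $|t^*-t|\lesssim 1/N$, $|\tilde x - x|\lesssim 1/N^2$, $|\tilde v_1-v_1|\lesssim 1/N$, $|\tilde v_2-v_2|\lesssim 1/N$ and $\tilde\alpha$ is smooth; the Jacobian of the change of variables is $1+O(1/N)$ by part 3) of the lemma, and the Jacobian $|J_N|$ of $t\mapsto t^*$ equals $1$ in the model by Lemma~\ref{t1t2mapping-model} (with an $O(1/N)$ correction in the real flow, which I would control analogously). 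After the pairing, the $B_{ext,N}$-weighted integral of the cancelling pair is bounded: $\int_0^{1/N} |B_{ext,N}(x)|\,dx = \psi_{ext,N}(1/N) - \psi_{ext,N}(1/2)$, but more usefully, by Lemma~\ref{NS3-lm3.1} the potential $\psi_{ext,N}$ is bounded by $C_2$ on $\supp_x f^N$, so $\int_{\supp_x f^N \cap (0,1/2)} |B_{ext,N}|\,dx \le C_2$ is uniformly bounded in $N$; multiplying a uniformly bounded mass of $|B_{ext,N}| f^N$ by an $O(1/N)$ discrepancy in $\tilde\alpha$ yields $O(1/N)\to 0$.

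The step I expect to be the main obstacle is making the pairing rigorous, i.e. verifying that the map $(t,x,v_1,v_2)\mapsto (t^*,\tilde x,\tilde v_1,\tilde v_2)$ really is a measure-preserving (up to $1+O(1/N)$) bijection between the "incoming" and "outgoing" portions of the support, and that this bijection is compatible with the $B_{ext,N}$ weight — concretely, that $B_{ext,N}(x)\,dx\,dt$ transforms correctly and that the contributions from $v_1 = 0$ (measure zero) and from trajectories that exit and re-enter $\supp B_{ext,N}$ within $[0,T]$ cause no trouble. One must also be careful that $t^* \in [0,2T]$ (guaranteed by taking $N\ge 8$ large as noted before Lemma~\ref{t1t2mapping-real-error}) so that the paired point still lies in a region where all the bounds of Section~\ref{SectionWL} apply. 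A cleaner alternative, which I would pursue if the explicit pairing becomes unwieldy, is to write $B_{ext,N}(x) = \partial_x \psi_{ext,N}(x)$, integrate by parts in $x$ over each strip to move the derivative onto $f^N\tilde\alpha$, and estimate the resulting terms: the boundary terms vanish because $\psi_{ext,N}$ blows up while $f^N$ vanishes there, $\int |\psi_{ext,N}| f^N\,dx \lesssim 1$ by Lemma~\ref{NS3-lm3.1}, $\|\partial_x f^N\|$ is controlled via the characteristics (with a compensating $\psi_{ext,N}$-weight from differentiating along trajectories in the strip), and $\partial_x\tilde\alpha$ is bounded — this reduces matters to showing that $\int \psi_{ext,N}(x) f^N(t,x,v) \partial_x\tilde\alpha\,dx\,dt \to \int_{x=0} (\cdots)$ collapses onto the boundary, where the oddness of $\tilde\alpha$ in $v_1$ and the even-in-$v_1$ structure of the reflected density force the limit to vanish. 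Either route reduces the problem to the reflection estimates of Lemmas~\ref{t1t2mapping-model} and~\ref{t1t2mapping-real-error} plus the uniform potential bound of Lemma~\ref{NS3-lm3.1}.
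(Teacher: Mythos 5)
Your main route is essentially the paper's own proof: localize to the $O(1/N)$ strips where $B_{ext,N}$ lives, represent $f^N$ along characteristics, apply the reflection map $(t,x,v_1,v_2)\mapsto(t^*,\tilde x,\tilde v_1,\tilde v_2)$ of Lemma \ref{t1t2mapping-real-error} with its $O(1/N)$ closeness and Jacobian estimates, and exploit the oddness of $\tilde{\alpha}$ in $v_1$ at the boundary so that paired contributions cancel up to $O(1/N)$, the remaining mass being controlled by the uniform bound on $\psi_{ext,N}$ (equivalently $|\Psi'(y_0)|$) on $\supp_x f^N$. Your bookkeeping (pairing incoming/outgoing by the sign of $v_1$) versus the paper's (changing variables over the whole region and averaging the two representations, with errors $R_{1,N}$--$R_{4,N}$) is only a cosmetic difference, and the subtleties you flag are exactly the ones the paper disposes of via those error terms.
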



Once Lemma \ref{(extra)-converge} is proved, we can verify that the limit $(f, E_1, E_2, B)$ is a weak solution for the Vlasov equation with the specular boundary condition \eqref{specularBC} on $f$, and thus complete the proof of Theorem \ref{mtheorem}:

\begin{proof} 
{\it{(of Theorem \ref{mtheorem})}}

By the discussions above and Lemma \ref{(extra)-converge}, we deduce that the limit $(f, E_1, E_2, B)$ solves the Vlasov equation in the sense of \eqref{E:weakformVlasov} in Definition \ref{D:weak}. It suffices to carry out the limit process for the Maxwell equations and verify \eqref{E:weakformMaxwell1} -- \eqref{E:weakformMaxwell4} for $(f, E_1, E_2, B)$. Notice that by \eqref{weakformMaxwellN1}, $(f^N, E_1^N, E_2^N, B^N)$ satisfies
\begin{equation} 
\begin{split}
& -\int^1_0  \int^T_0 E^N_1 \partial_t \varphi_1 dt dx  
 - \int^1_0   E_{1, 0} (x) \varphi_1 (0, x) dx  
+  \int^1_0 \int^T_0 \int_{\mathbb{R}^2} \hat{v}_1 f^N  \varphi_1 dv dt dx  =0 .  \\
\end{split}
\end{equation}
Now since $(E_1^N, E_2^N, B^N) \rightarrow (E_1, E_2, B)$ strongly in $C^0_{t, x}$, $f^N \rightharpoonup f$ weakly-* in $L^\infty_{t, x, v}$, we recover \eqref{E:weakformMaxwell1} by taking $N \rightarrow +\infty$. (Here notice that since $\supp_v f^N (t) \subset D_{C_v}$ for all $t \in [0, T]$ and all $N$, the inner integral $\int_{\mathbb{R}^2} \hat{v}_1 f^N  \varphi_1 dv$ in the last term above can be replaced by $\int_{D_{C_v}} \hat{v}_1 f^N  \varphi_1 dv$.) Similarly we obtain \eqref{E:weakformMaxwell2} from \eqref{weakformMaxwellN2} by taking $N \rightarrow +\infty$. Also, by \eqref{weakformMaxwellN3}, $(f^N, E_1^N, E_2^N, B^N)$ satisfies
\begin{equation}  
\begin{split}
& -\int^1_0 \int^T_0  E^N_2 \partial_t \varphi_3 dt dx - \int^1_0 \int_{\mathbb{R}^2} E_{2, 0} (x) \varphi_3 (0, x) dx dv   \\
& - \int^1_0 \int^T_0 B^N \partial_x \varphi_3 dt dx  + \int^T_0   B_b (t) \varphi_3 (t, 1) dt
  + \int^1_0 \int^T_0 \int_{\mathbb{R}^2} \hat{v}_2 f^N \varphi_3 dv  dt dx  =0 . \\
\end{split}
\end{equation}
Again since $(E_1^N, E_2^N, B^N) \rightarrow (E_1, E_2, B)$ strongly in $C^0_{t, x}$, $f^N \rightharpoonup f$ weakly-* in $L^\infty_{t, x, v}$, we recover \eqref{E:weakformMaxwell3} by taking $N \rightarrow +\infty$. (Here notice that since $\supp_v f^N (t) \subset D_{C_v}$ for all $t \in [0, T]$ and all $N$, the inner integral $\int_{\mathbb{R}^2} \hat{v}_2 f^N \varphi_3 dv$ in the last term above can be replaced by $\int_{D_{C_v}} \hat{v}_2 f^N \varphi_3 dv$.) Similarly we obtain \eqref{E:weakformMaxwell4} from \eqref{weakformMaxwellN4} by taking $N \rightarrow +\infty$. The proof of Theorem \ref{mtheorem} is complete.


\end{proof}

We now prove Lemma \ref{(extra)-converge}.

\begin{proof}
{\it{(of Lemma \ref{(extra)-converge}) }}

It suffices to consider the boundary point $x=0$ and show that
\begin{equation}
\label{E:extra21}
\tilde{\Xi}_N  :=  \int_0^T  \int_0^{1/N} \int_{\mathbb{R}^2} B_{ext, N} (x) f^N (t, x, v_1, v_2) \tilde{\alpha} (t, x, v_1, v_2) dv_2 dv_1 dx dt \rightarrow 0 \text{ as } N \rightarrow +\infty ,
\end{equation}
since the part corresponding to the $x=1$ boundary is similar. We first observe, by using the definition of $B_{ext,N}$, the change of variable $y=Nx$ and the fact that $f^N (t, x, v_1, v_2) = f_0 ((X_N, V_N) (0; t, x, v_1, v_2))$:
\begin{equation} \label{E:extra21-eq0}
\begin{split}
\tilde{\Xi}_N  
& = \int_0^T \int_{x \in (0, 1/N]} \int_{v_1} \int_{v_2} B_{ext, N} (x) f^N (t, x, v_1, v_2) \tilde{\alpha} (t, x, v_1, v_2) dv_2 dv_1 dx dt \\
& = \int_0^T  \int_{ y \in (0, 1]} \int_{v_1} \int_{v_2}  \Psi' (y) f_0 ((X_N, V_N) (0; t, N^{-1} y, v_1, v_2)) \tilde{\alpha} (t, N^{-1} y, v_1, v_2) dv_2 dv_1 dy dt . \\
\end{split}
\end{equation}
Let $\epsilon = \frac{1}{N^{1/10}} \in (0, 1)$, we have
\begin{equation} \label{E:extra21-eq0-1}
\begin{split}
\tilde{\Xi}_N  
& = \int_0^T  \int_{ y \in (0, 1 -\epsilon]} \int_{v_1} \int_{v_2}  \Psi' (y) f_0 ((X_N, V_N) (0; t, N^{-1} y, v_1, v_2)) \tilde{\alpha} (t, N^{-1} y, v_1, v_2) dv_2 dv_1 dy dt \\
& \quad + \int_0^T  \int_{ y \in ( 1 -\epsilon, 1]} \int_{v_1} \int_{v_2}  \Psi' (y) f_0 ((X_N, V_N) (0; t, N^{-1} y, v_1, v_2)) \tilde{\alpha} (t, N^{-1} y, v_1, v_2) dv_2 dv_1 dy dt   \\
& := \Xi_N  + \Xi_N' . \\
\end{split}
\end{equation}
We have
\begin{equation} \label{E:extra21-eq0-2}
\begin{split}
| \Xi_N' |  
& \lesssim  \epsilon =  \frac{1}{N^{1/10}} . \\
\end{split}
\end{equation}
Let us consider the following quantity 
\begin{equation} \label{E:extra21-eq1}
\begin{split}
\Xi_N  
& = \int_0^T  \int_{ y \in (0, 1 -\epsilon]} \int_{v_1} \int_{v_2}  \Psi' (y) f_0 ((X_N, V_N) (0; t, N^{-1} y, v_1, v_2)) \tilde{\alpha} (t, N^{-1} y, v_1, v_2) dv_2 dv_1 dy dt . \\
\end{split}
\end{equation}
Using the change of variable $ t \mapsto t^*$ (noticing that $t = t(t^*, y, v_1, v_2)$), 
the fact that $|J_N| =1$, as well as \eqref{t1t2mapping-real-error-eq2}, we have
\begin{equation}  
\begin{split}
\Xi_N
& =    \int_{ y \in (0, 1-\epsilon]} \int_{v_1} \int_{v_2} \int_{\{t^*: t \in [0, T]\}}  \Psi' (y) f_0 ((X_N, V_N) (0; t^*, N^{-1} \tilde{y}, -\tilde{v}_1, \tilde{v}_2))  \\
& \quad \cdot \tilde{\alpha} (t, N^{-1} y, v_1, v_2) dt^* dv_2 dv_1 dy . \\
\end{split}
\end{equation}
Changing $t$ to $t^*$, $y$ to $\tilde{y}$, $v_1$ to $\tilde{v}_1$,$v_2$ to $\tilde{v}_2$ in the integrand above, we write 
\begin{equation} \label{E:extra21-eq11}
\begin{split}
\Xi_N
& =    \int_{ y \in (0, 1-\epsilon]} \int_{v_1} \int_{v_2} \int_{\{t^*: t \in [0, T]\}}  \Psi' (\tilde{y}) f_0 ((X_N, V_N) (0; t^*, N^{-1} \tilde{y}, -\tilde{v}_1, \tilde{v}_2)) \\
& \quad \cdot \tilde{\alpha} ( t^*, N^{-1} \tilde{y}, \tilde{v}_1, \tilde{v}_2) dt^* dv_2 dv_1 dy + R_{1,N} , \\
\end{split}
\end{equation}
where
\begin{equation}  \label{E:extra21-R1N-def} 
\begin{split}
 R_{1,N} 
 &  :=  \int_{ y \in (0, 1-\epsilon]} \int_{v_1} \int_{v_2} \int_{\{t^*: t \in [0, T]\}}    f_0 ((X_N, V_N) (0; t^*, N^{-1} \tilde{y}, -\tilde{v}_1, \tilde{v}_2)) \\
& \quad \cdot \big\{ \Psi' (y)  \tilde{\alpha} (t, N^{-1} y, v_1, v_2) -  \Psi' (\tilde{y}) \tilde{\alpha} ( t^*, N^{-1} \tilde{y}, \tilde{v}_1, \tilde{v}_2) \big\} dt^* dv_2 dv_1 dy   \\
\end{split}
\end{equation}
with $\tilde{y} = N \tilde{x}$. 
Making the change of variables $y \mapsto x = N^{-1} y $ as well as $(x, v_1, v_2) \mapsto (\tilde{x}, \tilde{v}_1, \tilde{v}_2)$ in the right hand side of \eqref{E:extra21-eq11}, and applying Lemma \ref{t1t2mapping-real-error} concerning trajectories, we obtain
\begin{equation}  
\begin{split}
\Xi_N  
& = \int_{ x \in (0, 1/N-\epsilon/N]} \int_{v_1} \int_{v_2} \int_{\{t^*: t \in [0, T]\}}  N \Psi' (N\tilde{x}) f_0 ((X_N, V_N) (0; t^*, N^{-1} \tilde{y}, -\tilde{v}_1, \tilde{v}_2)) \\
& \quad \cdot \tilde{\alpha} ( t^*, N^{-1} \tilde{y}, \tilde{v}_1, \tilde{v}_2) dt^* dv_2 dv_1 dx + R_{1,N} \\
& = \int_{\{(t^*, \tilde{x}, \tilde{v}_1, \tilde{v}_2) : x \in (0, 1/N-\epsilon/N], v_1 \in \mathbb{R}, v_2 \in \mathbb{R}, t \in [0, T]\}} N \Psi' (N\tilde{x}) f_0 ((X_N, V_N) (0; t^*,  \tilde{x}, -\tilde{v}_1, \tilde{v}_2))  \\
& \quad \cdot \tilde{\alpha} (t^*,   \tilde{x}, \tilde{v}_1, \tilde{v}_2) |\mathcal{J}_N| d t^* d\tilde{v}_2 d\tilde{v}_1 d\tilde{x}  +  R_{1,N} .  \\
\end{split}
\end{equation}
Removing the Jacobian $\mathcal{J}_N$ and then making the change of variable $\tilde{x} \mapsto \tilde{y} = N \tilde{x}$ again, we obtain
\begin{equation} \label{E:extra21-eq8}
\begin{split}
\Xi_N
& =  \int_{\{(t^*, \tilde{x}, \tilde{v}_1, \tilde{v}_2) : x \in (0, 1/N-\epsilon/N ], v_1 \in \mathbb{R}, v_2 \in \mathbb{R}, t \in [0, T]\}}   N \Psi' (N\tilde{x}) f_0 ((X_N, V_N) (0; t^*,  \tilde{x}, -\tilde{v}_1, \tilde{v}_2))  \\
& \quad \cdot \tilde{\alpha} (t^*, \tilde{x}, \tilde{v}_1, \tilde{v}_2) d t^* d\tilde{v}_2 d\tilde{v}_1 d\tilde{x} + R_{2, N}  +  R_{1,N}   \\
& =  \int_{\{(t^*, \tilde{y}, \tilde{v}_1, \tilde{v}_2) : y \in (0, 1- \epsilon], v_1 \in \mathbb{R}, v_2 \in \mathbb{R}, t \in [0, T]\}}   \Psi' (\tilde{y}) f_0 ((X_N, V_N) (0; t^*, N^{-1} \tilde{y}, -\tilde{v}_1, \tilde{v}_2))  \\
& \quad \cdot \tilde{\alpha} (t^*, N^{-1} \tilde{y}, \tilde{v}_1, \tilde{v}_2) d t^* d\tilde{v}_2 d\tilde{v}_1 d\tilde{y} + R_{2, N}  +  R_{1,N}   , \\
\end{split}
\end{equation}
where 
\begin{equation}  \label{E:extra21-R2N-def}
\begin{split}
R_{2, N}
& :=  \int_{\{(t^*, \tilde{x}, \tilde{v}_1, \tilde{v}_2) : x \in (0, 1/N - \epsilon/N], v_1 \in \mathbb{R}, v_2 \in \mathbb{R}, t \in [0, T]\}}  N\Psi' (N\tilde{x}) f_0 ((X_N, V_N) (0; t^*, \tilde{x}, -\tilde{v}_1, \tilde{v}_2))  \\
& \quad \cdot \tilde{\alpha} (t^*, \tilde{x}, \tilde{v}_1, \tilde{v}_2)  \big( |\mathcal{J}_N| -1 \big) d t^* d\tilde{v}_2 d\tilde{v}_1 d\tilde{x}  .  \\
\end{split}
\end{equation}
We continue to rewrite \eqref{E:extra21-eq8} by specifying region of integration as follows: (Note that when $y$ runs through $(0, 1 - \epsilon]$, $\tilde{y} $ stays in $(0, 1 - \epsilon + O(1/N \epsilon)]$. However, we can constrain the $\tilde{y}$-integral on $(0, 1]$ since $\supp \Psi' \subset (0, 1]$. )
\begin{equation}  
\begin{split}
\Xi_N  
& = \int_{ \tilde{y} \in (0, 1- \epsilon + O(1/N \epsilon)]} \int_{\tilde{v}_1} \int_{\tilde{v}_2} \int_{\{t^*: t \in [0, T]\}}  \Psi' (\tilde{y}) f_0 ((X_N, V_N) (0; t^*, N^{-1} \tilde{y}, -\tilde{v}_1, \tilde{v}_2))  \\
& \quad \cdot \tilde{\alpha} (t^*, N^{-1} \tilde{y}, \tilde{v}_1, \tilde{v}_2) dt^* d\tilde{v}_2 d\tilde{v}_1 d\tilde{y} + R_{2, N}  +  R_{1,N}  \\
& = \int_{ \tilde{y} \in (0, 1- \epsilon  ]} \int_{\tilde{v}_1} \int_{\tilde{v}_2} \int_{\{t^*: t \in [0, T]\}}  \Psi' (\tilde{y}) f_0 ((X_N, V_N) (0; t^*, N^{-1} \tilde{y}, -\tilde{v}_1, \tilde{v}_2))  \\
& \quad \cdot \tilde{\alpha} (t^*, N^{-1} \tilde{y}, \tilde{v}_1, \tilde{v}_2) dt^* d\tilde{v}_2 d\tilde{v}_1 d\tilde{y} + R_{2, N}  +  R_{1,N} + O(\frac{1}{N \epsilon}) . \\
\end{split}
\end{equation}
Changing the name of the variables from $(\tilde{y}, \tilde{v}_1, \tilde{v}_2)$ to $(y, v_1, v_2)$ gives:
\begin{equation} \label{E:extra21-eq9}
\begin{split}
\Xi_N  
& = \int_{ y \in (0, 1- \epsilon]} \int_{v_1} \int_{v_2} \int_{\{t^*: t \in [0, T]\}}  \Psi' (y) f_0 ((X_N, V_N) (0; t^*, N^{-1} y, -v_1, v_2))  \\
& \quad \cdot \tilde{\alpha} (t^*, N^{-1} y, v_1, v_2) dt^* dv_2 dv_1 dy + R_{2, N}  +  R_{1,N} + O(\frac{1}{N \epsilon}) . \\
\end{split}
\end{equation}
Making a change of variable $v_1 \mapsto -v_1$ gives
\begin{equation} \label{E:extra21-eq2}
\begin{split}
 \Xi_N  
& =    \int_{ y \in (0, 1 - \epsilon]} \int_{v_1} \int_{v_2} \int_{\{t^*: t \in [0, T]\}}  \Psi' (y) f_0 ((X_N, V_N) (0; t^*, N^{-1} y, v_1, v_2)) \\
& \quad \cdot \tilde{\alpha} (t^*, N^{-1} y, -v_1, v_2) dt^* dv_2 dv_1 dy  + R_{2, N}  +  R_{1,N} + O(\frac{1}{N \epsilon}) . \\
\end{split}
\end{equation}
We want to change the region of integration in \eqref{E:extra21-eq2} to $ \{ (t^*, y, v_1, v_2) :  t^* \in [0, T], \ y \in (0, 1], \ v_1 \in \mathbb{R} , \ v_2 \in \mathbb{R} \}$. For this purpose, let 
\begin{equation}  \label{E:extra21-R3N-def}
\begin{split}
R_{3,N} 
&  :=  \int_0^{1- \epsilon} \int_{v_1} \int_{v_2} \int_{t^* \in \tilde{I}_{1,N} }  \Psi' (y) f_0 ((X_N, V_N)(0; t^*, N^{-1} y, v_1, v_2)) \tilde{\alpha} (t^*, N^{-1} y, -v_1, v_2) dt^* dv_2 dv_1 dy   \\
& \qquad - \int_0^{1- \epsilon} \int_{v_1} \int_{v_2} \int_{t^* \in \tilde{I}_{2,N} }  \Psi' (y) f_0 ((X_N, V_N)(0; t^*, N^{-1} y, v_1, v_2)) \tilde{\alpha} (t^*, N^{-1} y, -v_1, v_2) dt^* dv_2 dv_1 dy   \\
\end{split}
\end{equation}
with $\tilde{I}_{1,N} = \tilde{I}_{1,N} (y, v_1, v_2) :=  {\{t^*: t \in [0, T]\}}\setminus [0, T] $, $\tilde{I}_{2,N} = \tilde{I}_{2,N} (y, v_1, v_2) :=  [0, T]\setminus {\{t^*: t \in [0, T]\}}  $. \eqref{E:extra21-eq2} then becomes
\begin{equation}  
\begin{split}
 \Xi_N =
&   \int_{t^* \in [0, T]}  \int_{ y \in (0, 1 - \epsilon]} \int_{v_1} \int_{v_2}  \Psi' (y) f_0 ((X_N, V_N) (0; t^*, N^{-1} y, v_1, v_2)) \\
& \cdot \tilde{\alpha} (t^*, N^{-1} y, -v_1, v_2) dv_2 dv_1 dy dt^*  + R_{3,N} + R_{2, N}  +  R_{1,N} + O(\frac{1}{N \epsilon}) . \\
\end{split}
\end{equation}
Changing the name of the variable $t^*$ to $t$ gives
\begin{equation} \label{E:extra21-eq5}
\begin{split}
 \Xi_N =
&   \int_0^T \int_{ y \in (0, 1 - \epsilon]} \int_{v_1} \int_{v_2}  \Psi' (y) f_0 ((X_N, V_N) (0; t, N^{-1} y, v_1, v_2)) \\
& \cdot \tilde{\alpha} (t, N^{-1} y, -v_1, v_2) dv_2 dv_1 dy dt  + R_{3,N} + R_{2, N}  +  R_{1,N} + O(\frac{1}{N \epsilon}). \\
\end{split}
\end{equation}
Now adding \eqref{E:extra21-eq1} and \eqref{E:extra21-eq5} yields
\begin{equation*}
\begin{split}
2 \Xi_N =
& \int_0^T  \int_0^{1} \int_{v_1} \int_{v_2}  \Psi' (y) f_0 ((X_N, V_N)(0; t, N^{-1} y, v_1, v_2)) \\
&  \cdot \big[ \tilde{\alpha} (t, N^{-1} y, v_1, v_2) + \tilde{\alpha} (t, N^{-1} y, -v_1, v_2) \big] dv_2 dv_1 dy dt + R_{3,N} + R_{2, N}  +  R_{1,N} . \\
\end{split}
\end{equation*}
Applying the mean value theorem to $\tilde{\alpha}$ gives 
\begin{equation}  \label{E:extra21-eq7} 
\begin{split}
2 \Xi_N  
& = \int_0^T  \int_0^{1- \epsilon} \int_{v_1} \int_{v_2} \Psi' (y) f_0 ((X_N, V_N)(0; t, N^{-1} y, v_1, v_2)) \\
& \cdot \big[ \tilde{\alpha} (t, 0, v_1, v_2) + \partial_x \tilde{\alpha}  (t, N^{-1} \overline{y}^1, v_1, v_2) N^{-1} y \big] dv_2 dv_1 dy dt \\
& + \int_0^T  \int_0^{1} \int_{v_1} \int_{v_2}   \Psi' (y) f_0 ((X_N, V_N) (0; t^*, N^{-1} y, v_1, v_2)) \\
& \cdot \big[ \tilde{\alpha} (t^*, 0, -v_1, v_2) + \partial_x \tilde{\alpha} (t^*, N^{-1} \overline{y}^2, -v_1, v_2) N^{-1}  y \big] dv_2 dv_1 dy dt^*  \\
& + R_{3,N} + R_{2, N}  +  R_{1,N} + O(\frac{1}{N \epsilon}) .  \\
\end{split}
\end{equation}
Here $\overline{y}^1$ and $\overline{y}^2$ are between $0$ and $y$. Let 
\begin{equation} \label{E:extra21-R4N-def}
\begin{split}
R_{4,N}  
& := \int_0^T  \int_0^{1- \epsilon} \int_{v_1} \int_{v_2} \Psi' (y) f_0 ((X_N, V_N)(0; t, N^{-1} y, v_1, v_2))   \\
& \quad  \cdot  \partial_x \tilde{\alpha}  (t, N^{-1} \overline{y}^1 , v_1, v_2) N^{-1}  y dv_2 dv_1 dy dt   \\
& \quad  +  \int_0^T  \int_0^{1- \epsilon} \int_{v_1} \int_{v_2}   \Psi' (y) f_0 ((X_N, V_N) (0; t^*, N^{-1} y, v_1, v_2))   \\
& \quad  \cdot  \partial_x \tilde{\alpha} (t^*, N^{-1} \overline{y}^2, - v_1, v_2) N^{-1}  y  dv_2 dv_1 dy dt^* ,  \\
\end{split}
\end{equation}
then we have \eqref{E:extra21-eq7} becomes
\begin{equation} \label{E:extra21-eq6}
\begin{split}
2 \Xi_N  
& = \int_0^T  \int_0^{1- \epsilon} \int_{v_1} \int_{v_2} \Psi' (y) f_0 ((X_N, V_N)(0; t, N^{-1} y, v_1, v_2))   \tilde{\alpha} (t, 0, v_1, v_2)   dv_2 dv_1 dy dt \\
& + \int_0^T  \int_0^{1 - \epsilon} \int_{v_1} \int_{v_2}   \Psi' (y) f_0 ((X_N, V_N) (0; t^*, N^{-1} y, v_1, v_2)) \tilde{\alpha} (t^*, 0, -v_1, v_2)  dv_2 dv_1 dy dt^* \\
& + R_{4,N} + R_{3,N} + R_{2, N}  +  R_{1,N}  + O(\frac{1}{N \epsilon})  \\
\end{split}
\end{equation}
We recall that $\tilde{\alpha} (t^*, 0, v_1, v_2) = -\tilde{\alpha} (t^*, 0, -v_1, v_2)$ for each $t^* \in [0, T)$, which causes the cancellation of the first two terms in \eqref{E:extra21-eq6}. This leads us to
\begin{equation} \label{E:extra21-eq10}
\begin{split}
2 \Xi_N  
& = R_{4,N} + R_{3,N} + R_{2, N}  +  R_{1,N} + O(\frac{1}{N \epsilon})   .  \\
\end{split}
\end{equation}

We now estimate the error terms $R_{1,N}$, $R_{2,N}$, $R_{3,N}$ and $R_{4,N}$, making use of Lemma \ref{NS3-lm3.1}, Corollary \ref{NS3-lm3.1-cor}, Lemma \ref{t1t2mapping-model} and Lemma \ref{t1t2mapping-real-error}.

\begin{flushleft}
{\it{Estimate of $R_{1,N}$ (defined in \eqref{E:extra21-R1N-def}): }}  
\end{flushleft}
From Lemma \ref{t1t2mapping-model} and Lemma \ref{t1t2mapping-real-error}, we have $|t^* - t| = O(\frac{1}{N \epsilon})$, $ |\tilde{y} - y| = O(\frac{1}{N \epsilon}) $, $ |\tilde{v}_1 - v_1| = O(\frac{1}{N \epsilon}) $, $ |\tilde{v}_2 - v_2| = O(\frac{1}{N \epsilon}) $. Notice that $\|\nabla_{t, v_1, v_2} \tilde{\alpha} \|_{L^\infty} \lesssim 1$, $\|\partial_y \tilde{\alpha} ( t, N^{-1} y, -v_1, v_2)\|_{L^\infty} \lesssim 1/N$ since $\tilde{\alpha}$ is a test function. Due to Corollary \ref{NS3-lm3.1-cor}, there exists $y_0 >0$ (depends on $f_0$) independent of $N$ and small enough such that $\supp_x f^N \subset (N^{-1} y_0, 1-N^{-1} y_0)$. Hence the bounds for the integral on $y$ can be replaced by $\int^{1- \epsilon}_{y=y_0}$, and $|\Psi'(y)| \leq |\Psi'(y_0)|$, $|\Psi''(y)| \leq |\Psi''(y_0)|$, since for any $t \in [0, T]$ and $(v_1, v_2) \in \mathbb{R}^2$,  $N^{-1}y \in (N^{-1} y_0, 1-N^{-1} y_0)$ if $f^N (t, N^{-1} y, v_1, v_2) \neq 0$. Notice that $N^{-1} \tilde{y} =\tilde{x} = X_N (t^*;t, x, v_1, v_2) \in \supp_x f^N \subset (N^{-1} y_0, 1-N^{-1} y_0)$, so $|\Psi'(\tilde{y})| \leq |\Psi'(y_0)|$, $|\Psi''(\tilde{y})| \leq |\Psi''(y_0)|$ hold too. Using these facts, we deduce
\begin{equation}
\begin{split}
& \quad |\Psi' (y)  \tilde{\alpha} (t, N^{-1} y, v_1, v_2) -  \Psi' (\tilde{y}) \tilde{\alpha} ( t^*, N^{-1} \tilde{y}, \tilde{v}_1, \tilde{v}_2)| \\
& \leq |\Psi' (y)  -  \Psi' (\tilde{y})| | \tilde{\alpha} (t, N^{-1} y, v_1, v_2)| +| \Psi' (\tilde{y}) | |\tilde{\alpha} (t, N^{-1} y, v_1, v_2)- \tilde{\alpha} ( t^*, N^{-1} \tilde{y}, \tilde{v}_1, \tilde{v}_2)| \\
& \lesssim |\tilde{y}-y| + |t^*-t| + |\tilde{v}_1- v_1| + |\tilde{v}_2- v_2| \\
& \lesssim \frac{1}{N \epsilon} , \\
\end{split}
\end{equation}
from which we obtain 
\begin{equation} \label{E:extra21-R1N-est}
|R_{1, N}| \lesssim \frac{1}{N \epsilon} .
\end{equation}

\begin{flushleft}
{\it{Estimate of $R_{2,N}$ (defined in \eqref{E:extra21-R2N-def}): }}  
\end{flushleft}
Notice that $\big| |\mathcal{J}_N| -1 \big| \lesssim \frac{1}{N \epsilon^3} $ for any $t^*\in [0, T]$ (by Lemma \ref{t1t2mapping-real-error}). Moreover, again due to Corollary \ref{NS3-lm3.1-cor},  
for any $t \in [0, T]$ and $(v_1, v_2) \in \mathbb{R}^2$, $N^{-1} \tilde{y} =\tilde{x}  \in (N^{-1} y_0, 1-N^{-1} y_0)$ if $f^N (t, N^{-1} \tilde{y}, v_1, v_2) \neq 0$. Hence for the integral $R_{2, N}$ we have $|\Psi' (\tilde{y}) |\leq |\Psi' (y_0)|$. We deduce, using $|\tilde{y} - y| \lesssim \frac{1}{N \epsilon}$: 
\begin{equation}   \label{E:extra21-R2N-est}
\begin{split}
|R_{2, N}|
& \leq  \int_{\{(t^*, \tilde{x}, \tilde{v}_1, \tilde{v}_2) : y \in (0, 1-\epsilon], v_1 \in \mathbb{R}, v_2 \in \mathbb{R}, t \in [0, T]\}}    | \Psi' (\tilde{y}) |   f_0 ((X_N, V_N) (0; t^*, N^{-1} \tilde{y}, -\tilde{v}_1, \tilde{v}_2))  \\
& \quad \cdot |\tilde{\alpha} (t^*, N^{-1} \tilde{y}, \tilde{v}_1, \tilde{v}_2)| d t^* d\tilde{v}_2 d\tilde{v}_1 d\tilde{y} \cdot \big\| |\mathcal{J}_N| -1 \big\|_{L^\infty}    \\
& \lesssim \frac{1}{N \epsilon^3} \int_{\{(t^*, \tilde{y}, \tilde{v}_1, \tilde{v}_2) : y \in (0, 1-\epsilon], v_1 \in \mathbb{R}, v_2 \in \mathbb{R}, t \in [0, T]\}}   |\Psi' (y_0)| f_0 ((X_N, V_N) (0; t^*, N^{-1} \tilde{y}, -\tilde{v}_1, \tilde{v}_2))  \\
& \quad \cdot  |\tilde{\alpha} (t^*, N^{-1} \tilde{y}, \tilde{v}_1, \tilde{v}_2)|  d t^* d\tilde{v}_2 d\tilde{v}_1 d\tilde{y}    \\
& \leq  \frac{1}{N \epsilon^3} \int_{\{(t^*, \tilde{y}, \tilde{v}_1, \tilde{v}_2) : \tilde{y} \in (0, 2], v_1 \in \mathbb{R}, v_2 \in \mathbb{R}, t \in [0, T]\}}  |\Psi' (y_0)| f_0 ((X_N, V_N) (0; t^*, N^{-1} \tilde{y}, -\tilde{v}_1, \tilde{v}_2))  \\
& \quad \cdot |\tilde{\alpha} (t^*, N^{-1} \tilde{y}, \tilde{v}_1, \tilde{v}_2)|  d t^* d\tilde{v}_2 d\tilde{v}_1 d\tilde{y}  \\
& \lesssim \frac{1}{N \epsilon^3} . \\
\end{split}
\end{equation}

\begin{flushleft}
{\it{Estimate of $R_{3,N}$ (defined in \eqref{E:extra21-R3N-def}): }}  
\end{flushleft}
Notice that for any fixed $(y, v_1, v_2)$, the mapping $t \mapsto t^*$ is a translation, where the amount of the translation is a function of $(y, v_1, v_2)$ and is of $O(\frac{1}{N \epsilon})$. We have $|\tilde{I}_{j,N}| \lesssim \frac{1}{N \epsilon}$ for each $(y, v_1, v_2)$ and $j=1, \ 2$. Also, again due to Corollary \ref{NS3-lm3.1-cor}, $\supp_x f^N \subset (N^{-1} y_0, 1-N^{-1} y_0)$, and hence for the integral $R_{3, N}$ we have $|\Psi' (y) |\leq |\Psi' (y_0)|$. From this we deduce 
\begin{equation} \label{E:extra21-R3N-est}
\begin{split}
& \quad |R_{3,N} |  \\
& \leq | \int_{y_0}^{1-\epsilon} \int_{v_1} \int_{v_2} \int_{t^* \in \tilde{I}_{1,N} }  \Psi' (y) f_0 ((X_N, V_N)(0; t^*, N^{-1} y, v_1, v_2)) \tilde{\alpha} (t^*, N^{-1} y, -v_1, v_2) dt^* dv_2 dv_1 dy |  \\
& \qquad + | \int_{y_0}^{1-\epsilon} \int_{v_1} \int_{v_2} \int_{t^* \in \tilde{I}_{2,N} }  \Psi' (y) f_0 ((X_N, V_N)(0; t^*, N^{-1} y, v_1, v_2)) \tilde{\alpha} (t^*, N^{-1} y, -v_1, v_2) dt^* dv_2 dv_1 dy |  \\
& \lesssim \int_{y_0}^{1-\epsilon} \int_{v_1} \int_{v_2} \int_{t^* \in \tilde{I}_{1,N} }  |\Psi' (y_0)| |f_0 ((X_N, V_N)(0; t^*, N^{-1} y, v_1, v_2))| |\tilde{\alpha} (t^*, N^{-1} y, -v_1, v_2)| dt^* dv_2 dv_1 dy   \\
& \qquad + \int_{y_0}^{1-\epsilon} \int_{v_1} \int_{v_2} \int_{t^* \in \tilde{I}_{2,N} }  |\Psi' (y_0)| |f_0 ((X_N, V_N)(0; t^*, N^{-1} y, v_1, v_2))| |\tilde{\alpha} (t^*, N^{-1} y, -v_1, v_2)| dt^* dv_2 dv_1 dy   \\
& \lesssim \frac{1}{N \epsilon} .  \\
\end{split}
\end{equation}

\begin{flushleft}
{\it{Estimate of $R_{4,N}$ (defined in \eqref{E:extra21-R4N-def}): }}  
\end{flushleft}
Noticing that $\| \partial_x \tilde{\alpha} \|_{L^\infty } \lesssim 1$, and by Corollary \ref{NS3-lm3.1-cor} the bounds for the $y$-integration can be replaced by $\int_{y=y_0}^{1}$, we can estimate the terms involving $\partial_x \tilde{\alpha}$ in \eqref{E:extra21-eq6} as 
\begin{equation} \label{E:extra21-R4N-est}
\begin{split}
|R_{4,N} |
& \leq  |\Psi' (y_0)| \Big\{  \int_0^T  \int_{y_0}^{1-\epsilon} \int_{v_1} \int_{v_2}  |f_0 ((X_N, V_N)(0; t, N^{-1} y, v_1, v_2))| \\
& \quad  \cdot  | \partial_x \tilde{\alpha}  (t, N^{-1} \overline{y}^1, v_1, v_2) | N^{-1} y  dv_2 dv_1 dy dt  \\
& \quad + \int_0^T  \int_{y_0}^{1-\epsilon} \int_{v_1} \int_{v_2}  | f_0 ((X_N, V_N) (0; t^*, N^{-1} y, v_1, v_2)) | \\
& \quad  \cdot  | \partial_x \tilde{\alpha} (t^*, N^{-1} \overline{y}^2 , -v_1, v_2) | N^{-1}  y dv_2 dv_1 dy dt^* \Big\} \\
& \lesssim \frac{1}{N}  . \\
\end{split}
\end{equation}
Plugging the estimates for the error terms \eqref{E:extra21-R1N-est}, \eqref{E:extra21-R2N-est}, \eqref{E:extra21-R3N-est}, \eqref{E:extra21-R4N-est} into \eqref{E:extra21-eq10}, we have
\begin{equation*}
| 2 \Xi_N | \lesssim \frac{1}{N \epsilon^3} =  \frac{1}{N^{7/10}}  .
\end{equation*}
Combining this together with \eqref{E:extra21-eq0-2} and recalling \eqref{E:extra21-eq0-1} immediately gives \eqref{E:extra21}, and therefore \eqref{E:extra20} is proved.
\end{proof}

\section{A Model with Finite Magnetic Confinement} \label{SectionFinite}

The external magnetic field $B_{ext, N}$ given in Section \ref{SectionSetup} can be replaced by a \emph{finite} version, which is physically more reasonable: Let $N \geq 8$, and $b(x)$ be a $C^1$, piecewise $C^3$, compactly supported function on the closed half-line $[0, +\infty)$ that satisfies
\begin{equation} \label{b-prop-finite}
\begin{split}
&   b' (x) >0  , \ b'' (x) <0 , \text{ and } b''' (x) >0 \text{ on } [0, 1) , \\
& b (x) = 0 \ \text{when} \ x \in (1, +\infty) . \\
\end{split}
\end{equation} 
(Notice that in \eqref{b-prop-finite} the function $b(x)$ no longer blows up to $\infty$ when $x \rightarrow 0$. ) We define $B_{ext,N}$, $\psi_{ext,N}$ and $\Psi$ as in \eqref{BextN-b-rescale}, \eqref{psiextN-potentialdef} and \eqref{Psi-potentialdef}, respectively. 
Now the function $\Psi$ is a $C^2$, piecewise $C^4$, bounded and compactly supported function on $[0, +\infty)$ that satisfies
\begin{equation} \label{Psi-prop-finite}
\begin{split}
& \Psi (x) >0,  \ \Psi'' (x) >0 , \ \Psi''' (x) <0 ,  \ \Psi'''' (x) >0 \text{ on } [0, 1) , \\
& \Psi (x) = 0 \ \text{when} \ x \in (1, +\infty) . \\ 
\end{split}
\end{equation} 
Recall that $ B_{ext, N} (x) = \partial_x \psi_{ext, N} (x) $, and  
\begin{equation} \label{psiextN-def-finite}
\begin{split}
& \psi_{ext, N} (x) = \Psi(Nx) \text{ for } x \in [0, \frac{1}{2}], \\
& \psi_{ext, N} (x) = \Psi(N(1-x)) \text{ for } x \in [\frac{1}{2}, 1] . \\
\end{split}
\end{equation}
We take the initial-boundary data as described in Section \ref{SectionSetup}. Again, without loss of generality, we assume $N$ is sufficiently large such that \eqref{cond-N} holds.

We use $(f^N, E^N_1, E^N_2, B^N)$ to denote the solutions for this 1.5D RVM with this \emph{finite} external magnetic confinement, assuming that these solutions exist. It turns out that if $\Psi$ is chosen to be large enough (though finite), then the plasma is still confined away from the boundary, as stated in the following lemma, which is parallel to Lemma \ref{NS3-lm3.1}:

\begin{lemma} \label{NS3-lm3.1-finite}
Suppose $\supp_{x, v} f_0 (x, v) \subset [\epsilon_0, 1-\epsilon_0] \times \{|v| \leq k_0 \} $. 
Denote $P_N (t) := \sup \{ |v|: f^N (t, x, v) \neq 0 \text{ for some } x \in \Omega \}$. We have \\
1) 
\begin{equation}  \label{NS3-lm3.1-PN-finite}
P_N (t) \leq C_v  := k_0 + 2 C_1 T , \text{ for all } t \in  [0, 2T]  . 
\end{equation}
Hence the support of $f^N$ in $v$ is contained in the disk $\overline{D}_{C_v}$. \\ 
2) When $N \geq \epsilon_0^{-1}$,
\begin{equation} \label{NS3-lm3.1-psiextN-finite}
\|\psi_{ext, N} \|_{L^\infty (\supp_x f^N )} \leq C_2 .
\end{equation}
where
$ C_2  :=   2 k_0 + 4 C_1 T + 2 C_1  $ is as defined in Lemma \ref{NS3-lm3.1}. \\
3) If $\Psi$ is taken such that $\Psi (0) > C_2$ (Notice that $C_2$ does not depend on $\Psi$), then the support of $f^N$ in $x$ stays away from the boundary $\partial \Omega$ with a positive distance no less than $\Psi^{-1} (C_2)$, i.e. $dist ( \supp_x f^N, \partial \Omega ) \geq \Psi^{-1} (C_2) > 0$ on $[0, 2T]$. 
\end{lemma}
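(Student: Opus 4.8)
The plan is to mirror the proof of Lemma~\ref{NS3-lm3.1} for parts 1) and 2), which I expect to go through essentially verbatim: the Maxwell system \eqref{MaxwellBext}, the conserved energy, and the cancellation along characteristics between the $B^N$ and $B_{ext,N}$ terms are all insensitive to whether $\psi_{ext,N}$ blows up at $\partial\Omega$ or merely stays bounded. Part 3) is the one genuinely new statement, and I would derive it by feeding the bound from part 2) into the monotonicity argument of Corollary~\ref{NS3-lm3.1-cor}.

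For part 1): the $L^\infty$ bound $\|(E^N_1,E^N_2,B^N)\|_{L^\infty([0,T]\times\Omega)}\le C_1$ of \eqref{NS3-Section2} still holds, since its proof in \cite{NS3} uses only energy conservation and the initial--boundary data. Along the trajectory ODE associated with \eqref{VlasovBext} (with the finite $B_{ext,N}$) one has $\frac{d}{ds}|V_N|^2 = 2V_N\cdot E^N(s,X_N(s))$ — the magnetic terms drop out — so the Gronwall argument of Lemma~\ref{NS3-lm3.1} yields $|V_N(s)|\le |v|+2C_1T$ on $[0,T]$ and hence $P_N(t)\le k_0+C_1T=:C_v$. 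For part 2): I would set $\psi^N(\tau,y):=\int^y_{1/2}B^N(\tau,z)\,dz$ and $p(\tau,y,w):=w_2+\psi^N(\tau,y)+\psi_{ext,N}(y)$; differentiating $p$ along the characteristics and using $\partial_tB^N=-\partial_xE^N_2$ together with the cancellation of the $\hat V_{1,N}(B^N+B_{ext,N})$ terms gives $\frac{d}{ds}p(s,X_N(s),V_N(s)) = E^N_2(s,\tfrac12)$. Integrating and estimating $|V_{2,N}|\le C_v$, $|\psi^N|\le C_1$, $\int^s_0|E^N_2(\tau,\tfrac12)|\,d\tau\le C_1T$, together with the observation that $\psi_{ext,N}$ vanishes on $[\epsilon_0,1-\epsilon_0]$ once $N^{-1}<\epsilon_0$ (there $\psi_{ext,N}(x)=\Psi(Nx)$ with $Nx>1$, and $\Psi\equiv0$ on $(1,\infty)$), so $\psi_{ext,N}=0$ at the initial position, gives $\|\psi_{ext,N}\|_{L^\infty(\supp_x f^N)}\le C_2=2k_0+2C_1T+2C_1$. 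I would then stress that $C_2$ depends only on the initial--boundary data ($C_1,k_0,T$) and not on $\Psi$, so the hypothesis $\Psi(0)>C_2$ in 3) is meaningful and not circular.

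For part 3): by \eqref{Psi-prop-finite}, $\Psi$ is continuous and strictly decreasing on $[0,1]$ (since $\Psi'=b\le-c_0<0$), with $\Psi(1)=0$ and $\Psi\equiv0$ on $(1,\infty)$, so $\Psi$ maps $[0,1]$ bijectively onto $[0,\Psi(0)]$; since $0<C_2<\Psi(0)$, the value $y_0:=\Psi^{-1}(C_2)$ lies in $(0,1)$. Take $x\in\supp_x f^N$ near the left boundary. If $Nx\le1$, then $\psi_{ext,N}(x)=\Psi(Nx)\le C_2=\Psi(y_0)$ by 2), and strict monotonicity of $\Psi$ forces $Nx\ge y_0$, i.e. $x\ge N^{-1}y_0$; if $Nx>1$, then trivially $x>N^{-1}>N^{-1}y_0$ since $y_0<1$. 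The symmetric argument at $x=1$ (using $\psi_{ext,N}(x)=\Psi(N(1-x))$ on $[\tfrac12,1]$) gives $1-x\ge N^{-1}y_0$. Hence $\supp_x f^N\subset(N^{-1}\Psi^{-1}(C_2),\,1-N^{-1}\Psi^{-1}(C_2))$ on $[0,T]$, which is exactly the asserted positive-distance confinement, paralleling Corollary~\ref{NS3-lm3.1-cor}.

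I do not expect a serious obstacle here: the finite model confines the plasma by the same reflection/energy mechanism as soon as the barrier height $\Psi(0)$ exceeds the $N$-independent ceiling $C_2$. The point I would treat with most care is precisely the verification that $C_2$ does not depend on $\Psi$ (which rests on $C_1$ depending only on the data, on $|\psi^N|=|\int^y_{1/2}B^N|\le C_1$, and on $\psi_{ext,N}$ vanishing at the initial positions); and, unlike in the infinite case of Lemma~\ref{NS3-lm3.1}, one must observe that without $\Psi(0)>C_2$ the constraint $\psi_{ext,N}\le C_2$ is vacuous and no confinement can be claimed, so the hypothesis is genuinely needed.
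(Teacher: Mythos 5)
Your proposal is correct and follows essentially the paper's own route: parts 1) and 2) are taken over verbatim from Lemma \ref{NS3-lm3.1} (Gronwall on $|V_N|$ using only $\|E^N\|_{L^\infty}\leq C_1$, then the invariant $p=v_2+\psi^N+\psi_{ext,N}$ whose derivative along characteristics is $E_2^N(s,\tfrac12)$), and part 3) is the same monotonicity argument the paper gives, which you actually carry out in more detail than the paper (the paper essentially just asserts 3) from 2) and $\Psi(0)>C_2$). One caveat: what your computation genuinely yields is $\supp_x f^N\subset\bigl(N^{-1}\Psi^{-1}(C_2),\,1-N^{-1}\Psi^{-1}(C_2)\bigr)$, i.e.\ a distance at least $N^{-1}\Psi^{-1}(C_2)$, which is the correct scaling (consistent with Corollary \ref{NS3-lm3.1-cor}), whereas the lemma as printed claims the bound $\Psi^{-1}(C_2)$ without the factor $N^{-1}$; so you should not describe your bound as ``exactly'' the asserted one --- the mismatch is a slip in the paper's wording rather than a gap in your argument.
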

\begin{proof}
The proof for Lemma \ref{NS3-lm3.1-finite} is very similar to the one for Lemma \ref{NS3-lm3.1}. 
The ODE for the particle trajectory is
\begin{equation}
\left\{
\begin{aligned}
&\dot X_N = \hat V_{1,N} \\
&\dot V_{1,N} =  E^N_1 (t, X_N) + \hat V_{2,N} B^N (t, X_N) + \hat V_{2,N} B_{ext, N} (X_N) \\
&\dot V_{2,N} =  E^N_2 (t, X_N) - \hat V_{1,N} B^N (t, X_N) - \hat V_{1,N} B_{ext, N} (X_N)
\end{aligned}
\right.
\end{equation}
with initial data $X_N(0) = x$, $V_{1,N} (0) = v_1$, $V_{2,N} (0) = v_2$. 
The proofs for 1) and 2) are exactly the same as the proof for 1) and 2) in Lemma \ref{NS3-lm3.1} so we omit them. 

We now prove 3). Assume $\Psi (0) > C_2$ (Notice that $C_2$ does not depend on $\Psi$). Recall the monotonicity assumption in \eqref{Psi-prop-finite}, we have $dist ( \supp_x f^N, \partial \Omega ) \geq \Psi^{-1} (C_2) > 0$ on $[0, 2T]$ (Notice that $\Psi (0) > C_2 > 0$ together with \eqref{Psi-prop-finite} implies that $ \Psi^{-1} (C_2)$ exists and is in $(0, 1)$). Hence the support of $f^N$ in $x$ stays away from the boundary $\partial \Omega$ with a positive distance, i.e. $dist ( \supp_x f^N, \partial \Omega ) \geq \Psi^{-1} (C_2) > 0$ on $[0, 2T]$. The proof of the lemma is complete.
\end{proof}
 
Moreover, similarly as Corollary \ref{NS3-lm3.1-cor}, we also have  
\begin{corollary} \label{NS3-lm3.1-cor-finite}
There exists $y_0 >0$ (depends on $f_0$) independent of $N$ and small enough such that $\supp_x f^N \subset (N^{-1} y_0, 1-N^{-1} y_0)$ for all $t \in [0, 2T]$. For any $x \in \supp_x f^N$, we have $|\Psi (Nx) |\leq |\Psi (y_0)|$, $|\Psi' (Nx) |\leq |\Psi' (y_0)|$, $|\Psi'' (Nx) |\leq |\Psi'' (y_0)|$, $|\Psi''' (Nx) |\leq |\Psi''' (y_0)|$. 
\end{corollary}

\begin{proof}
Let $y_0 :=  \Psi^{-1} (C_2)$. The proof for this corollary is essentially the same as the one for Corollary \ref{NS3-lm3.1-cor} so we omit it.
\end{proof}

Due to Lemma \ref{NS3-lm3.1-finite}, we learn that if $\Psi$ is chosen to satisfy \eqref{Psi-prop-finite} and $\Psi (0) > C_2 $, then no boundary condition on $f^N$ is needed for the 1.5D RVM model. Following the proof in \cite{NS3}, we can establish the global well-posedness and $C^1$ regularity on $[0, 2T]$ (\emph{but not on any larger time interval}) for the solution $(f^N, E^N_1, E^N_2, B^N)$. By the same argument as in Section \ref{SectionWL} -- Section \ref{ProofofMainTheorem}, we obtain exactly the same result as Theorem \ref{mtheorem} for this \emph{finitely-confined} model. This is because that Theorem \ref{mtheorem} only concerns the behavior of the plasma when $N \rightarrow +\infty$.

\section{A Two-Species Model} \label{SectionTwoSpecies}

We can also consider the two-species 1.5D RVM system on a bounded interval $\Omega = (0, 1)$, with the same external magnetic field $ B_{ext, N}$ as described in Section \ref{Intro} and Section \ref{SectionSetup}. The Vlasov equation is now
\begin{equation}  \label{VlasovBext-TwoSpecies}
\partial_t f^\pm + \hat{v}_1 \partial_x f^\pm \pm ( E_1 + \hat{v}_2 B + \hat{v}_2 B_{ext, N} ) \partial_{v_1} f^\pm \pm ( E_2 - \hat{v}_1 B - \hat{v}_1 B_{ext, N} ) \partial_{v_2} f^\pm =0   
\end{equation} 
with $f^\pm (t, x, v)$ being the particle density function for the ions and electrons, respectively. The Maxwell equations remain the same form as \eqref{MaxwellBext}:
\begin{equation}  \label{MaxwellBext-TwoSpecies}
\begin{split}
& \partial_t E_1=  - j_1 \ , \ \partial_x E_1 = \rho \ , \\
& \partial_t E_2 = - \partial_x B - j_2 \ , \\
& \partial_t B = - \partial_x E_2 \ . \\ 
\end{split}
\end{equation} 
with $\rho (t, x) :=  \int_{\mathbb{R}^2} (f^+ (t, x, v) - f^- (t,x, v) ) dv$ and $\textbf{j} (t, x) :=  \int_{\mathbb{R}^2} \hat{v} (f^+ (t, x , v) - f^- (t,x, v) ) dv$. Note that we have normalized the speed of light as well as the unit mass and charge of the particles to be $1$ since these quantities play minor roles in the qualitative analysis, while in reality the ions are much heavier than the electrons.

Similar as in the one-species case, we put down the following initial-boundary conditions:
\begin{equation} \label{boundarycondBext-TwoSpecies}
\begin{split}
& 0 \leq f^\pm (0, x, v) = f^\pm_0 (x, v) \in C^1_0 (\Omega \times\mathbb{R}^2)   , \\
& \supp_{x, v} f^\pm_0 (x, v) \subset [\epsilon_0, 1-\epsilon_0] \times \{|v| \leq k_0 \}  , \\
&  E_1(0, x) =  \int^x_0 \int_{\mathbb{R}^2} ( f^+_0 (y, v) -f^-_0 (y, v) ) dv dy + \lambda   =: E_{1, 0} (x) \in C^1 , \  \ E_1(t, 0) \equiv \lambda ,    \\
& E_2(0, x) = E_{2, 0} (x) \in C^1 , \ B(0, x) = B_{0} (x)  \in C^1 , \\
& E_2 (t, 0) = E_{2, b} (t)  \in C^1  , \  B (t, 1)  = B_b (t)  \in C^1 , \\
\end{split}
\end{equation} 
where $E_{2, b} $, $E_{2, 0}$, $B_b$ and $B_0$ satisfy
\begin{equation}
E_{2,b} (0) = E_{2, 0} (0), \ B_b (0) = B_0 (1)  
\end{equation}
for the sake of compatibility.

It can be shown that the particles will not hit the boundary, due to the confining property of $B_{ext,N}$ (See Lemma \ref{NS3-lm3.1-TwoSpecies} below). Therefore no boundary condition on $f^\pm$ is needed for \eqref{VlasovBext-TwoSpecies}.


We also consider the two-species 1.5D RVM on $\Omega$ with no external magnetic field.
The Vlasov equation is
\begin{equation}  \label{VlasovBextspecular-TwoSpecies}
\partial_t f^\pm + \hat{v}_1 \partial_x f \pm ( E_1 + \hat{v}_2 B ) \partial_{v_1} f^\pm \pm ( E_2 - \hat{v}_1 B   ) \partial_{v_2} f^\pm =0  \ , 
\end{equation} 
and the Maxwell remain the same form as \eqref{MaxwellBextspecular}:
\begin{equation}  \label{MaxwellBextspecular-TwoSpecies}
\begin{split}
& \partial_t E_1=  - j_1 \ , \ \partial_x E_1 = \rho \ , \\
& \partial_t E_2 = - \partial_x B - j_2 \ , \\
& \partial_t B = - \partial_x E_2 \ . \\ 
\end{split}
\end{equation}
with $\rho (t, x) :=  \int_{\mathbb{R}^2} (f^+ (t, x, v) - f^- (t,x, v) ) dv$ and $\textbf{j} (t, x) :=  \int_{\mathbb{R}^2} \hat{v} (f^+ (t, x , v) - f^- (t,x, v) ) dv$. 
Again we put down the initial-boundary conditions \eqref{boundarycondBext-TwoSpecies} together with the specular boundary condition 
\begin{equation}  \label{specularBC-TwoSpecies}
f^\pm (t, x, v_1, v_2) = f^\pm (t, x, -v_1, v_2 ), \ \text{for} \ x = 0, \ 1 , 
\end{equation}



We define the weak solutions of the two-species RVM by analogous ways as in Definition \ref{D:weak-Bext} and Definition \ref{D:weak}. Without loss of generality we assume $N$ is sufficiently large such that \eqref{cond-N} is satisfied, with $\supp_x f_0 (x, v)$ in \eqref{cond-N} being replaced by $\cup_\pm \supp_x f_0^\pm (x, v)$.
 
The global well-posedness and $C^1$ regularity of the system \eqref{VlasovBext-TwoSpecies} and \eqref{MaxwellBext-TwoSpecies} with the conditions \eqref{boundarycondBext-TwoSpecies} can be given in essentially the same way as in \cite{NS3}. In fact, by essentially the same argument as in Section 2 in \cite{NS3} and in Section \ref{SectionWL} in this paper, we have
\begin{lemma} \label{lm3-TwoSpecies}
For any $T>0$, there exists a constant $\tilde{C} >0$ (which only depends on the initial-boundary data and $T$, in particular, independent of $N$), such that for all $N$ large enough such that \eqref{cond-N} holds 
\begin{equation}
\|(E^N_1, E^N_2, B^N) \|_{C^1 ([0, T] \times \Omega)} \leq \tilde{C} .  
\end{equation}
\end{lemma}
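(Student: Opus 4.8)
The plan is to reproduce, species by species, the chain of a priori estimates of Section \ref{SectionWL}, keeping track of the overall sign $\pm$ carried by the Lorentz force on $f^\pm$. First I would record the uniform $L^\infty$ bound on the internal fields. The total energy $\mathcal{E}(f^+,f^-) := \int_{\mathbb{R}^2}\int_\Omega \la v\ra (f^+ + f^-)\,dx\,dv + \frac12\int_\Omega (|E_1|^2 + |E_2|^2 + |B|^2)\,dx$ is conserved along time — the external magnetic field does no work, since the force $\hat v^\perp(B^N + B_{ext,N})$ is orthogonal to $\hat v$, and hence to $v$ — and the only contribution of the boundary is controlled by the $C^1$ boundary data $E_{2,b}, B_b$ exactly as in Corollary 2.4 of \cite{NS3}. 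This yields $\|(E^N_1, E^N_2, B^N)\|_{L^\infty([0,T]\times\Omega)} \le C_1$ with $C_1$ depending only on the initial-boundary data and $T$.

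Next I would establish the two-species analogue of Lemma \ref{NS3-lm3.1}, which is Lemma \ref{NS3-lm3.1-TwoSpecies}. Along the $\pm$-characteristics one computes $\frac{d}{ds}|V^\pm_N|^2 = \pm 2 V^\pm_N\cdot E^N(s, X^\pm_N(s))$, because the $B^N$ and $B_{ext,N}$ contributions cancel in $V_1\dot V_1 + V_2\dot V_2$; Gronwall then gives $|V^\pm_N(s)| \le k_0 + C_1 T =: C_v$ on $[0,T]$, so $\supp_v f^{\pm,N}(t)\subset \overline{D}_{C_v}$. The confinement estimate is unchanged: differentiating $p^\pm(\tau,y,w) := \pm w_2 + \psi^N(\tau,y) + \psi_{ext,N}(y)$ along the $\pm$-characteristics and using $\partial_t B^N = -\partial_x E^N_2$ gives $\frac{d}{ds}p^\pm = \pm E^N_2(s,\tfrac12)$, whence $\|\psi_{ext,N}\|_{L^\infty(\supp_x f^{\pm,N})}\le C_2$ with $C_2$ independent of $N$. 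In particular Corollary \ref{NS3-lm3.1-cor} carries over: there is $y_0>0$ independent of $N$ with $\supp_x f^{\pm,N}\subset (N^{-1}y_0, 1-N^{-1}y_0)$ and $|\Psi^{(k)}(Nx)|\le |\Psi^{(k)}(y_0)|$ for $k=0,1,2,3$ on that support.

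With these in hand I would run the Glassey--Strauss argument of Lemma \ref{NS3-lm4.1} to bound $\|\partial_x E^N_1\|_{L^\infty}$, $\|\partial_x E^N_2\|_{L^\infty}$, $\|\partial_x B^N\|_{L^\infty}$ uniformly in $N$. The bound on $\partial_x E^N_1$ is immediate from $\partial_x E^N_1 = \rho^N = \int_{\mathbb{R}^2}(f^{+,N} - f^{-,N})\,dv$, whose $L^1$ norm is at most $\|f^+_0\|_{L^1} + \|f^-_0\|_{L^1}$. For $\partial_x E^N_2$ and $\partial_x B^N$ one writes $k^{N,\pm} := E^N_2 \pm B^N$ via the representation \eqref{lm-4.1-eq1}, differentiates in $x$, uses $\partial_x = \frac{T_+ - S}{1-\hat v_1}$ near $x=0$ (and $\partial_x = \frac{S-T_-}{1+\hat v_1}$ near $x=1$), and substitutes the Vlasov equations $S f^{\pm,N} + \nabla_v\cdot(K^{N,\pm} f^{\pm,N}) = 0$ with $K^{N,\pm} := \pm(E^N_1 + \hat v_2 B^N + \hat v_2 B_{ext,N},\ E^N_2 - \hat v_1 B^N - \hat v_1 B_{ext,N})$, followed by integration by parts in $v$; the contributions of the two species are estimated separately and summed. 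The only place where the $N$-growth of $B_{ext,N}$ could hurt is the integral $\int\int B_{ext,N} f^{\pm,N}$ over the layer of width $O(1/N)$, but after integrating by parts in $v$ this is bounded by $\pi C_v^2\|f^\pm_0\|_{L^\infty}\,[\psi_{ext,N}(N^{-1}y_0) - \psi_{ext,N}(\tfrac12)] = \pi C_v^2\|f^\pm_0\|_{L^\infty}\,|\Psi(y_0)|$, which is independent of $N$. This telescoping cancellation — already the crux of Lemma \ref{NS3-lm4.1} — is the one nontrivial point, and it is untouched by the presence of two species.

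Finally, the time derivatives are recovered directly from \eqref{MaxwellBext-TwoSpecies}: $\partial_t E^N_1 = -j^N_1$, $\partial_t E^N_2 = -\partial_x B^N - j^N_2$, $\partial_t B^N = -\partial_x E^N_2$, with $|j^N_i| \le \int_{\mathbb{R}^2}(f^{+,N} + f^{-,N})\,dv \le \|f^+_0\|_{L^1} + \|f^-_0\|_{L^1}$, so the spatial-derivative bounds and the $L^\infty$ bound together give the claimed $\tilde C$ depending only on the initial-boundary data and $T$. The main obstacle is therefore purely bookkeeping — carrying the two sign patterns $K^{N,\pm}$ through the Glassey--Strauss splitting while verifying that the $B_{ext,N}$-layer integrals still collapse to the $N$-independent quantity $|\Psi(y_0)|$ — and no new analytic ingredient beyond Section \ref{SectionWL} is needed.
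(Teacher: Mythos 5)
Your proposal is correct and takes essentially the same route as the paper, which for this lemma simply invokes the one-species arguments of Section 2 of \cite{NS3} and Section \ref{SectionWL}; you have reproduced exactly that chain (energy/$L^\infty$ field bound, per-species momentum bound and confinement invariant, Glassey--Strauss splitting with the $N$-independent layer bound $\pi C_v^2|\Psi(y_0)|\|f_0^\pm\|_{L^\infty}$, and time derivatives recovered from Maxwell). Two cosmetic fixes: with your definition $p^- = -w_2+\psi^N+\psi_{ext,N}$ the electron characteristics give $\frac{d}{ds}p^- = +E^N_2(s,\tfrac12)$ rather than $-E^N_2(s,\tfrac12)$ (harmless, since only $|E^N_2|$ enters the bound), and the pointwise bounds on $\rho^N$ and $j^N$ should come from $\supp_v f^{N,\pm}\subset \overline{D}_{C_v}$ together with $\|f^{N,\pm}\|_{L^\infty}\le\|f^\pm_0\|_{L^\infty}$, yielding $\pi C_v^2\bigl(\|f^+_0\|_{L^\infty}+\|f^-_0\|_{L^\infty}\bigr)$, not from the $L^1_{x,v}$ norms.
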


In particular, again we have the observation that the particles can not hit the boundary if their initial position are away from it, due to the confining property of $B_{ext,N}$. We state this fact in the lemma below, whose proof is essentially the same as the one for Lemma \ref{NS3-lm3.1}. Therefore no boundary condition on $f$ is needed for \eqref{VlasovBext-TwoSpecies}.

\begin{lemma} \label{NS3-lm3.1-TwoSpecies}
Suppose $\supp_{x, v} f_0^\pm (x, v) \subset [\epsilon_0, 1-\epsilon_0] \times \{|v| \leq k_0 \} $. 
Denote $P_N^\pm (t)  :=  \sup \{ |v|: f^{N, \pm} (t, x, v) \neq 0 \text{ for some } x \in \Omega \}$ and $P_N (t)  :=  \max_{\pm} \{P_N^\pm (t)\} $. We have: \\
1) 
\begin{equation}  \label{NS3-lm3.1-PN-TwoSpecies}
P_N (t) \leq C_v  :=  k_0 + C_1 T , \text{ for all } t \in [0, T]  ,
\end{equation}
where $C_1$, $C_v$ are as defined in Lemma \ref{NS3-lm3.1}. 
Hence the support of $f^{N, \pm}$ in $v$ are contained in the disk $\overline{D}_{C_v}$. \\
2) When $N \geq \epsilon_0^{-1}$,
\begin{equation} \label{NS3-lm3.1-psiextN-TwoSpecies}
\|\psi_{ext, N} \|_{L^\infty (\cup_{\pm} \supp_x f^{N, \pm} )} \leq C_2  ,
\end{equation}
where
$ C_2  :=   2 k_0 + 2 C_1 T + 2 C_1  $ is as defined in Lemma \ref{NS3-lm3.1}.

\end{lemma}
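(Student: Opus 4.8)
The plan is to re-run the proof of Lemma~\ref{NS3-lm3.1} once for each species, the only genuinely new point being to track how the sign flip in the electron Vlasov equation \eqref{VlasovBext-TwoSpecies} propagates through the two conserved-type quantities used there. Fix a sign $\pm$ and consider the characteristic system attached to \eqref{VlasovBext-TwoSpecies},
\begin{equation*}
\dot X_N = \hat V_{1,N},\quad \dot V_{1,N} = \pm\bigl(E^N_1 + \hat V_{2,N}(B^N + B_{ext,N})\bigr),\quad \dot V_{2,N} = \pm\bigl(E^N_2 - \hat V_{1,N}(B^N + B_{ext,N})\bigr),
\end{equation*}
with data $(x,v_1,v_2)\in\supp_{x,v}f^\pm_0$. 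Differentiating $|V_N|^2$ along this flow, every magnetic contribution (internal and external) vanishes since $V_N\cdot(\hat V_{2,N},-\hat V_{1,N})=0$, leaving $\frac{d}{ds}|V_N|^2 = \pm\, 2\,V_N\cdot E^N(s,X_N(s))$, whose sign is irrelevant to the bound. Using the uniform-in-$N$ estimate $\|(E^N_1,E^N_2,B^N)\|_{L^\infty([0,T]\times\Omega)}\le C_1$ for the two-species system (which follows from the energy identity exactly as in Section~2 of \cite{NS3} and is restated in Lemma~\ref{lm3-TwoSpecies}), the inequality $|V_N(s)|^2 \le |v|^2 + 2C_1\int_0^s|V_N(\tau)|\,d\tau$ and the elementary argument of Lemma~\ref{NS3-lm3.1} give $P^\pm_N(t)\le k_0 + C_1 T$ on $[0,T]$. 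Taking the maximum over $\pm$ proves part~1) with $C_v = k_0 + C_1 T$, and hence $\supp_v f^{N,\pm}\subset\overline{D}_{C_v}$.

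For part~2), set $\psi^N(\tau,y):=\int^y_{1/2}B^N(\tau,z)\,dz$, so that $\partial_t B^N=-\partial_x E^N_2$ gives, along a characteristic, $\frac{d}{ds}\psi^N(s,X_N(s)) = -E^N_2(s,X_N(s)) + E^N_2(s,\tfrac12) + \hat V_{1,N}(s)B^N(s,X_N(s))$. The species-adapted invariant is
\begin{equation*}
p^\pm(\tau,y,w) := \pm\, w_2 + \psi^N(\tau,y) + \psi_{ext,N}(y).
\end{equation*}
The key observation is that $\frac{d}{ds}(\pm V_{2,N}) = E^N_2 - \hat V_{1,N}(B^N + B_{ext,N})$ is the \emph{same} expression for both species, because the flip $\dot V_{2,N}\mapsto -\dot V_{2,N}$ for the electrons is exactly absorbed by the $\pm$ in front of $w_2$; combined with the formula for $\frac{d}{ds}\psi^N$ above and $\frac{d}{ds}\psi_{ext,N}(X_N(s)) = \hat V_{1,N}(s)B_{ext,N}(X_N(s))$, all magnetic terms and the $E^N_2(\cdot,X_N)$ term telescope away, leaving $\frac{d}{ds}p^\pm(s,X_N(s),V_N(s)) = E^N_2(s,\tfrac12)$ for both signs. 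Integrating,
\begin{equation*}
\pm V_{2,N}(s) + \psi^N(s,X_N(s)) + \psi_{ext,N}(X_N(s)) = \pm v_2 + \psi^N(0,x) + \psi_{ext,N}(x) + \int^s_0 E^N_2(\tau,\tfrac12)\,d\tau,
\end{equation*}
and bounding the right-hand side by the triangle inequality --- using $\|(E^N_1,E^N_2,B^N)\|_{L^\infty}\le C_1$, the momentum bound from part~1), $|v_2|\le k_0$, $|\psi^N(0,x)|\le C_1/2$, and $|\psi_{ext,N}(x)|\le\|\psi_{ext,N}\|_{L^\infty([\epsilon_0,1-\epsilon_0])}$ since the initial position lies in $[\epsilon_0,1-\epsilon_0]$ --- yields $|\psi_{ext,N}(X_N(s))|\le C_2$ on $[0,T]$ for every characteristic of either species, which is \eqref{NS3-lm3.1-psiextN-TwoSpecies}.

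I do not expect a real obstacle: the whole argument is the single-species proof of Lemma~\ref{NS3-lm3.1} with the species index carried along, and the one algebraic check that must be gotten right is the sign convention in $p^\pm$ --- choosing the sign of $w_2$ to match the sign of the Lorentz force in \eqref{VlasovBext-TwoSpecies} makes the cancellations happen simultaneously for ions and electrons. The only external input is the uniform-in-$N$ $L^\infty$ bound on the internal fields for the two-species problem, i.e.\ Lemma~\ref{lm3-TwoSpecies}, proved as in Section~\ref{SectionWL} and Section~2 of \cite{NS3}; it is $N$-independent precisely because the external field does not appear in the energy identity (the force it generates is perpendicular to $V_N$). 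As in the remark after Lemma~\ref{NS3-lm3.1}, \eqref{NS3-lm3.1-psiextN-TwoSpecies} together with the growth of $\psi_{ext,N}$ near $\partial\Omega$ forces $dist(\cup_\pm\supp_x f^{N,\pm},\partial\Omega)>0$ on $[0,T]$, so no boundary condition on $f^{N,\pm}$ is required.
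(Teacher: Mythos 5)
Your proposal is correct and follows essentially the same route as the paper: the momentum bound is the verbatim one-species argument run for each species, and the confinement bound comes from the invariant built from $v_2$ and the total magnetic potential — your $p^\pm = \pm w_2 + \psi^N + \psi_{ext,N}$ is, for the electrons, just the negative of the paper's choice $w_2 - \psi^N - \psi_{ext,N}$, so the cancellation and the final triangle-inequality estimate giving $\|\psi_{ext,N}\|_{L^\infty(\supp_x f^{N,\pm})} \le C_2$ are identical. The sign bookkeeping in your unified invariant checks out, and the reliance on the uniform-in-$N$ field bound (Lemma \ref{lm3-TwoSpecies}) matches the paper's use of it.
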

{\it{Remark.} The inequality \eqref{NS3-lm3.1-psiextN-TwoSpecies} tells us that the supports of $f^{N, \pm}$ in $x$ stay away from the boundary $\partial \Omega$ with a positive distance, i.e. $dist ( \cup_{\pm} \supp_x f^{N, \pm}, \partial \Omega ) > 0$ on $[0, T]$. }
\begin{proof}
For 1) and 2), it suffices to prove
\begin{equation}  \label{NS3-lm3.1-PN-electron}
P_N^- (t) \leq C_v  = k_0 + C_1  T , \text{ for all } t \in [0, T]  
\end{equation}
and 
\begin{equation} \label{NS3-lm3.1-psiextN-electron}
\|\psi_{ext, N} \|_{L^\infty (  \supp_x f^{N, -} )} \leq C_2  ,
\end{equation}
respectively. Then combining \eqref{NS3-lm3.1-PN-electron}, \eqref{NS3-lm3.1-psiextN-electron} together with Lemma \ref{NS3-lm3.1} gives the desired results. 




The proof for \eqref{NS3-lm3.1-PN-electron} is very similar to the one for \eqref{NS3-lm3.1-PN} and we omit it. 
The ODE for the particle trajectory of an \emph{electron} is 
\begin{equation}
\left\{
\begin{aligned}
&\dot X_N = \hat V_{1,N} \\
&\dot V_{1,N} =  -E^N_1 (t, X_N) - \hat V_{2,N} B^N (t, X_N) - \hat V_{2,N} B_{ext, N} (X_N) \\
&\dot V_{2,N} =  -E^N_2 (t, X_N) + \hat V_{1,N} B^N (t, X_N) + \hat V_{1,N} B_{ext, N} (X_N)
\end{aligned}
\right.
\end{equation}
with initial data $X_N(0) = x$, $V_{1,N} (0) = v_1$, $V_{2,N} (0) = v_2$.


The proof for \eqref{NS3-lm3.1-PN-electron} is exactly the same as the proof for \eqref{NS3-lm3.1-PN} so we omit it. Next, we let $\psi^N (\tau, y) = \int^y_{1/2} B^N(\tau, z) dz$. 
We define
\begin{equation*}
p(\tau, y, w) :=  w_2 -\psi^N (\tau,y) - \psi_{ext,N} (y) ,
\end{equation*}
where $w = (w_1, w_2) \in \mathbb{R}^2$. Differentiating $p(\tau, y, w)$ along the characteristics, we obtain
\begin{equation*}
\begin{split}
& \quad \frac{d}{ds} p(s, X_N (s), V_N (s)) \\
& = \dot{V}_{2,N} (s) - \partial_t \psi^N (s, X_N (s)) - \dot{X}_N \partial_x \psi^N (s, X_N (s)) - \dot{X}_N \partial_x \psi_{ext,N} ( X_N (s)) \\
& = - E^N_2 (s, X_N(s)) + \hat{V}_{1,N} (s) [B^N (s,X_N(s)) + B_{ext,N} ( X_N(s))] \\
& \quad - \partial_t \psi^N (s, X_N (s)) - \hat{V}_{1,N} (s) [B^N (s,X_N(s)) + B_{ext,N} ( X_N(s))] \\
& = - E^N_2 (s, X_N(s)) - \partial_t \psi^N (  X_N (s)) \\
& = - E^N_2 (s, \frac{1}{2}) . \\
\end{split}
\end{equation*}
Here we used the fact that $\partial_t B^N = -\partial_x E^N_2$.
Integrating yields
\begin{equation*}
V_{2,N}(s) - \psi^N(s, X_N(s))- \psi_{ext,N}(X_N(s)) = v_2 - \psi^N (0,x) - \psi_{ext,N} (x) - \int^s_0 E^N_2 (\tau, \frac{1}{2}) d\tau ,
\end{equation*}
and hence
\begin{equation*}
|\psi_{ext,N}(X_N(s))| \leq 
|V_{2,N}(s)| + |\psi^N(s, X_N(s))| + | v_2| + |\psi^N (0,x)| + |\psi_{ext,N} (x)| + \int^s_0 |E^N_2 (\tau, \frac{1}{2})| d\tau .
\end{equation*}
Combining this with \eqref{NS3-Section2} and \eqref{NS3-lm3.1-PN} and noticing that when $N \geq \epsilon_0^{-1}$, $ \|\psi_{ext, N} \|_{L^\infty ([\epsilon_0, 1-\epsilon_0])} =0$, we have
\begin{equation*}
|\psi_{ext,N}(X_N(s))| \leq C_2 , \text{ for all } s \in [0, T] .
\end{equation*}
This inequality holds for all the trajectories. Hence we conclude \eqref{NS3-lm3.1-psiextN-electron}.
\end{proof}

The analogous result of Corollary \ref{NS3-lm3.1-cor} also holds by the same argument:

\begin{corollary} \label{NS3-lm3.1-cor-TwoSpecies}
There exists $y_0 >0$ (depends on $f_0$) independent of $N$ and small enough such that $\cup_\pm \supp_x f^{N, \pm} \subset (N^{-1} y_0, 1-N^{-1} y_0)$. For any $x \in \supp_x f^N$, we have $|\Psi (Nx) |\leq |\Psi (y_0)|$, $|\Psi' (Nx) |\leq |\Psi' (y_0)|$, $|\Psi'' (Nx) |\leq |\Psi'' (y_0)|$, $|\Psi''' (Nx) |\leq |\Psi''' (y_0)|$. 
\end{corollary}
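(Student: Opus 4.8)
The plan is to reproduce the proof of Corollary~\ref{NS3-lm3.1-cor} almost verbatim, with the single-species confinement estimate of Lemma~\ref{NS3-lm3.1} replaced by its two-species counterpart, Lemma~\ref{NS3-lm3.1-TwoSpecies}. First I would invoke \eqref{NS3-lm3.1-psiextN-TwoSpecies}, which asserts that $\|\psi_{ext, N}\|_{L^\infty(\cup_\pm \supp_x f^{N,\pm})} \leq C_2$ with \emph{exactly} the same constant $C_2 = 2k_0 + 2C_1 T + 2C_1$ as in the one-species case; this is the only place where the two-species analysis enters, and the essential point is that this bound is uniform in $N$ and common to both species. I would then set $y_0 := \Psi^{-1}(C_2)$, where $\Psi^{-1}$ denotes the inverse of $\Psi$ restricted to $[\Psi(1),+\infty)$; this is well defined since, by \eqref{Psi-prop}, $\Psi$ is strictly decreasing from $+\infty$ to $0$ on $(0,1]$ and $C_2 > 0$. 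Because $C_2$ depends only on the initial-boundary data and $T$, so does $y_0$, and in particular $y_0$ is independent of $N$.

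Next I would translate the potential bound into a support statement. For $x \in \cup_\pm \supp_x f^{N,\pm}$ with $x \in (0, \frac{1}{2}]$ one has $\psi_{ext, N}(x) = \Psi(Nx) \leq C_2 = \Psi(y_0)$, so the monotonicity of $\Psi$ forces $Nx \geq y_0$, i.e.\ $x \geq N^{-1} y_0$; symmetrically, for $x \in [\frac{1}{2},1)$ the identity $\psi_{ext, N}(x) = \Psi(N(1-x))$ gives $1 - x \geq N^{-1} y_0$. Hence $\cup_\pm \supp_x f^{N,\pm} \subset (N^{-1} y_0, 1 - N^{-1} y_0)$, which is the first assertion; in particular $\supp_x f^{N,\pm} \subset (N^{-1} y_0, 1 - N^{-1} y_0)$ for each sign.

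Finally, for the pointwise bounds on $\Psi$ and its first three derivatives: if $x \in \cup_\pm \supp_x f^{N,\pm}$ and $Nx \leq 1$, then $Nx \in [y_0,1]$, and the sign and monotonicity hypotheses in \eqref{Psi-prop} (namely $\Psi > 0$ with $\Psi' \leq -c_0 < 0$, so $\Psi$ is decreasing; $\Psi'' > 0$, so $\Psi'$ increases toward $0$; $\Psi''' < 0$, so $\Psi''$ is decreasing; $\Psi'''' > 0$, so $\Psi'''$ increases toward $0$ --- all on $(0,1]$) give $|\Psi(Nx)| \leq |\Psi(y_0)|$, $|\Psi'(Nx)| \leq |\Psi'(y_0)|$, $|\Psi''(Nx)| \leq |\Psi''(y_0)|$, $|\Psi'''(Nx)| \leq |\Psi'''(y_0)|$; and if $Nx > 1$ all four quantities vanish by the compact support of $\Psi$, so the inequalities hold trivially. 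I do not expect any genuine obstacle here: the entire content is bookkeeping once Lemma~\ref{NS3-lm3.1-TwoSpecies} is in hand, and the only point requiring a line of care is that the \emph{same} $y_0$ confines $f^{N,+}$ and $f^{N,-}$ simultaneously, which is immediate since \eqref{NS3-lm3.1-psiextN-TwoSpecies} is already stated for the union of the two supports.
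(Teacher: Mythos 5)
Your proposal is correct and follows essentially the same route as the paper: invoke the two-species confinement bound \eqref{NS3-lm3.1-psiextN-TwoSpecies}, set $y_0 := \Psi^{-1}(C_2)$ (inverse taken on $[\Psi(1),+\infty)$), and read off the support containment and the derivative bounds from the monotonicity properties in \eqref{Psi-prop}, exactly as in Corollary \ref{NS3-lm3.1-cor}. The only cosmetic difference is your writing $C_2 = 2k_0+2C_1T+2C_1$, which agrees with the paper's definition since $\psi_{ext,N}$ vanishes on $[\epsilon_0,1-\epsilon_0]$ for the large $N$ allowed by \eqref{cond-N}.
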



 
Let $(f^{N, \pm}, E^N_1, E^N_2, B^N)$ be the (global-in-time) solution of the system \eqref{VlasovBext-TwoSpecies} and \eqref{MaxwellBext-TwoSpecies}, with the conditions \eqref{boundarycondBext-TwoSpecies}. We want to obtain a result analogous to Theorem \ref{mtheorem} for the two-species RVM. For this, we repeat Section \ref{ModelCase} -- Section \ref{ProofofMainTheorem}. In particular, for the trajectories of the electrons, we have the following lemma in place of Lemma \ref{t1t2mapping-model}, which gives the definition of the "reflection point" $t^*$ for each $x \in \supp B_{ext, N}$, $t \in [0, T]$, $v_1 \in \mathbb{R}$, $v_2 \in \mathbb{R}$.



\begin{lemma} \label{t1t2mapping-model-electron}
Fix $\epsilon \in (0, 1)$ and let $x \in (0, \frac{1}{N} - \frac{1}{N} \epsilon] \cup [ 1- (\frac{1}{N} - \frac{1}{N} \epsilon), 1)$.
Let $t \in [0, T]$, $v_1 \in \mathbb{R}$, $v_2 \in \mathbb{R}$. Let $(X_N, V_{1,N}, V_{2,N})$ be a trajectory that takes the value $(x, v_1, v_2)$ at time $t$, given by the ODEs below (that is, the trajectory equations for electrons when the internal electromagnetic fields are removed):
\begin{equation}
\label{E:N-mock-electron}
\left\{
\begin{aligned}
&\dot X_N = \hat V_{1,N} \\
&\dot {V}_{1,N} =  - \hat {V}_{2,N} B_{ext, N} (X_N)   \\
&\dot {V}_{2,N} =  + \hat {V}_{1,N} B_{ext, N} (X_N)  
\end{aligned}
\right.
\end{equation}
Let $I_0 = I_0 (t, x, v_1, v_2)$ be the maximal time interval that contains $t$, and on which $X_N (s; t, x, v_1, v_2)$ lies in $\supp B_{ext, N}$. Then for any fixed $t$, $x$, $v_1$, $v_2$, there exists a unique $t^*$ in the same interval $I_0$ such that 
\begin{equation} \label{E:t1t2mapping-model-eq0-electron}
(X_N, V_{1,N}, V_{2,N}) (t;t,x,v_1, v_2) = (X_N, -V_{1,N}, V_{2,N}) (t^*;t,x,v_1, v_2) = ( x, v_1, v_2) .
\end{equation}
Moreover, $t^* -t$ only depends on $(x, v_1, v_2)$ and $|t^* -t| \lesssim  \frac{1}{N \epsilon}$. 
For any fixed $(x, v_1, v_2)$, $t \mapsto t^*$ as a function of $t$ is $C^\infty$ and invertible. The Jacobian of the inverse mapping $t^*  \mapsto t$ is $|J_N |= | \frac{\partial t}{\partial t^*} | = |J_N (x, v_1, v_2)| =1$.
\end{lemma}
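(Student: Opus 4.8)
The plan is to deduce Lemma \ref{t1t2mapping-model-electron} from the already-established Lemma \ref{t1t2mapping-model} by exploiting a sign symmetry, while also noting that the direct polar-coordinate argument of Lemma \ref{t1t2mapping-model} transcribes verbatim.

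First I would record the following symmetry between the ion model \eqref{E:N-mock} and the electron model \eqref{E:N-mock-electron}: a curve $s \mapsto (X(s), V_1(s), V_2(s))$ solves \eqref{E:N-mock} if and only if $s \mapsto (X(s), V_1(s), -V_2(s))$ solves \eqref{E:N-mock-electron}. This is immediate from substitution, using $\widehat{(-V_2)} = -\hat V_2$: the $\dot X$ equations coincide; the $\dot V_1$ equation of \eqref{E:N-mock-electron} applied to the flipped curve reads $\frac{d}{ds}V_1 = \hat V_2 B_{ext,N}(X)$, which is precisely the $\dot V_1$ equation of \eqref{E:N-mock}; and the $\dot V_2$ equation transforms in the matching way. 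In particular the maximal interval $I_0$ on which $X$ remains in $\supp B_{ext,N}$ is the same for the pair.

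Next, given $x \in \supp B_{ext,N}$, $t \in [0,T]$, $v_1, v_2 \in \mathbb{R}$ and the electron trajectory $(X_N, V_{1,N}, V_{2,N})$ with value $(x, v_1, v_2)$ at time $t$, I would introduce the ion trajectory $(\bar X_N, \bar V_{1,N}, \bar V_{2,N})$ solving \eqref{E:N-mock} with value $(x, v_1, -v_2)$ at time $t$; by the symmetry, $(X_N, V_{1,N}, V_{2,N})(s) = (\bar X_N(s), \bar V_{1,N}(s), -\bar V_{2,N}(s))$ on the common interval $I_0$. Applying Lemma \ref{t1t2mapping-model} to $(\bar X_N, \bar V_{1,N}, \bar V_{2,N})$ yields a unique $t^* \in I_0$ with $(\bar X_N, -\bar V_{1,N}, \bar V_{2,N})(t^*) = (x, v_1, -v_2)$, together with: $t^* - t$ depending only on $(x, v_1, -v_2)$, the bound $|t^* - t| \lesssim 1/N$, and $t \mapsto t^*$ being $C^\infty$, invertible, with Jacobian $1$. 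Flipping back the sign of the $V_2$-component turns these into $(X_N, -V_{1,N}, V_{2,N})(t^*) = (x, v_1, v_2)$, i.e.\ exactly \eqref{E:t1t2mapping-model-eq0-electron}; moreover $t^* - t$ depends only on $(x, v_1, v_2)$ (relabeling $v_2 \mapsto -v_2$ in the parameter changes nothing), the bound and the regularity/invertibility statements persist, and the Jacobian of $t \mapsto t^*$ is still $1$. The boundary $x = 1$ is treated identically, using the corresponding definition of $t^*$ from Lemma \ref{t1t2mapping-model}.

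Alternatively, and equivalently, one may rerun the polar-coordinate proof of Lemma \ref{t1t2mapping-model} directly: writing $V_1 = R\cos\Phi$, $V_2 = R\sin\Phi$ one still finds $\frac{d}{ds}R^2 = 0$, so $R$ is constant on $I_0$, while the electron equations give, near $x = 0$, $\dot\Phi = \frac{N}{\sqrt{1+R^2}}\Psi'(NX)$, hence $-\dot\Phi \gtrsim 1 > 0$, so $\Phi$ now decreases monotonically; one defines $t^*$ as the first or last time (with the same first/last convention as in Lemma \ref{t1t2mapping-model}) at which $\Phi(s) = \pi - \Phi(t)$, with turning point at $\Phi = \pi/2$ where $X$ is minimized, integrates $-N\Psi'(NX)\dot X = R(\cos\Phi)\dot\Phi$ to get $X(t^*) = X(t)$, and integrates $ds = \frac{-\sqrt{1+R^2}}{N}\frac{d\Phi}{\Psi'(NX)}$ to obtain the formula for $t^* - t$ and the bound $|t^* - t| \le \frac{2\pi}{N c_0}\sqrt{1 + C_v^2} \lesssim 1/N$, together with $\partial t^*/\partial t = 1$. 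I do not expect a genuine obstacle here: the proof is a line-by-line transcription of that of Lemma \ref{t1t2mapping-model} with the single change of the sign of $\dot\Phi$ (equivalently, the orientation of rotation in the $(V_1, V_2)$-plane), and the only point demanding attention is bookkeeping that sign — in particular replacing the target angle $3\pi - \Phi(t)$ by $\pi - \Phi(t)$ and keeping the two boundary regimes ($x$ near $0$ versus $x$ near $1$) straight.
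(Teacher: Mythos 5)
Your proposal is correct, and its main route is genuinely different from the paper's. The paper proves the electron lemma by rerunning the polar-coordinate argument of Lemma \ref{t1t2mapping-model} with the sign of $\dot\Phi$ reversed; you instead observe that $(X,V_1,V_2)$ solves the ion model \eqref{E:N-mock} if and only if $(X,V_1,-V_2)$ solves the electron model \eqref{E:N-mock-electron}, and then pull back the conclusions of Lemma \ref{t1t2mapping-model} applied to the ion trajectory through $(x,v_1,-v_2)$. This symmetry check is elementary and correct (the norm $|V|$, hence $\hat V_1$, is unchanged and $\hat{(-V_2)}=-\hat V_2$), the interval $I_0$ is literally the same, and since the reflection map fixes $x$ and flips only $V_1$, the sign flip in $V_2$ converts $(\bar X,-\bar V_1,\bar V_2)(t^*)=(x,v_1,-v_2)$ into exactly \eqref{E:t1t2mapping-model-eq0-electron}, while the $t$-independence of $t^*-t$, the bound $|t^*-t|\lesssim 1/N$, smoothness, invertibility and unit Jacobian of $t\mapsto t^*$ transfer verbatim. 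What this buys is a shorter, less error-prone argument: the paper's transcription in fact carries copy-paste artifacts (it asserts $\Phi'\lesssim -1<0$ yet keeps the increasing-angle language and the target angle $3\pi-\Phi(t)$ from the ion case), whereas your reduction avoids redoing the computation altogether, and your alternative direct argument correctly replaces the target by $\pi-\Phi(t)$ with turning point $\Phi=\pi/2$. Two cosmetic slips in your alternative sketch: for electrons the cross-multiplied relation is $N\Psi'(NX)\dot X=R(\cos\Phi)\dot\Phi$ and the separated form is $ds=\frac{\sqrt{1+R^2}}{N}\frac{d\Phi}{\Psi'(NX)}$ (no minus signs, since $\dot\Phi=+\frac{N}{\sqrt{1+R^2}}\Psi'(NX)$); neither affects the identity $X(t^*)=X(t)$ nor the bound $|t^*-t|\le \frac{2\pi}{Nc_0}\sqrt{1+C_v^2}$. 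As in the paper itself, the quantitative bound uses $|V|\le C_v$, so it is understood that $(t,x,v_1,v_2)$ ranges over the relevant support (here $\supp f^{N,-}$), not over all of $\mathbb{R}^2$ in $v$.
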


\begin{flushleft}
\textit{Remark} We call $t^*$ the \emph{reflection time} (for the electrons) corresponding to $(t, x, v_1, v_2)$. Notice that Lemma \ref{t1t2mapping-model-electron} only concerns about the behavior of the trajectory of the electron on $I_0$.  
\end{flushleft}

\begin{proof}
The proof is very similar to the one for Lemma \ref{t1t2mapping-model}. We sketch it here. 

Dropping the $N$ subscript for $(X_N, V_{1,N}, V_{2,N})$ in this lemma and passing to polar coordinates for $V$:
$$V_1 = R \cos \Phi \,, \qquad V_2 = R \sin \Phi$$
As in Lemma \ref{t1t2mapping-model}, 
we verify that $R$ is constant on $I_0$ and we find that \eqref{E:N-mock-electron} becomes
\begin{equation}
\label{E:mock-2-electron}
\left\{ 
\begin{aligned}
&\dot{X} = \frac{R \cos \Phi}{\sqrt{1+R^2}}\\
&\dot{\Phi} = + \frac{1}{\sqrt{1+R^2}} N  \Psi' (NX)
\end{aligned}
\right.
\end{equation}

Let us consider the boundary $x=0$ and recall $ \Psi'(Y) \lesssim -\epsilon  < 0$ for $Y = NX \in (0, 1- \epsilon]$. We have $\dot{\Phi}  < 0$.  


Since the trajectory is in $C^1$ and $\dot{\Phi}  < 0$ when $s \in I_0$, $\Phi(s)$ evolves in the direction of decreasing angle. Let us discuss first the case when $V_1(t)< 0$ (that is, $\Phi (t) \in (\pi/2, 3\pi/2)$). Let $s_{\text{turn}} :=   \min \{ s>t : \Phi (s) = \frac{3\pi}{2}\}$, whose existence is guaranteed by  $\dot{\Phi}  < 0$. Since $\Phi$ keeps decreasing, $s_{\text{turn}}$ is the unique time in $I_0$ such that $\Phi (s_{\text{turn}}) = 3\pi/2$, $V_1 (s_{\text{turn}}) = 0$, and hence $X$ reaches its minimum at $s = s_{\text{turn}}$. Continuing after $s_{\text{turn}}$, again due to $\dot{\Phi}   < 0$, there exists a unique $t^*$ defined by
$$t^* := \min \{ s>t : \Phi (s) =  3\pi- \Phi (t)\} $$ 
in $I_0$ such that $ \Phi (t^*) = 3\pi-\Phi(t)$. This gives a unique $t^*$ in the interval $I_0$ such that $ (V_{1}, V_{2}) (t;t,x,v_1, v_2) =  (-V_{1}, V_{2}) (t^*;t,x,v_1, v_2) $. Here we used the fact that $\sqrt{V_{1} (s)^2 + V_{2}(s)^2}  =  R(s) \equiv const.$ on $I_0$. After the time $t^*$, $ V_1  (s;t,x,v_1, v_2)  > 0$. Notice that the region $\supp_x B_{ext, N}$ is of size $O(\frac{1}{N})$, which tells us that when $N$ is large enough, the trajectory $(X  , V_1 , V_2  ) (s;t,x,v_1, v_2)$ can exit $\supp_x B_{ext, N}$ within a time period of order $O(\frac{1}{N})$. Therefore, $V_1 (s;t,x,v_1, v_2)$ can only change its sign once in $I_0$.




\begin{center}
\includegraphics[width=1\textwidth]{Trajectory_1.png}
\end{center}






Arguing similarly as in Lemma \ref{t1t2mapping-model}, we can obtain 
$$ (X, V_{1}, V_{2}) (t;t,x,v_1, v_2) = (X, -V_{1}, V_{2}) (t^*;t,x,v_1, v_2) $$
and 
\begin{equation*}
|t^* - t| \lesssim \frac{1}{N} \sqrt{1+(k_0 + C_1 T)^2} \cdot 2 \pi  \frac{1}{\epsilon} = \frac{2 \pi}{N \epsilon} \sqrt{1+(k_0 + C_1 T)^2} \lesssim \frac{1}{N \epsilon} .
\end{equation*}
Here we make use of the fact that $ \Psi'(Y) < 0$, $  |\Psi'(Y) | \gtrsim \epsilon$ for $Y = NX \in (0, 1 -\epsilon]$ and $|V(s)| \leq k_0 + C_1 T$ (Lemma \ref{NS3-lm3.1}).



For the case $V_1 (t) > 0$ we define
$$t^* :=  \max \{ s<t : \Phi (s) =  3\pi- \Phi (t)\} $$
The case $V_1 (t) =0$ is trivial: We simply take $t^* = t$. For the boundary $x =1$ (that is, $x \in [1- (\frac{1}{N} - \frac{1}{N} \epsilon) , 1)$), the mapping $t \mapsto t^*$ is defined similarly, making use of
$$ \psi_{ext, N} (x) = \Psi (N(1-x))  $$
for $x$ close to $1$. 

To summarize, we define $t^*$ as
$$t^*  := \min \{ s>t : \Phi (s) =  3\pi- \Phi (t)\} $$ 
when $V_{1}(t) < 0$, $x \in (0, \frac{1}{N} \epsilon]$ or $V_{1}(t) > 0$, $x \in [1- (\frac{1}{N} - \frac{1}{N} \epsilon), 1)$, and 
$$t^*  :=  \max \{ s<t : \Phi (s) =  3\pi- \Phi (t)\} $$
when $V_{1}(t) > 0$, $x \in (0, \frac{1}{N} \epsilon]$ or $V_{1}(t) < 0$, $x \in [1- (\frac{1}{N} - \frac{1}{N} \epsilon), 1)$. By the same argument as in Lemma \ref{t1t2mapping-model}, we verify that \eqref{E:t1t2mapping-model-eq0-electron} is satisfied, and moreover, $t^* -t$ only depends on $(x, v_1, v_2)$ and $|t^* -t| \lesssim  \frac{1}{N \epsilon}$. 
For any fixed $(x, v_1, v_2)$, $t \mapsto t^*$ as a function of $t$ is $C^\infty$ and invertible. The Jacobian of the inverse mapping $ t^*  \mapsto t$ is $|J_N |= | \frac{\partial t}{\partial t^*} | = |J_N (x, v_1, v_2)| =1$.

\end{proof}

From Lemma \ref{t1t2mapping-model-electron}, we track the (electron) trajectory backwards in time as in Corollary \ref{cor-t1t2mapping-model} and obtain

\begin{corollary} \label{cor-t1t2mapping-model-electron}
Let $t$, $x$, $v_1$, $v_2$ be as in Lemma \ref{t1t2mapping-model-electron}, then
\begin{equation} \label{E:cor-t1t2mapping-model-eq0-electron}
(X_N, V_{1,N}, V_{2,N}) (0;t,x,v_1, v_2) = (X_N, V_{1,N}, V_{2,N}) (0;t^*,x,-v_1, v_2) . 
\end{equation}
\end{corollary}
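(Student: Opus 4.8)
The plan is to replay the proof of Corollary \ref{cor-t1t2mapping-model} verbatim, now with the model electron ODE \eqref{E:N-mock-electron} in place of \eqref{E:N-mock}. The only inputs needed are the reflection identity \eqref{E:t1t2mapping-model-eq0-electron} supplied by Lemma \ref{t1t2mapping-model-electron} and the uniqueness of solutions of \eqref{E:N-mock-electron}.

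First I would rewrite \eqref{E:t1t2mapping-model-eq0-electron} in the equivalent form
\[
(X_N, V_{1,N}, V_{2,N})(t^*; t, x, v_1, v_2) = (x, -v_1, v_2),
\]
which states precisely that the model electron trajectory passing through $(x, v_1, v_2)$ at time $t$ also passes through $(x, -v_1, v_2)$ at time $t^*$. The right-hand side of \eqref{E:N-mock-electron} is Lipschitz in $(X, V_1, V_2)$ on the relevant region: inside $\supp B_{ext, N}$ one uses that $\Psi$ and its first two derivatives are bounded on the portion of the support visited by the trajectory (Corollary \ref{NS3-lm3.1-cor-TwoSpecies}), while outside $\supp B_{ext, N}$ the system reduces to the free transport $\dot X = \hat V_1$, $\dot V_1 = \dot V_2 = 0$ and is trivially globally solvable. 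Hence the trajectory through a prescribed point at a prescribed time is unique and globally defined, so the two curves $s \mapsto (X_N, V_{1,N}, V_{2,N})(s; t, x, v_1, v_2)$ and $s \mapsto (X_N, V_{1,N}, V_{2,N})(s; t^*, x, -v_1, v_2)$, which agree at $s = t^*$ (equivalently, at $s = t$, by the displayed identity), must be the same curve.

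Evaluating this identity of curves at $s = 0$ yields
\[
(X_N, V_{1,N}, V_{2,N})(0; t, x, v_1, v_2) = (X_N, V_{1,N}, V_{2,N})(0; t^*, x, -v_1, v_2),
\]
which is \eqref{E:cor-t1t2mapping-model-eq0-electron}. There is no real obstacle here: the whole content is uniqueness of characteristics, and the single point deserving a line of care — that $s = 0$ lies in the common interval of existence of both trajectories — is immediate since the model electron ODE is globally well-posed. This is exactly parallel to Corollary \ref{cor-t1t2mapping-model}; the only change is the sign in front of $B_{ext, N}$ in the velocity equations, which plays no role in the argument.
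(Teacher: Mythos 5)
Your proposal is correct and matches the paper's argument: the paper proves this exactly as in Corollary \ref{cor-t1t2mapping-model}, by noting that the reflection identity from Lemma \ref{t1t2mapping-model-electron} places $(t,x,v_1,v_2)$ and $(t^*,x,-v_1,v_2)$ on the same characteristic of \eqref{E:N-mock-electron}, invoking uniqueness of trajectories, and evaluating at time $0$. Your added remarks on Lipschitz bounds and global solvability only make explicit the uniqueness step the paper takes for granted.
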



In the end, repeating the argument of verifying the Vlasov equation for $f^+$ and $f^-$ separately and the argument of verifying the Maxwell equations for $(E, B)$ in Section \ref{GeneralCase} and Section \ref{ProofofMainTheorem}, we obtain:

\begin{theorem} \label{mtheorem-TwoSpecies}
For each $N$, we consider a $C^1$ solution $(f^{N,\pm}, E^N_1, E^N_2, B^N)$ on $[0, T]$ to \eqref{VlasovBext-TwoSpecies}, \eqref{MaxwellBext}, with the initial-boundary conditions \eqref{boundarycondBext-TwoSpecies}. There exists a subsequence of $(f^{N,\pm} , E^N_1 , E^N_2 , B^N)$, such that $f^{N,\pm} \rightharpoonup f^\pm \ weakly^*$ in $L^\infty ([0, T] \times \Omega \times \mathbb{R}^2)$, $(E^N_1, E^N_2, B^N) \rightarrow (E_1, E_2, B)$ strongly in $ C^0  ([0, T] \times \Omega ) $. The limit $(f^\pm , E_1, E_2, B)$ is a weak solution for \eqref{VlasovBextspecular-TwoSpecies}, \eqref{MaxwellBextspecular}, with exactly the same initial and boundary conditions \eqref{boundarycondBext-TwoSpecies} and the specular boundary condition \eqref{specularBC-TwoSpecies} on $[0, T]$.
\end{theorem}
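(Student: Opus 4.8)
The plan is to run the argument of Theorem~\ref{mtheorem} twice --- once for the ion density $f^{N,+}$ and once for the electron density $f^{N,-}$ --- using the two-species versions of the auxiliary estimates already assembled in this section, and then to pass to the limit in the Maxwell system exactly as in Section~\ref{ProofofMainTheorem}. First, by Lemma~\ref{lm3-TwoSpecies} the internal fields $(E^N_1,E^N_2,B^N)$ are bounded in $C^1([0,T]\times\Omega)$ uniformly in $N$ (and, by the same argument, in $C^1([0,2T]\times\Omega)$), while $\|f^{N,\pm}\|_{L^\infty}=\|f^\pm_0\|_{L^\infty}$ because each $f^{N,\pm}$ solves a transport equation. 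Hence, as in Lemma~\ref{lm1} and Lemma~\ref{lm4}, after extracting a subsequence we obtain $f^{N,\pm}\rightharpoonup f^\pm$ weakly$^*$ in $L^\infty([0,T]\times\Omega\times\mathbb{R}^2)$ and $(E^N_1,E^N_2,B^N)\to(E_1,E_2,B)$ strongly in $C^0([0,T]\times\Omega)$. By Lemma~\ref{NS3-lm3.1-TwoSpecies} and Corollary~\ref{NS3-lm3.1-cor-TwoSpecies} the supports $\supp_x f^{N,\pm}$ stay in $(N^{-1}y_0,\,1-N^{-1}y_0)$, so no boundary condition on $f^{N,\pm}$ is required and each $(f^{N,\pm},E^N_1,E^N_2,B^N)$ satisfies the two-species analogue of Definition~\ref{D:weak-Bext}.

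Next I pass $N\to\infty$ in the weak Vlasov equations for $f^{N,+}$ and $f^{N,-}$ separately. All internal-field terms converge by the strong $C^0$ convergence of the fields and the weak$^*$ convergence of $f^{N,\pm}$ (the $v$-integrals being restricted to $D_{C_v}$). For each sign there remains the extra term
\begin{equation*}
\pm\int_0^T\int_\Omega\int_{\mathbb{R}^2} f^{N,\pm}\, B_{ext,N}\,(\hat v_2\partial_{v_1}\alpha-\hat v_1\partial_{v_2}\alpha)\,dv_1\,dv_2\,dx\,dt,
\end{equation*}
in which $\tilde\alpha:=\hat v_2\partial_{v_1}\alpha-\hat v_1\partial_{v_2}\alpha$ is odd in $v_1$ at $x=0,1$ whenever $\alpha$ satisfies the even symmetry of the two-species version of Definition~\ref{D:weak}. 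For $f^{N,+}$ this term tends to zero by Lemma~\ref{(extra)-converge} applied verbatim, with the ion trajectory estimates of Lemma~\ref{t1t2mapping-real-error}. For $f^{N,-}$ one needs the electron counterparts: the reflection time $t^*$ is provided by Lemma~\ref{t1t2mapping-model-electron} and Corollary~\ref{cor-t1t2mapping-model-electron}, and the perturbative statement replacing Lemma~\ref{t1t2mapping-real-error} --- namely that, with the internal fields switched on, the electron trajectory still satisfies $|{-}V_{1,N}(t^*)-V_{1,N}(t)|\lesssim 1/N$, $|X_N(t^*)-X_N(t)|\lesssim 1/N^2$, the backward-in-time identity, and $\big||\mathcal{J}_N|-1\big|\lesssim 1/N$ --- follows from the same rescaling and Gronwall argument, the only change being the sign of the $\partial_y\Psi$ contributions in the rescaled ODEs, which affects none of the estimates. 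With these in hand the symmetrization computation of the proof of Lemma~\ref{(extra)-converge} carries over unchanged, using the oddness of $\tilde\alpha$ in $v_1$ at the boundary, and yields that the electron extra term is $O(1/N)$. Therefore both $f^+$ and $f^-$ satisfy the weak Vlasov equations of the specular problem \eqref{VlasovBextspecular-TwoSpecies}.

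Finally, the Maxwell equations \eqref{MaxwellBextspecular-TwoSpecies} are recovered by passing to the limit in the two-species weak Maxwell identities exactly as in the proof of Theorem~\ref{mtheorem}; the only difference is that $\rho$ and $\textbf{j}$ now involve $f^+-f^-$, and the relevant $v$-integrals are again confined to $D_{C_v}$, so the strong $C^0$ convergence of the fields together with the weak$^*$ convergence of $f^{N,\pm}$ is enough. The chief obstacle is organizational rather than analytic: one must check that the constants in all the $\lesssim$ estimates are uniform in $N$, which is exactly what Lemma~\ref{lm3-TwoSpecies} guarantees by furnishing $N$-independent $C^1$ bounds on the coupled fields; once those bounds are fixed, the trajectory analysis decouples species by species (the two densities interacting only through the already-controlled fields), and every step of Sections~\ref{ModelCase}--\ref{ProofofMainTheorem} applies to $f^{N,+}$ and $f^{N,-}$ in turn.
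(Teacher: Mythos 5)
Your proposal is correct and follows essentially the same route as the paper: the paper likewise establishes the two-species analogues of the uniform field bounds, confinement estimates, and the electron reflection-time lemma (Lemma~\ref{lm3-TwoSpecies}, Lemma~\ref{NS3-lm3.1-TwoSpecies}, Corollary~\ref{NS3-lm3.1-cor-TwoSpecies}, Lemma~\ref{t1t2mapping-model-electron}, Corollary~\ref{cor-t1t2mapping-model-electron}), and then repeats the arguments of Sections~\ref{GeneralCase}--\ref{ProofofMainTheorem} for $f^{N,+}$ and $f^{N,-}$ separately, with the fields handled exactly as in Theorem~\ref{mtheorem}. Your observation that the electron version of Lemma~\ref{t1t2mapping-real-error} only requires a sign flip in the $\partial_y\Psi$ terms, leaving every estimate intact, is precisely the point the paper relies on.
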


\section{Appendix}

For the readers' convenience, we introduce the following lemma on ODE perturbation theory:

\begin{lemma}[Gronwall]
\label{L:gronwall}
Suppose $X:\mathbb{R} \to \mathbb{R}^d$ solve
\begin{align*}
\dot X(s) = f(X(s)) +  g(X(s),s)  
\end{align*}
with the initial conditions satisfy $X(0)  = x_0$, $f: \mathbb{R}^d \to \mathbb{R}^d$, $ g: \mathbb{R}^{d+1} \to \mathbb{R}^d$, $|g(X,s)| \leq K$. Assume that $X$, $f$ are differentiable and $g$ is continuous. Suppose that the $d\times d$ matrix $f'(X)$ is uniformly bounded: for all $X\in \mathbb{R}^d$, 
$$  \| f'(X)\|_{L^\infty} \leq \kappa .  $$
Then 
$$|X(s) | \leq  |x_0| + \frac{1}{\kappa} (|f(x_0)| + K) (e^{\kappa s} - 1)  . $$
\end{lemma}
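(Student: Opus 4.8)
The plan is to bound $|X(s)|$ by first controlling the growth of $|X(s)|$ through an integral inequality, then applying the classical (differential) Gronwall lemma. First I would write the solution in integrated form:
\begin{equation*}
X(s) = x_0 + \int_0^s \big( f(X(\tau)) + g(X(\tau),\tau) \big)\, d\tau .
\end{equation*}
The key device is to compare $f(X(\tau))$ with $f(x_0)$ using the uniform bound on the Jacobian. By the Mean Value Theorem applied along the segment joining $x_0$ and $X(\tau)$ (or by writing $f(X(\tau)) - f(x_0) = \int_0^1 f'(x_0 + r(X(\tau)-x_0))\,(X(\tau)-x_0)\, dr$), and using $\|f'\|_{l^\infty} \leq \kappa$, we get
\begin{equation*}
|f(X(\tau))| \leq |f(x_0)| + \kappa\, |X(\tau) - x_0| \leq |f(x_0)| + \kappa\, |X(\tau)| + \kappa\, |x_0| .
\end{equation*}

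Next I would substitute this together with $|g(X,\tau)| \leq K$ into the integrated equation to obtain, for $s \geq 0$,
\begin{equation*}
|X(s)| \leq |x_0| + \int_0^s \big( |f(x_0)| + \kappa |x_0| + K \big)\, d\tau + \kappa \int_0^s |X(\tau)|\, d\tau .
\end{equation*}
Setting $A := |f(x_0)| + \kappa|x_0| + K$, this reads $|X(s)| \leq |x_0| + A s + \kappa \int_0^s |X(\tau)|\, d\tau$. Now apply the integral form of Gronwall's inequality to the function $u(s) := |X(s)|$ with the affine majorant $|x_0| + As$: one obtains
\begin{equation*}
|X(s)| \leq |x_0| + \int_0^s A\, e^{\kappa(s-\tau)}\, d\tau = |x_0| + \frac{A}{\kappa}\big( e^{\kappa s} - 1 \big),
\end{equation*}
which is exactly $|X(s)| \leq |x_0| + \frac{1}{\kappa}\big( |f(x_0)| + K \big)\big( e^{\kappa s} - 1 \big)$ once we check that the $\kappa|x_0|$ contribution to $A$ can be absorbed — and indeed one should be slightly careful here: the sharp way is to work with $|X(s) - x_0|$ rather than $|X(s)|$. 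Estimating $\frac{d}{ds}|X(s)-x_0| \leq |f(X(s))| + K \leq |f(x_0)| + K + \kappa|X(s)-x_0|$ and applying the scalar Gronwall/comparison lemma to $w(s) := |X(s)-x_0|$ with $w(0)=0$ gives $w(s) \leq \frac{|f(x_0)|+K}{\kappa}(e^{\kappa s}-1)$, and then the triangle inequality $|X(s)| \leq |x_0| + w(s)$ yields the stated bound cleanly.

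The main (and essentially only) obstacle is the bookkeeping in the comparison step: making sure the linearization of $f$ around $x_0$ is done so that the resulting affine-plus-linear integral inequality has precisely the constants appearing in the statement, which is why I would phrase everything in terms of $w(s) = |X(s) - x_0|$ from the outset. The rest — the Mean Value estimate on $f$, the bound $|g| \leq K$, and the invocation of the classical Gronwall inequality — is routine.
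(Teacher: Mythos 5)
Your proposal is correct, and its final form --- estimating $\frac{d}{ds}|X(s)-x_0| \leq |f(x_0)| + K + \kappa|X(s)-x_0|$ via the Lipschitz bound from $\|f'\|_{l^\infty}\leq\kappa$, then applying the scalar Gronwall/integrating-factor argument to $w(s)=|X(s)-x_0|$ with $w(0)=0$ and finishing with the triangle inequality --- is exactly the paper's proof. The initial detour through the integral inequality for $|X(s)|$ itself was unnecessary, but you correctly identified and repaired it, so there is no gap.
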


\begin{proof}
We have
\begin{align*}
\frac{d |X(s)- x_0 | }{ds}  \leq |  \dot X(s) |  \leq  \kappa |X(s)-x_0| + |f(x_0)| + | g(X(s) ,s) |   \leq  \kappa |X(s)-x_0| + |f(x_0)| +  K .  
\end{align*} 
The standard integrating factor method completes the proof. 
\end{proof}

\section{Acknowledgement}

The author would like to express her gratitude to her advisor, Professor Walter Strauss, for bringing this research topic to her attention, and also for the invaluable guidance, encouragement and patience, without which this work would be impossible. Also, she thanks Professor Benoit Pausader, Professor Yan Guo and Professor Justin Holmer for helpful discussions. The author is supported by AMS-Simons Foundation for an AMS-Simons Travel Grant. This work does not have any conflicts of interest.


\begin{thebibliography}{00}
\bibitem{CCM1} S. Caprino, G. Cavallaro and C. Marchioro, \emph{Time evolution of a Vlasov-Poisson plasma with magnetic confinement}, Kinetic and Related Models (2012), Vol. 5, Issue 4, 729--742
\bibitem{CCM2} S. Caprino, G. Cavallaro and C. Marchioro, \emph{On a Vlasov-Poisson plasma confined in a torus by a magnetic mirror}, Journal of Math. Anal. and Appl. (2015), Vol. 427, Issue 1, 31--46
\bibitem{CCM3} S. Caprino, G. Cavallaro and C. Marchioro, \emph{On the Magnetic Shield for a Vlasov–Poisson Plasma}, Journal of Statistical Physics (2017), Volume 169, Issue 6, 1066–-1097
\bibitem{Friedberg1} J. P. Friedberg, \emph{Ideal Magnetohydrodynamics}, Plenum Press, New York, 1987
\bibitem{GlasseyS1} R. Glassey and W. Strauss, \emph{Singularity formation in a collisionless plasma could occur only at high velocities}, Arch. Rational Mech. Anal. 92 (1986), no. 1, 59–-90
\bibitem{GlasseySchaeffer1} R. Glassey and J. Schaeffer, \emph{On the ”one and one-half dimensional” relativistic Vlasov-Maxwell system}, Math. Methods Appl. Sci. 13 (1990), no. 2, 169–-179 
\bibitem{G1} Y. Guo, \emph{Global weak solutions of the Vlasov-Maxwell system with boundary conditions}, Commun. Math. Phys. (1993) 154 (2): 245--263
\bibitem{G2} Y. Guo, \emph{Singular solutions of the Vlasov-Maxwell system on a half line}, Arch. Rational Mech. Anal. 131 (1995), 131 (3): 241–-304
\bibitem{HSchaeffer1} H. J. Hwang and J. Schaeffer, \emph{Uniqueness for weak solutions of a one-dimensional boundary value problem for the Vlasov–Poisson system}, Journal of Differential Equations (2008) 244 (10): 2665--2691
\bibitem{H1} H. J. Hwang, \emph{Regularity for the Vlasov--Poisson System in a Convex Domain}, SIAM J. Math. Anal. (2004) 36 (1): 121–-171 
\bibitem{HV1} H. J. Hwang and J. J. L. Velázquez, \emph{On global existence for the Vlasov–Poisson system in a half space}, Journal of Differential Equations (2009) 247 (6): 1915--1948
\bibitem{HV2} H. J. Hwang and J. J. L. Velázquez, \emph{Global Existence for the Vlasov–Poisson System in Bounded Domains}, Archive for Rational Mechanics and Analysis (2010) 195 (3): 763–-796 
\bibitem{NS3} T. T. Nguyen, T. V. Nguyen and W. A. Strauss, \emph{Global magnetic confinement for the 1.5D Vlasov-Maxwell System}, Kinetic and Related Models (2015) 8 (1): 153-168   
\bibitem{NS4} T. T. Nguyen, T. V. Nguyen and W. A. Strauss, \emph{Erratum to: Global magnetic confinement for the 1.5D Vlasov-Maxwell System}, Kinetic and Related Models (2015) 8 (3): 615-616   
\bibitem{Nicholson1} D. R. Nicholson, \emph{Introduction to plasma theory}, Wiley, New York, 1983
\end{thebibliography}
\end{document}